\def\a{\alpha}
\newcommand\be{\beta}
\newcommand\de{\delta}
\newcommand\vth{\vartheta}
\newcommand\et{\eta}
\newcommand\ep{\epsilon}
\newcommand\ka{\kappa}
\newcommand\la{\lambda}
\newcommand\ph{\varphi}
\newcommand\ps{\psi}
\newcommand\rh{\rho}
\newcommand\si{\sigma}
\newcommand\om{\omega}
\newcommand\A{{\cal A}}
\newcommand\B{{\cal B}}
\newcommand\Reals{\mathbb R}
\newcommand\xix{X}
\newcommand\yix{Y}
\newcommand\dom{h}
\newcommand\abs[1]{\left|#1\right|}
\def\pder#1#2{\frac{\partial#1}{\partial#2}}
\newcommand\wb{{\bar w}}
\newcommand{\arctanh}{\operatorname{arctanh}}
\newcommand{\sgn}{\operatorname{sgn}}
\newtheorem{theorem}{Theorem}
\newtheorem{lemma}[theorem]{Lemma}
\newtheorem{remark}[theorem]{Remark}
\newtheorem{proposition}[theorem]{Proposition}
\newtheorem{corollary}[theorem]{Corollary}
\newtheorem*{proposition*}{Proposition}
\numberwithin{equation}{section}
\numberwithin{theorem}{section}
\begin{document}
%%%%%%%%%%%%%%%%%%%%%%%%%%%%%%%%%%%%%%%%
\parindent=15pt
\normalbaselineskip20pt
\baselineskip20pt
\global\hoffset=-15truemm
\global\voffset=-10truemm
%%%%%%%%%%%%%%%%%%%%%%%%%%%%%%%%%%%%%%%%%
\allowdisplaybreaks[1]
\title{Two-locus clines on the real line \\ with a step environment}
\date{}
\maketitle

\begin{center}
{\large
Reinhard B\"urger

\vspace{2cm}
Department of Mathematics,\\
University of Vienna,\\
Austria
}
\end{center}
\vspace{5cm}

{
\normalbaselineskip15pt
\baselineskip15pt{
{\obeylines 
{\bf Address for correspondence:}
\bigskip 
Reinhard B\"urger
Fakult\"at f\"ur Mathematik 
Universit\"at Wien 
Oskar-Morgenstern-Platz 1
A-1090 Wien 
Austria
E-mail: reinhard.buerger@univie.ac.at
Phone: +43 1 4277 50784
%Fax: +43 1 4277 850784
}
}
}

\vspace{1.5cm}

\newpage
\section*{Abstract}
The shape of allele-frequency clines maintained by migration-selection balance depends not only on the properties of migration and selection, but also on the dominance relations among alleles and on linkage to other loci under selection. We investigate a two-locus model in which two diallelic, recombining loci are subject to selection caused by an abrupt environmental change. The habitat is one-dimensional and unbounded, selection at each locus is modeled by step functions such that in one region one allele at each locus is advantageous and in the other deleterious. We admit an environmentally independent, intermediate degree of dominance at both loci, including complete dominance. First, we derive an explicit expression for the single-locus cline with dominance, thus generalizing classical results by Haldane (1948). We show that the slope of the cline in the center (at the step) or, equivalently, the width of the cline, is independent of the degree of dominance. Second, under the assumption of strong recombination relative to selection and migration, the first-order approximations of the allele-frequency clines at each of the loci and of the linkage disequilibrium are derived. This may be interpreted as the quasi-linkage-equilibrium approximation of the two-locus cline. Explicit asymptotic expressions for the clines are deduced as $x\to\pm\infty$. For equivalent loci, explicit expressions for the whole clines are derived. The influence of dominance and of linkage on the slope of the cline in the center and on a global measure of steepness are investigated. This global measure reflects the influence of dominance. Finally, the accuracy of the approximations and the dependence of the shape of the two-locus cline on the full range of recombination rates is explored by numerical integration of the underlying system of partial differential equations. 

\vskip2.5cm
{\bf Key words:} Selection; Migration; Recombination; Dominance; Linkage disequilibrium; Dispersal; Geographical structure; Population structure

\newpage
\setcounter{section}{0}
\section{Introduction}
A cline describes a gradual change in genotypic or phenotypic frequency as a function of spatial location. Such clines are frequently observed in natural populations and are an important and active research area in evolutionary biology and ecology (e.g., Endler 1977, Hoffman et al.\ 2002, Lohman et al.\ 2017). Clines typically occur in species distributed along an environmental gradient, for instance in temperature, where alternative phenotypes or genotypes are better adapted to the different extremes of the environment. Dispersal leads to mixing, reduces local adaptation, and entails a continuous, often sigmoidal, change in type frequencies across space. The study of clines can be used to obtain insight into the relative strengths of the evolutionary and ecological forces acting on this species.

Haldane (1948) devised a model in terms of a reaction-diffusion equation which approximates migration by diffusion and assumes that there is a step environment on the real line such that one allele is advantageous if $x>0$ and the other if $x<0$. He derived explicit expressions for the cline, the spatially non-constant stationary solution, in terms of hyperbolic functions for the two cases of no dominance and of a completely dominant allele. The slope of the cline in the center, i.e., at the environmental step, can be expressed in terms of the selection intensity and the migration variance. He used this relation to infer the strength of selection on a population of deer mouse. 

The mathematical theory of clines became a very active research area in the 1970s. Various patterns of spatial variation in fitnesses were investigated (e.g., environmental pockets, periodic changes), as were variation (e.g., barriers) or asymmetry in migration (e.g., Slatkin 1973; Nagylaki 1975, 1976, 1978). These works focused on the derivation of explicit results about the shape of clines. At about the same time and motivated by this work, Conley (1975), Fleming (1975), Fife and Peletier (1977), and Henry (1981) developed and employed advanced mathematical methods to investigate existence, uniqueness, and stability of clinal solutions under a variety of assumptions about fitnesses, i.e., for quite general classes of functions that describe selection caused by the environment. 

Lou and Nagylaki (2002, 2004, 2006) extended much of the previous work on spatially varying selection in several directions. In most of their analyses, migration is modeled by general elliptic operators on bounded domains in arbitrary dimensions and for wide classes of fitness functions. Such elliptic operators arise if migration is spatially inhomogeneous or anisotropic (Nagylaki 1989). In addition, they studied the maintenance of clines at multiallelic loci, which does not only add realism but also produces new and interesting phenomena.

In the present work, we focus on the role of dominance and of linkage between loci. In particular, we investigate how the shape of a cline at one locus is affected by linkage to a second locus. We choose a step environment on the real line, such that in each of the two regions ($x>0$, $x<0$) one of the alleles at each locus is advantageous, the other deleterious. The strength of selection acting at the two loci may be different, though. The choice of a step environment on the whole real line (as opposed to a bounded interval) has the advantage that explicit results are obtained more readily.
 
We admit arbitrary intermediate dominance, including complete dominance, and assume that its degree is independent of the environment. The influence of dominance on the maintenance of single-locus clines has been studied before. In particular, on bounded domains the number and stability of clines depend on the degree of dominance (Henry 1981, Lou and Nagylaki 2002, Nagylaki and Lou 2008, Lou et al.\ 2010, Nakashima et al.\ 2010). Explicit results about the shape of a cline seem to be rare and confined to the two cases of no dominance and complete dominance (e.g., Haldane 1948, Nagylaki 1975).  

The first study of a two-locus cline model is due to Slatkin (1975), who showed numerically that the linkage disequilibrium generated between the two loci tends to steepen the clines. Barton (1983, 1986, 1999) derived general results about the consequences of linkage and linkage disequilibrium among multiple loci. Although he derived them for and applied them to hybrid zones, they are also of relevance in our context. 

One of the novel features of our work is that we derive an analytically explicit solution for the single-locus cline with dominance (Section \ref{sec:one_locus}). In particular, we show that the slope of the cline in its center or, equivalently, the width of the cline is independent of the degree of dominance. Our main achievement is the derivation of the first-order perturbation of the allele-frequency clines at each of the loci if recombination is strong relative to selection and diffusion (Section \ref{sec:approx_twoloc}). In other words, we derive the quasi-linkage-equilibrium approximation for the two-locus cline. In Sections \ref{sec:properties_twoloc} and \ref{sec:global_steepness}, we study two measures of steepness of clines and their dependence on dominance and linkage. We derive the asymptotic properties of the allele-frequency clines of the two-locus system in Section \ref{sec:asymptotics}. For equivalent loci, we obtain an analytically explicit expression for the strong-recombination approximation of the two-locus cline (Section \ref{sec:explicit}). In Section \ref{sec:norec}, we briefly treat the case of no recombination. Finally, we provide numerical checks of the accuracy of our approximations and illustrate the dependence of two-locus clines on the full range of recombination rates by numerical integration of the system of partial differential equations (Section \ref{sec:numerics}).

\section{The model}\label{sec:model}
We consider a monoecious, diploid population that occupies a linear, unbounded habitat in which it is uniformly distributed and mates locally at random. Fitness of individuals depends on location and is determined by two diallelic loci, $\A$ and $\B$, which recombine at rate $r\ge0$. We model dispersal by diffusion on the real line $\Reals=(-\infty,\infty)$, and assume it is homogeneous, isotropic, and genotype-independent, with migration variance $\si^2$.

The frequencies of the gametes $AB$, $Ab$, $aB$, and $ab$, at position $x\in\Reals$ and time $t$ are $p_1=p_1(x,t)$, $p_2=p_2(x,t)$, $p_3=p_3(x,t)$, and $p_4=p_4(x,t)$, respectively, where $p_i\ge0$ and $\sum_{i=1}^4 p_i=1$. Let $D=p_1p_4-p_2p_3$ denote the usual measure of linkage disequilibrium, and let $\mathbf p=(p_1,p_2,p_3,p_4)^T$. If $w_{ij}(x)$ is the fitness of the diploid genotype $ij$ ($i,j\in\{1,2,3,4\}$) at location $x\in\Reals$, then $w_i = w_i(x,\mathbf p) = \sum_{j=1}^4 w_{ij}(x)p_j$ is the marginal fitness of gamete $i$, and $\wb = \wb(x,\mathbf p) = \sum_{i=1}^4 w_ip_i$ is the population mean fitness. %For a glossary of symbols, see Table~1.

Throughout, we use primes, $\vphantom{f}'$\,, and dots, $\dot{\vphantom{p}}$\,, to indicate  partial derivatives with respect to $x$ and $t$, respectively. We assume that (i) the three evolutionary forces selection, migration, and recombination are of the same order of magnitude and sufficiently weak, (ii) migration is genotype independent and spatially uniform and symmetric, and (iii) Hardy-Weinberg proportions obtain locally. Defining $\et_1=\et_4=-\et_2=-\et_3=1$ and proceeding as in Nagylaki (1975, 1989), we derive the following diffusion approximation for the evolution of gamete frequencies:
\begin{subequations}\label{dynamics_pi}
\begin{equation}
	\dot p_i = \frac{\si^2}{2}p_i'' + p_i(w_i-\wb) - \et_i r D \,,\quad i\in\{1,2,3,4\}\,, \label{dynamics_pi_a}
\end{equation}
subject to the initial conditions
\begin{equation}
	0\le p_i(x,t)\le 1\,,\quad  \sum_{i=1}^4 p_i(x,t)=1 \quad\text{for $t=0$ and every $x\in\Reals$} \label{dynamics_pi_b}
\end{equation}
\end{subequations}
(cf.\ Slatkin 1975). Solutions satisfy the constraints \eqref{dynamics_pi_b} for every $t\ge0$.
 
Throughout, we assume absence of epistasis. Then the genotypic fitnesses can be written as
\begin{equation}\label{fitscheme}
\begin{tabular}{c|ccc}
	&  $BB$ & $Bb$ & $bb$\\
\hline
	$AA$ & $\a(x)+\be(x)$ & $\a(x)+\dom_B\be(x)$ & $\a(x)-\be(x)$\\
	$Aa$ & $\dom_A\a(x)+\be(x)$ & $\dom_A\a(x)+\dom_B\be(x)$ & $\dom_A\a(x)-\be(x)$\\
	$aa$ & $-\a(x)+\be(x)$ & $-\a(x)+\dom_B\be(x)$ & $-\a(x)-\be(x)$\\
\end{tabular}\;,
\end{equation}
where it is easy to show that for a continuous-time model this scaling is general because absence of epistasis is assumed (Appendix \ref{app:fitness}). 
We could have introduced spatially dependent dominance coefficients, $\dom_A(x)$ and $\dom_B(x)$. However, in view of our applications, we refrained from doing so. In order to have unique single-locus clines (Fife and Peletier 1981), we assume throughout
\begin{equation}\label{intermediate_dom}
	-1 \le \dom_A \le 1 \quad\text{and}\quad -1 \le \dom_B \le 1\,.
\end{equation}
Dominance is absent at locus $\A$ ($\B$) if $\dom_A=0$ ($\dom_B=0$). 

For our purposes, it will be convenient to follow the evolution of the allele frequencies $p_A=p_1+p_2$ and $p_B=p_1+p_3$, and the linkage disequilibrium $D$. With the abbreviations
\begin{equation}\label{vth}
	\vth_A = 1+\dom_A-2\dom_Ap_A\,,\quad \vth_B = 1+\dom_B-2\dom_Bp_B\,, 
\end{equation}
straightforward calculations yield 
\begin{subequations}\label{w_i-wb}
\begin{align}
	w_1(x,\mathbf p) - \wb(x,\mathbf p) &= \a(x)(1-p_A)\vth_A + \be(x)(1-p_B)\vth_B\,, \\
	w_2(x,\mathbf p) - \wb(x,\mathbf p) &= \a(x)(1-p_A)\vth_A - \be(x)p_B \vth_B \,, \\
	w_3(x,\mathbf p) - \wb(x,\mathbf p) &= -\a(x)p_A \vth_A  + \be(x)(1-p_B)\vth_B \,, \\
	w_4(x,\mathbf p) - \wb(x,\mathbf p) &= -\a(x)p_A \vth_A - \be(x)p_B \vth_B \,.
\end{align}
\end{subequations}

Now it is easy to show that the system of differential equations \eqref{dynamics_pi_a} with the fitnesses \eqref{w_i-wb} is equivalent to
\begin{subequations}\label{eq:ABD_add}
\begin{align}
    \dot p_A &= p_A'' + \la\a(x) p_A(1-p_A)\vth_A + \la\be(x)\vth_B D\,, \label{eq:ABD_a} \\
    \dot p_B &= p_B'' + \la\be(x) p_B(1-p_B)\vth_B + \la\a(x)\vth_A D\,, \label{eq:ABD_b} \\
    \dot D  &= D'' + 2p_A' p_B'+ \la[\a(x)(1-2p_A)\vth_A + \be(x)(1-2p_B)\vth_B]D - \rh D \,, \label{eq:ABD_D}
\end{align}
\end{subequations}
where time has been rescaled such that 
\begin{equation}\label{scaled_pars}
	\la=2/\si^2  \quad\text{and}\quad  \rh=r\la=2r/\si^2\,.
\end{equation}
The constraints \eqref{dynamics_pi_b} on the $p_i$ are transformed to
\begin{subequations}\label{constraints_twoloc}
\begin{equation}\label{constraint_pA_pB}
	0\le p_A\le1\,, \;  0\le p_B\le1\,,
\end{equation}
and 
\begin{equation}\label{constraint_D}
	-\min\{p_A p_B,(1-p_A)(1-p_B)\} \le D \le \min\{p_A (1-p_B),(1-p_A)p_B\}\,,
\end{equation}
\end{subequations}
where these inequalities hold for every $(x,t)\in\Reals\times[0,\infty)$. We will impose the boundary conditions
\begin{equation}\label{boundary_cond1}
	p_A'(\pm\infty,t) =0\,,\; p_B'(\pm\infty,t) =0\,, \text{ and } D'(\pm\infty,t) =0\,, \text{ for every } t\ge0.
\end{equation}

Following Barton and Shpak (2000), we offer the following interpretation of the terms in \eqref{eq:ABD_D}. The first term represents diffusion of linkage disequilibrium; the second term is due to migration, which mixes populations with different allele frequencies, and therefore generates associations between the loci by the Wahlund effect; the third term arises from the indirect effects on $D$ of direct selection on each locus; the fourth term describes the decay of $D$ caused by recombination.

\begin{remark}\rm
Because we assume absence of epistasis and Hardy-Weinberg proportions, only marginal selection coefficients matter. Therefore, the above model with constant selection on genotypes and dominance is equivalent to a model with linear frequency-dependent selection on genotypes without dominance. For a model with dominance and linear frequency dependence, see Mallet and Barton (1989).
\end{remark}

Now we specialize fitness further. For the rest of this paper, we assume a so-called step environment. More precisely, we assume that $\a(x)$ and $\be(x)$ are the following step functions on the real line:
\begin{subequations}\label{step_functions}
\begin{equation}\label{step_function_alpha}
	\a(x)= \begin{cases} \hphantom{-}\a_+ \;&\text{if } x\ge0\,, \\
						 -\a_- \;&\text{if } x<0\,, \end{cases}
\end{equation}
and
\begin{equation}\label{step_function_beta}
	\be(x)= \begin{cases} \hphantom{-}\be_+ \;&\text{if } x\ge0\,, \\
						 -\be_- \;&\text{if } x<0\,, \end{cases}
\end{equation}
where we posit 
\begin{equation}\label{sign_ab}
	\a_+ >0, \; \a_- >0, \; \be_+ >0, \; \be_- > 0.
\end{equation}
\end{subequations}
Under the assumption that both functions change sign once and at the same spatial location, this is general. For instance, the case $\be_+<0<-\be_-$ is obtained by relabeling the alleles $B$ and $b$. Under the assumptions \eqref{step_functions}, selection favors alleles $A$ and $B$ on $[0,\infty)$, and $a$ and $b$ on $(-\infty,0)$; cf.\ Appendix \ref{app:fitness}. Without loss of generality, we could set $\a_+=1$ by rescaling $\la$, but we refrain from doing so. 

By elliptic regularity (Gilbarg and Trudinger 2001), stationary solutions $(p_A,p_B,D)$ of \eqref{eq:ABD_add} with \eqref{step_functions} are continuous and have continuous first derivatives. In addition, $(p''_A,p''_B,D'')$ is continuous and uniformly bounded on every set $(-\infty,-\de)\cup(\de,\infty)$, where $\de>0$, but discontinuous at $x=0$. We refer to a non-constant stationary solution $(p_A,p_B,D)$ of \eqref{eq:ABD_add} as a two-locus cline, and we call $p_A$ and $p_B$ the (marginal) one-locus allele-frequency clines. 

Throughout, if we write $f(\pm\infty)=f^\pm$ then both limits $f^+=\lim_{x\to\infty}f(x)$ and $f^-=\lim_{x\to-\infty}f(x)$ exist and are finite. Furthermore, each equation in which plus and minus signs appear superimposed holds if either every upper or every lower sign is chosen.

For stationary solutions of \eqref{eq:ABD_add} with \eqref{step_functions}, we impose the boundary conditions
\begin{equation}\label{boundary_cond1a}
	p_A'(\pm\infty) =0\,,\; p_B'(\pm\infty) =0\,,\;D'(\pm\infty) =0\,,
\end{equation}
which are a consequence of \eqref{boundary_cond1}.
Under the assumption that the limits $p_A'(\pm\infty)$, $p_A''(\pm\infty)$, $p_B'(\pm\infty)$, $p_B''(\pm\infty)$, $D'(\pm\infty)$, and $D''(\pm\infty)$ exist, \eqref{boundary_cond1a} is easy to show, and a two-locus cline satisfies 
\begin{equation}\label{boundary_cond0}
	p_A(+\infty)=1\,,\; p_A(-\infty)=0\,,\; p_B(+\infty)=1\,,\; p_B(-\infty)=0\,,\text{ and } D(\pm\infty) =0
\end{equation}
(provided $r>0$); see also Remark \ref{boundary_cond_1loc}. 
We conjecture that the stationary solutions of \eqref{eq:ABD_add} with \eqref{step_functions} obeying the constraints \eqref{constraints_twoloc} automatically satisfy \eqref{boundary_cond1a} and \eqref{boundary_cond0}. 

\begin{remark}\rm
We observe from \eqref{eq:ABD_a} or \eqref{eq:ABD_b} that if one locus is fixed, the cline at the other locus is independent of which allele is fixed. Formally, the reason is that if one locus is fixed, then $D=0$. This independence of the other locus is a specific property of continuous-time models. In discrete-time models, the allele frequency at migration-selection equilibrium depends on the fitness of fixed alleles at other loci (see e.g., Aeschbacher and B\"urger 2014, where both discrete-time and continuous-time continent-island models are treated). This difference results from the fast decay of higher-order terms when approximating a discrete-time model by a continuous-time model.
\end{remark}
 
The following simple observation of an invariance property of (time-dependent) solutions of \eqref{eq:ABD_add} with \eqref{step_functions} will be useful later and provide an explanation for the qualitatively different conditions of existence and stability of clines in unbounded and bounded domains with a step environment. For arbitrary, fixed $\a_\pm$ and $\be_\pm$, let 
\begin{equation}\label{pi}
	\pi(x,t;\la,\rh)=(p_A(x,t;\la,\rh),p_B(x,t;\la,\rh),D(x,t;\la,\rh))
\end{equation}
denote the time-dependent solution of \eqref{eq:ABD_add} with \eqref{step_functions}, where the dependence on $\la$ and $\rh$ is shown explicitly. 

\begin{lemma}\label{lem:invariance}
The following statements hold for arbitrary positive constants $c$, $c_1$, $c_2$, for arbitrary (positive) parameters $\la$ and $\rh$, and for every $(x,t)\in\Reals\times[0,\infty)$.

{\rm(i)}
\begin{equation}\label{invariance_c}
	\pi(x,t;c\la,c\rh)=\pi(\sqrt{c}x,ct;\la,\rh)
\end{equation}
and
\begin{equation}\label{invariance}
	\pi(x,t; c_1\la,c_2\rh) = \pi(\sqrt{c_1}x,c_1t;\la,\frac{c_2}{c_1}\rh) =  \pi(\sqrt{c_2}x,\frac{c_2}{c_1}t;\frac{c_1}{c_2}\la,\rh) \,.
\end{equation}

{\rm(ii)} If $\pi(x;\la,\rh)$ is a two-locus cline for $\la$ and $\rh$, then a two-locus cline exists for $c\la$ and $c\rh$. It is given by $\pi(x;c\la,c\rh)=\pi(\sqrt{c}x;\la,\rh)$.

{\rm(iii)} If $\pi(x;\la,\rh)$ is a globally asymptotically stable two-locus cline for $\la$ and $\rh$, then $\pi(x;c\la,c\rh)=\pi(\sqrt{c}x;\la,\rh)$ is a globally asymptotically stable two-locus cline for $c\la$ and $c\rh$. 
\end{lemma}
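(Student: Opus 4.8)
The plan is to prove part (i) by a direct change of variables in the PDE system \eqref{eq:ABD_add}, then deduce (ii) and (iii) as corollaries by specializing to stationary solutions and invoking uniqueness of solutions to the initial value problem. For (i), fix $\a_\pm$, $\be_\pm$ and consider the solution $\pi(x,t;c_1\la,c_2\rh)=(p_A,p_B,D)$ of \eqref{eq:ABD_add} with parameters $c_1\la$ and $c_2\rh$. Introduce new variables $\xi=\sqrt{c_1}\,x$ and $\tau=c_1 t$, and set $\tilde p_A(\xi,\tau)=p_A(x,t)$, and similarly for $\tilde p_B$ and $\tilde D$. By the chain rule, $\partial_t = c_1\partial_\tau$ and $\partial_x^2 = c_1\partial_\xi^2$, so the diffusion and time-derivative terms both pick up a factor $c_1$, which cancels. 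The crucial point is that the step functions $\a(x),\be(x)$ in \eqref{step_functions} are scale-invariant: $\a(x)=\a(\sqrt{c_1}\,x)=\a(\xi)$ since the sign of $x$ is unchanged by multiplication by $\sqrt{c_1}>0$ (and likewise for $\be$). Hence after dividing \eqref{eq:ABD_a}, \eqref{eq:ABD_b} by $c_1$ the selection coefficient becomes $(c_1\la)/c_1=\la$, and in \eqref{eq:ABD_D} the recombination coefficient becomes $(c_2\rh)/c_1=(c_2/c_1)\rh$ while the cross term $2p_A'p_B'$ also transforms correctly because each derivative in $x$ contributes $\sqrt{c_1}$. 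One checks that $\vth_A,\vth_B$ are algebraic in $p_A,p_B$ and so transform without extra factors, and that the constraints \eqref{constraints_twoloc} and boundary conditions \eqref{boundary_cond1} are preserved pointwise. Therefore $(\tilde p_A,\tilde p_B,\tilde D)$ solves \eqref{eq:ABD_add} with parameters $\la$ and $(c_2/c_1)\rh$ and the same initial data, which by uniqueness yields the first equality in \eqref{invariance}. The second equality in \eqref{invariance} follows by relabeling, and \eqref{invariance_c} is the special case $c_1=c_2=c$.

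For (ii), suppose $\pi(x;\la,\rh)$ is a two-locus cline, i.e.\ a non-constant stationary solution. Running the same change of variables with $c_1=c_2=c$ (now with no $t$-dependence, so only the spatial scaling $\xi=\sqrt{c}\,x$ enters), the stationary version of the computation in (i) shows that $x\mapsto\pi(\sqrt{c}\,x;\la,\rh)$ is a stationary solution of \eqref{eq:ABD_add} for the parameters $c\la$ and $c\rh$; it is non-constant because $\pi(\cdot;\la,\rh)$ is, and it satisfies the constraints \eqref{constraints_twoloc} and the boundary conditions \eqref{boundary_cond1a}, \eqref{boundary_cond0} since these are all invariant under $x\mapsto\sqrt{c}\,x$. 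Hence it is a two-locus cline for $c\la,c\rh$, giving the displayed formula.

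For (iii), assume in addition that $\pi(\cdot;\la,\rh)$ is globally asymptotically stable for the dynamics \eqref{eq:ABD_add} with parameters $\la,\rh$. Let $\pi_0$ be any admissible initial datum for the system with parameters $c\la,c\rh$; by \eqref{invariance_c}, the corresponding solution at time $t$ equals the solution of the $(\la,\rh)$-system with rescaled initial datum evaluated at $(\sqrt{c}\,x,ct)$. As $t\to\infty$ we have $ct\to\infty$, so by global asymptotic stability the latter converges (uniformly on compacts, say) to $\pi(\sqrt{c}\,x;\la,\rh)$, which is exactly the rescaled cline $\pi(x;c\la,c\rh)$. Since $\pi_0$ was arbitrary, the rescaled cline is globally asymptotically stable for $c\la,c\rh$.

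The routine parts are the chain-rule bookkeeping and verifying that each term in \eqref{eq:ABD_add} carries the right power of $\sqrt{c_1}$; the only genuine subtlety is the scale-invariance of the step environment (so that $\a,\be$ are \emph{not} merely rescaled but literally unchanged), which is what makes the statement work on the whole real line and is precisely the feature that fails on a bounded domain. A secondary point requiring a word of care is the appeal to uniqueness for \eqref{eq:ABD_add}: the right-hand sides are locally Lipschitz in $(p_A,p_B,D)$ and the discontinuity of the coefficients at $x=0$ is handled by the elliptic/parabolic regularity already invoked in the text, so well-posedness of the initial value problem gives the identification of the two solutions.
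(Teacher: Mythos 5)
Your proposal is correct and follows essentially the same route as the paper: a chain-rule change of variables $(x,t)\mapsto(\sqrt{c}\,x,ct)$ in \eqref{eq:ABD_add}, exploiting that the step functions satisfy $\a(\sqrt{c}\,x)=\a(x)$, followed by an appeal to uniqueness of solutions of the initial value problem to identify the rescaled solution, with (ii) and (iii) obtained by specializing to stationary solutions and transferring domains of attraction exactly as in the paper. The only cosmetic difference is that you rescale from the $(c_1\la,c_2\rh)$ system down to the $(\la,\tfrac{c_2}{c_1}\rh)$ system rather than the reverse, which changes nothing.
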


\begin{proof}
(i) Let $\tilde\pi(x,t) = \pi(\sqrt{c}x,ct;\la,\rh)$ and use the abbreviation $z=(\sqrt{c}x,ct;\la,\rh)$. Then, by the chain rule, $\tilde\pi'(x,t) = \sqrt{c}\pi'(z)$, $\tilde\pi''(x,t) = c\pi''(z)$, and $\dot{\tilde\pi}(x,t) = c\dot\pi(z)$. From these identities, the fact that $\a(x)$ and $\be(x)$ are step functions with step at 0, and by applying \eqref{eq:ABD_add}, we obtain
\begin{align}
	\dot{\tilde D}(x,t) &= c \dot D(z) \notag \\
		&=c\Bigl\{D''(z) + 2p_A'(z)p_B'(z) \notag \\
		&\quad + \la[\a(\sqrt{c}x)(1-2p_A(z))\vth_A(z)+\be(\sqrt{c}x)(1-2p_B(z))\vth_B(z)]D(z) - \rh D(z)  \Bigr\} \notag \\
		&=\tilde D''(x,t) + 2\tilde p_A'(x,t)\tilde p_B'(x,t) \notag \\
		&\quad +  c\la[\a(x)(1-2\tilde p_A(x,t))\tilde \vth_A(x,t)+\be(x)(1-2\tilde p_B(x,t))\tilde \vth_B(x,t)]\tilde D(x,t) \notag \\
		&\quad - c\rh \tilde D(x,t)\,.		
\end{align}
Analogous calculations hold for $\tilde p_A$ and $\tilde p_B$.
Because solutions are unique (the proof of uniqueness in Theorem 1 of Kolmogoroff et al.\ 1937 does not require the Lipschitz condition in $x$), we have $\tilde\pi(x,t)=\pi(x,t;c\la,c\rh)$. This proves \eqref{invariance_c}, and \eqref{invariance} follows immediately. 

(ii) follows from (i) because clines are stationary solutions.

(iii) follows because (i) establishes a one-to-one relation between the domains of attraction of the two stationary solutions $\pi(x;\la,\rh)$ and $\pi(x;c\la,c\rh)$. 
\end{proof}

\begin{remark}\rm
Assume that $(p_A,p_B,D)$ is a two-locus cline and $D(x)\ge0$ holds everywhere. Then \eqref{eq:ABD_a} implies that $p_A'$ is strictly monotone increasing on $(-\infty,0)$ and strictly monotone decreasing on $[0,\infty)$. Because the cline satisfies $p_A'(\pm\infty)=0$, it follows that $p_A'(x)$ is maximized at $x=0$, and minimized at $x=\pm\infty$. An analogous statement holds for $p_B$. We conjecture that $D(x)>0$ on $\Reals$ holds for every possible parameter combination satisfying our assumptions.
\end{remark}

\section{Single-locus clines with dominance}\label{sec:one_locus}
For a step environment on the real line, Haldane (1948) derived explicit solutions for a single-locus cline if dominance is either absent ($\dom=0$) or complete ($\dom=-1$). We admit spatially constant dominance coefficients satisfying 
\begin{equation}\label{dom_cond_1loc}
	-1\le\dom\le1
\end{equation}
and assume that $\a(x)$ is given by \eqref{step_function_alpha}. Throughout this section, we write $\dom$ instead of $\dom_A$. Then the single-locus cline is the solution of 
\begin{subequations}\label{1-locus problem}
\begin{equation}\label{1-locus ODE}
	P'' + \la\a(x) P(1-P)(1+\dom-2\dom P) = 0
\end{equation}
satisfying
\begin{equation}\label{0lePle1}
	0 <P(x) < 1\,,
\end{equation}
and obeying the boundary conditions
\begin{equation}\label{P(infty)}
	P(-\infty)=0 \,,\; P(+\infty) =1\,,
\end{equation}
and
\begin{equation}\label{Pprime(infty)}
	P'(-\infty)=P'(+\infty) =0\,.
\end{equation}
\end{subequations}

\begin{remark}\label{boundary_cond_1loc}\rm
Solutions of \eqref{1-locus ODE} and \eqref{0lePle1} satisfy the boundary conditions \eqref{P(infty)} and \eqref{Pprime(infty)} automatically: Let, e.g., $x>0$. Then \eqref{1-locus ODE} implies $P''(x)<0$ for every $x>0$. Because $P(x)$ is bounded by \eqref{0lePle1}, we infer that $P''(+\infty)=0$ and that $P'(x)$ is monotone decreasing in $x$. Since $P'(x)$ must be bounded from below, the limit $P'(+\infty)$ exists and equals 0. Since $P''(+\infty)=0$, \eqref{1-locus ODE} yields $P(+\infty)=0$ or $P(+\infty)=1$. The properties of $P'$, or $P''$, imply the latter.
\end{remark}
Our goal is to derive an explicit solution for this cline. We start by collecting some simple facts. 

From \eqref{1-locus ODE}, \eqref{0lePle1}, and \eqref{dom_cond_1loc}, we obtain
\begin{equation}\label{P''(x)}
	P''(x)  \begin{cases} < 0 \;&\text{ if } x>0\,, \\
						  > 0 \;&\text{ if } x<0\,.
			\end{cases}
\end{equation}
From \eqref{P(infty)}, \eqref{0lePle1}, and \eqref{P''(x)}, we infer that $P'(x)$ is maximized at $x=0$ and is strictly monotone declining to 0 as $x\to\pm\infty$, i.e.,
\begin{equation}
	P'(\pm\infty) = 0\,.
\end{equation}

To state the main result of this section in compact form, we define
\begin{subequations}\label{abbrevs_Adom}
\begin{equation}\label{a_pm}
	a_+ = \sqrt{\la\a_+}\,, \quad a_- = \sqrt{\la\a_-}\,,
\end{equation}
\begin{alignat}{2}
	\xix_+ &= x a_+\sqrt{1-\dom} &\quad&\text{ if } \dom<1\,, \label{xix+}\\
	\xix_- &= x a_-\sqrt{1+\dom} &\quad&\text{ if } \dom>-1\,, \label{xix-}
\end{alignat}
\begin{equation}\label{A_pm_dom}
	A_+ = F_+(a_0,\dom) \,,\quad A_- = F_-(a_0,\dom) \,,
\end{equation}
where
\end{subequations}
\begin{subequations}\label{F_pm}
\begin{align}
	F_+(y,\dom) &= \begin{cases}
		\dfrac{2+(1-3\dom)y+\sqrt3\sqrt{1-\dom}\sqrt{\phi(1-y,-\dom)}}{1-y} \quad&\text{if $\dom<1$}\,, \\
		\dfrac{\sqrt3\sqrt{1+3y}}{2\sqrt{1-y}}&\text{if $\dom=1$}\,,
	\end{cases} \label{F_+}\\
	F_-(y,\dom) &= \begin{cases}
		1/F_+(1-y,-\dom) \quad&\text{if $\dom>-1$}\,, \\
		F_+(1-y,1) \quad&\text{if $\dom=-1$}\,,
		\end{cases} \label{F_-}
\end{align}
\end{subequations}
\begin{equation}\label{phi}
	\phi(y,\dom)=3-2y+3\dom(1-y)^2\,,
\end{equation}
and $a_0$ is the unique solution in $(0,1)$ of the quartic equation 
\begin{equation}\label{get_a0_dom}
	a_0^2\phi(a_0,\dom) = \frac{\a_+}{\a_++\a_-}\,.
\end{equation}
The functions $F_\pm$ are well defined and positive if $0<y<1$. They are strictly monotone increasing in $y$.

In addition, it will be convenient to write
\begin{equation}\label{Z_pm}
	Z_\pm(x) = A_\pm e^{X_\pm} \,. 
\end{equation}

\begin{theorem}\label{thm:1-locus_cline}
Let $\a(x)$ be given by \eqref{step_function_alpha} and assume \eqref{dom_cond_1loc}.
Then there exists a unique continuously differentiable solution $P(x)$ of \eqref{1-locus problem}. If $x\ge0$, it is given by
\begin{subequations}\label{P(x)_dom}
\begin{equation}\label{P(x+)_dom}
	P(x) = \begin{cases}
			1 - \dfrac{6(1-\dom)}{Z_+ + 2(1-3\dom) + (1+3\dom)Z_+^{-1}} \quad&\text{if $\dom<1$}\,,\vspace{.3cm} \\
			1 - \dfrac{12}{9+4(xa_++A_+)^2}\quad&\text{if $\dom=1$}\,. 
	\end{cases}
\end{equation}
If $x<0$, it is given by
\begin{equation}\label{P(x-)_dom}
	P(x) = \begin{cases}
			\dfrac{6(1+\dom)}{(1-3\dom)Z_- + 2(1+3\dom) + Z_-^{-1}} \quad&\text{if $\dom>-1$}\,, \vspace{.3cm}\\
		\dfrac{12}{9+4(xa_--A_-)^2}	\quad&\text{if $\dom=-1$} \,. 	
	\end{cases}
\end{equation}
\end{subequations}
\end{theorem}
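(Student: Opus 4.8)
On each half-line $\a(x)$ is constant, so \eqref{1-locus ODE} is autonomous and possesses a first integral. The plan is to integrate this first integral on $x>0$ and on $x<0$ separately, match the two pieces at $x=0$ (where continuous differentiability is imposed), and then solve the two resulting separable first-order equations explicitly. As a preliminary, following Remark~\ref{boundary_cond_1loc}, a $C^1$ solution automatically satisfies \eqref{P(infty)} and \eqref{Pprime(infty)}; moreover \eqref{P''(x)} holds because $0<P<1$ and $-1\le\dom\le1$ force $1+\dom-2\dom P>0$, and hence $P'$ attains its maximum at $x=0$ and decays monotonically to $0$ at $\pm\infty$, so that $P'>0$ on all of $\Reals$ and $P$ is strictly increasing.

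\textbf{First integral and the equation for $a_0$.} Multiplying \eqref{1-locus ODE} by $P'$ and integrating over $(x,\infty)$ for $x>0$, using $P(+\infty)=1$ and $P'(+\infty)=0$, gives
\begin{equation*}
\tfrac12\bigl(P'\bigr)^2=\la\a_+\int_{P}^{1}s(1-s)(1+\dom-2\dom s)\,ds=\tfrac16\,a_+^2\,(1-P)^2\,\phi(1-P,-\dom),
\end{equation*}
where the second equality is the elementary evaluation of the quartic primitive together with the factorization $1-s^2\phi(s,\dom)=(1-s)^2\phi(1-s,-\dom)$ (equivalently: the cubic $s(1-s)(1+\dom-2\dom s)$ vanishes to order two at $s=1$). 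Integrating instead over $(-\infty,x)$ for $x<0$, with $P(-\infty)=0$ and $P'(-\infty)=0$, gives $\tfrac12(P')^2=\tfrac16\,a_-^2\,P^2\,\phi(P,\dom)$; positivity of $\phi(1-P,-\dom)$ on $[0,1)$ and of $\phi(P,\dom)$ on $(0,1]$ follows because the cubic is positive on $(0,1)$. Since $P$ and $P'$ are continuous at $x=0$, the two expressions for $(P')^2$ agree there; with $a_0=P(0)\in(0,1)$ this reads $\a_+(1-a_0)^2\phi(1-a_0,-\dom)=\a_-\,a_0^2\phi(a_0,\dom)$, and invoking the identity $s^2\phi(s,\dom)+(1-s)^2\phi(1-s,-\dom)\equiv1$ (i.e.\ $\int_0^1 s(1-s)(1+\dom-2\dom s)\,ds=\tfrac16$) this simplifies to \eqref{get_a0_dom}. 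The map $y\mapsto y^2\phi(y,\dom)=6\int_0^y s(1-s)(1+\dom-2\dom s)\,ds$ is a strictly increasing bijection of $[0,1]$ onto $[0,1]$, since its derivative $6y(1-y)(1+\dom-2\dom y)$ is positive on $(0,1)$; hence \eqref{get_a0_dom} has a unique solution $a_0\in(0,1)$.

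\textbf{Explicit integration.} Because $P'>0$, on $x\ge0$ we have the separable equation $P'=\tfrac{a_+}{\sqrt3}\,(1-P)\sqrt{\phi(1-P,-\dom)}$. For $\dom<1$, substituting $u=1-P$ turns the expression under the root into a quadratic in $u$ with positive constant term $3(1-\dom)$, and the standard primitive $\int\frac{du}{u\sqrt{au^2+bu+c}}=-\tfrac1{\sqrt c}\ln\bigl|\tfrac{2c+bu+2\sqrt c\sqrt{au^2+bu+c}}{u}\bigr|$ yields a relation $\ln(\text{an explicit expression in }P)=X_++\text{const}$; imposing $P(0)=a_0$ identifies the constant as $\ln A_+$ with $A_+=F_+(a_0,\dom)$ as in \eqref{F_+}, and clearing the logarithm and solving the resulting quadratic equation for $1-P$ gives \eqref{P(x+)_dom}. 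For $\dom=1$ the quadratic under the root degenerates, since $\phi(1-P,-1)=(1-P)(1+3P)$; the integral is then algebraic and the same procedure gives the rational formula in \eqref{P(x+)_dom} with $A_+=F_+(a_0,1)$. The range $x<0$ is treated either by repeating this computation starting from $\tfrac12(P')^2=\tfrac16\,a_-^2\,P^2\phi(P,\dom)$ — now $\dom=-1$ is the degenerate case, with $\phi(P,-1)=P(4-3P)$ — or, more economically, via the symmetry $P(x)\mapsto1-P(-x)$, which maps the solution for parameters $(\a_+,\a_-,\dom)$ to the solution for $(\a_-,\a_+,-\dom)$, sends $a_0\mapsto1-a_0$ and the integration constant to its reciprocal, and thereby turns \eqref{P(x+)_dom} into \eqref{P(x-)_dom} while producing the relation $F_-(y,\dom)=1/F_+(1-y,-\dom)$ recorded in \eqref{F_-}.

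\textbf{Existence, uniqueness, and the main obstacle.} The derivation above shows that every $C^1$ solution of \eqref{1-locus problem} must equal \eqref{P(x)_dom} for the unique $a_0$ solving \eqref{get_a0_dom}, which gives uniqueness. Conversely, one verifies directly that the function defined by \eqref{P(x)_dom} with this $a_0$ is of class $C^1$ on $\Reals$ — the matchings $P(0^-)=P(0^+)=a_0$ and $P'(0^-)=P'(0^+)$ hold precisely because of the choice of $a_0$ — solves \eqref{1-locus ODE} on each half-line, takes values strictly in $(0,1)$, and attains the limits $0$ and $1$ at $\mp\infty$; this gives existence (and $P$ is in fact $C^2$ away from $x=0$). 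The conceptual steps are routine; the actual work, and the only place where sign errors are likely, is the explicit integration: choosing the positive branch of the square root (legitimate by the monotonicity of $P$), computing the standard integral $\int du/(u\sqrt{\text{quadratic}})$ and then algebraically inverting it to solve for $P$, and handling the two degenerate cases $\dom=\pm1$ separately, where the quadratic under the root acquires a linear factor and the cline becomes a rational rather than an exponential-type function. Matching the integration constants against $F_\pm(a_0,\dom)$ is then pure bookkeeping based on the identity used in the previous step.
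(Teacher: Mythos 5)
Your proposal is correct and follows essentially the same route as the paper's proof: Haldane's first-integral method via multiplication by $P'$, matching $(P')^2$ at $x=0$ to obtain the quartic \eqref{get_a0_dom} for $a_0$, explicit integration of the separable equations on each half-line (with the degenerate cases $\dom=\pm1$ handled separately), and the symmetry $P(x)\mapsto 1-P(-x)$, $\dom\mapsto-\dom$, $a_+\mapsto a_-$ to transfer the result to $x<0$. The only cosmetic differences are that you integrate over $(x,\pm\infty)$ instead of fixing the constants $C_1,C_2$ from the boundary conditions, and you invoke the standard primitive of $\int du/(u\sqrt{au^2+bu+c})$ where the paper instead exhibits $F_\pm$ and verifies its logarithmic derivative; you also add the (welcome) observation that $y\mapsto y^2\phi(y,\dom)$ is a bijection of $[0,1]$ onto itself, which justifies the uniqueness of $a_0$.
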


\begin{proof}
Following Haldane's method, we multiply \eqref{1-locus problem} by $2P'$ to obtain
\begin{equation}\label{ODE_P'^2}
	\frac{d}{dx}(P')^2 = -2\la\a(x) P(1-P)(1+\dom-2\dom P)P'\,.
\end{equation}
From \eqref{phi}, we obtain
\begin{equation}
	\pder{}{y}(y^2\phi(y,\dom))  = 6y(1-y)(1+\dom-2\dom y)\,.
\end{equation}
 
Let $x\ge0$, so that $\a(x)=\a_+$. Separating variables in \eqref{ODE_P'^2} and integrating, we find 
\begin{subequations}\label{ph^2}
\begin{equation}\label{ph^2_right}
	[P'(x)]^2 = C_1 - \frac{\la\a_+}{3}P(x)^2\phi(P(x),\dom) \,.
\end{equation}
If $x<0$, then $\a(x)=-\a_-$ and we obtain in an analogous manner
\begin{equation}\label{ph^2_left}
	[P'(x)]^2 = C_2 + \frac{\la\a_-}{3}P(x)^2\phi(P(x),\dom) \,.
\end{equation}
\end{subequations}
Because the boundary conditions are $P'(x)\to0$ and $P(x)\to1$ as $x\to\infty$, and $P'(x)\to0$ and $P(x)\to0$ as $x\to-\infty$, we find $C_1=\la\a_+/3$ and $C_2=0$. 

Since we require continuity of $P'$ at $x=0$, we equate the right-hand sides of \eqref{ph^2_right} and \eqref{ph^2_left} to obtain
\begin{equation}
	\frac{\la\a_+}{3} - \frac{\la\a_+}{3}P(0)^2\phi(P(0),\dom) =  \frac{\la\a_-}{3}P(0-)^2\phi(P(0-),\dom)\,.
\end{equation}
Setting
\begin{equation}\label{a_0}
	a_0=P(0)=P(0-)
\end{equation}
and simplifying, we infer that $a_0$ is the unique solution in $(0,1)$ of \eqref{get_a0_dom}. 

Substituting \eqref{get_a0_dom} into \eqref{ph^2_left} with $C_2=0$ and $x=0$, we find
\begin{equation}\label{P'(0)}
	P'(0) = \frac{\sqrt{\la}}{\sqrt3}\,\sqrt{\frac{\a_+\a_-}{\a_+ +\a_-}}
				= \frac{\sqrt{\la}}{\sqrt6}\,\sqrt{H(\a_+,\a_-)} = \frac{1}{\sqrt6}\sqrt{H(a_+^2,a_-^2)}\,,
\end{equation}
where $H(\a_+,\a_-)$ denotes the harmonic mean of $\a_+$ and $\a_-$. 

Because the constants in \eqref{ph^2} are $C_1=\la\a_+/3=a_+^2/3$ and $C_2=0$, we have to solve the set of differential equations
\begin{subequations}
\begin{alignat}{2}
	P' &= \frac{a_+}{\sqrt3}(1-P)\sqrt{\phi(1-P,-\dom)} &\qquad&\text{if } x\ge0 \,, \label{P'_right}\\
	P' &= \frac{a_-}{\sqrt3}P\sqrt{\phi(P,\dom)} &\qquad&\text{if } x<0  \label{P'_left}\,.
\end{alignat}
\end{subequations}
By separating variables in \eqref{P'_right}, multiplying by $\sqrt3\sqrt{1-h}$ if $\dom<1$ and by $\sqrt3$ if $\dom=1$, integrating and observing
\begin{subequations}
\begin{alignat}{2}
		\pder{}{y} \ln F_+(y,\dom) &= \frac{\sqrt3\sqrt{1-\dom}}{(1-y)\sqrt{\phi(1-y,-\dom)}} &\quad&\text{if } \dom<1\,, \\
		\pder{}{y} F_+(y,1) &=   \frac{\sqrt3}{(1-y)\sqrt{\phi(1-y,-1)}}  &\quad&\text{if } \dom=1\,,
\end{alignat}
\end{subequations}
we arrive at
\begin{subequations}\label{int_parts1}
\begin{alignat}{2}
		\ln F_+(P(x)) &= c + x a_+\sqrt{1-\dom}  &\qquad&\text{if } \dom<1\,, \\
		F_+(P(x)) &= c + x a_+  &\qquad&\text{if } \dom=1\,.
\end{alignat}
\end{subequations}
The constant $c$ has to satisfy $c=\ln F_+(a_0)$ if $\dom<1$, and $c=F_+(a_0)$ if $\dom=1$. Using the abbreviations \eqref{abbrevs_Adom}, we can rewrite \eqref{int_parts1} as 
\begin{subequations}\label{int_parts2}
\begin{alignat}{2}
	F_+(P(x)) &= A_+e^{\xix_+}  &\qquad&\text{if } \dom<1\,, \\
	F_+(P(x)) &= \xix_+ + A_+  &\qquad&\text{if } \dom<1\,.
\end{alignat}
\end{subequations}
Now, tedious but straightforward inversion of $F_+$ yields \eqref{P(x+)_dom}.

To derive \eqref{P(x-)_dom}, we observe that if $P(x,\dom,a_+)$ solves \eqref{P'_right}, then $\tilde P(x)=1-P(-x,-\dom,a_-)$ solves \eqref{P'_left}. This follows from the fact that the right-hand side of \eqref{P'_left} is obtained from that of \eqref{P'_right} by the simultaneous substitutions $P\to1-P$, $\dom\to-\dom$, and $a_+\to a_-$.
Therefore, 
\begin{equation}
	\pder{}{y} \ln F_-(y,\dom) = \pder{}{y}( -\ln F_+(1-y,-\dom)) = \frac{\sqrt3\sqrt{1+\dom}}{y\sqrt{\phi(y,\dom)}}\,,
\end{equation}
which implies \eqref{P(x-)_dom} if $\dom>-1$ because it is obtained from \eqref{P(x+)_dom} by the above substitutions. A similar argument yields \eqref{P(x-)_dom} if $\dom=-1$.
The adjustment of integration constants must be such that $P(0)=P(0-)=a_0$; see \eqref{a_0}. This is achieved by choosing $A_+$ and $A_-$ according to \eqref{A_pm_dom}.

Uniqueness follows from the above derivation.
\end{proof}

\begin{corollary}\label{cor:properties of P}
{\rm (i)} The slope of the cline in its center, $P'(0)$, given by \eqref{P'(0)}, is independent of the degree of dominance.

{\rm (ii)} For fixed step size $\a_++\a_-$ and $\la$, or fixed $a_+^2+a_-^2$, $P'(0)$ is maximized if $\a_+=\a_-$, and $P'(0)$ decays to zero as $\a_+\to0$ (or as $\a_-\to0$). This is in accordance with intuition because in these limits allele $A$ (or $a$) is nowhere advantageous and the cline vanishes.

{\rm (iii)} For fixed $a_+$ and $a_-$, $P(0)=a_0$, given by the solution of \eqref{get_a0_dom} in $(0,1)$, is strictly monotone decreasing in $h$. 

{\rm (iv)} For fixed $h$, $P(0)=a_0$ is strictly monotone decreasing in $\tilde\a=\a_+/(\a_++\a_-)= a_+^2/(a_+^2+a_-^2)$. 
\end{corollary}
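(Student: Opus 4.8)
The plan is to extract parts~(i) and~(ii) directly from the closed form of $P'(0)$ obtained inside the proof of Theorem~\ref{thm:1-locus_cline}, and to deduce parts~(iii) and~(iv) from a monotonicity analysis of the quartic \eqref{get_a0_dom} that defines $a_0=P(0)$.

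For~(i) I would observe that, although $a_0$ itself depends on $\dom$ through \eqref{get_a0_dom}, the formula \eqref{P'(0)} was obtained by substituting \eqref{get_a0_dom} into \eqref{ph^2_left} at $x=0$ (with $C_2=0$): this replaces the $\dom$-dependent quantity $a_0^2\phi(a_0,\dom)$ by $\a_+/(\a_++\a_-)$, so that $[P'(0)]^2=\tfrac{\la}{3}\,\a_+\a_-/(\a_++\a_-)=\tfrac{\la}{6}H(\a_+,\a_-)$ carries no trace of $\dom$. For~(ii) I would work with this same expression: writing $s=\a_++\a_-$ (equivalently $a_+^2+a_-^2=\la s$ for fixed $\la$), one has $[P'(0)]^2=\tfrac{\la}{3s}\,\a_+(s-\a_+)$, a strictly concave quadratic in $\a_+$ on $(0,s)$; hence it attains its unique maximum at $\a_+=s/2$, i.e.\ $\a_+=\a_-$, and tends to $0$ as $\a_+\to0^+$ or $\a_+\to s^-$ (the latter meaning $\a_-\to0^+$), which is the claimed behaviour.

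For~(iii) and~(iv) I would introduce $g(y,\dom):=y^2\phi(y,\dom)=3y^2-2y^3+3\dom\,y^2(1-y)^2$, so that \eqref{get_a0_dom} reads $g(a_0,\dom)=\tilde\a$ with $\tilde\a:=\a_+/(\a_++\a_-)\in(0,1)$, and record three facts: $g(0,\dom)=0$ and $g(1,\dom)=1$; $\partial_y g(y,\dom)=6y(1-y)(1+\dom-2\dom y)>0$ for $y\in(0,1)$ and $\dom\in[-1,1]$, since $|\dom(1-2y)|<1$ on that range; and $\partial_\dom g(y,\dom)=3y^2(1-y)^2>0$ for $y\in(0,1)$. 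The first two items say that $g(\cdot,\dom)$ is a strictly increasing bijection of $[0,1]$, so $a_0$ is well defined, equals $[g(\cdot,\dom)]^{-1}(\tilde\a)$, and varies smoothly with the parameters. For~(iii), holding $a_+,a_-$ (hence $\tilde\a$) fixed, I would either differentiate implicitly, $da_0/d\dom=-\partial_\dom g/\partial_y g<0$, or argue by comparison: if $\dom_1<\dom_2$ then $g(a_0(\dom_2),\dom_1)<g(a_0(\dom_2),\dom_2)=\tilde\a=g(a_0(\dom_1),\dom_1)$, and strict monotonicity of $g(\cdot,\dom_1)$ forces $a_0(\dom_2)<a_0(\dom_1)$. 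For~(iv), holding $\dom$ fixed, I would use that $a_0=[g(\cdot,\dom)]^{-1}(\tilde\a)$ is a strictly monotone function of $\tilde\a$ (the inverse of a strictly monotone bijection), the direction being read off from the sign of $\partial_{\tilde\a}a_0$, which by implicit differentiation equals $1/\partial_y g$ evaluated at $a_0$.

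There is no serious obstacle. The only point that merits a line of care is the \emph{uniform} positivity of $\partial_y g$ on the full admissible range, i.e.\ the elementary inequality $1+\dom-2\dom y>0$ for all $y\in(0,1)$ and $\dom\in[-1,1]$; this is precisely what makes $g(\cdot,\dom)$ a bijection of $[0,1]$, hence what makes $a_0$ well defined, continuous, and monotone in the parameters, so it should be stated explicitly. Everything else is elementary algebra together with the monotone-inverse principle, and part~(iii) can equally well be phrased as the comparison argument above to avoid any appeal to differentiability.
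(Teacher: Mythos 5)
Your proposal is correct and follows essentially the same route as the paper: (i) and (ii) are read off from the closed form \eqref{P'(0)}, and (iii), (iv) are obtained by implicit differentiation of \eqref{get_a0_dom} (the paper's computation \eqref{deriv_a0_h} is exactly your $-\partial_\dom g/\partial_y g$, with the key inequality $1+\dom(1-2a_0)>0$ playing precisely the role you single out). One point you should make explicit rather than leave implicit: your computation in (iv) gives $\partial_{\tilde\a}a_0=1/\partial_y g>0$, i.e.\ $a_0$ is strictly \emph{increasing} in $\tilde\a$; this agrees with the paper's own observation after \eqref{get_a0} that $a_0>\tfrac12$ iff $\a_+>\a_-$, so the word ``decreasing'' in statement (iv) is evidently a typo, and your argument (correctly carried out) establishes the opposite sign.
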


\begin{proof}
(i) is evident from \eqref{P'(0)}. (ii) is also immediate. (iii) follows from implicit differentiation of \eqref{get_a0_dom}: Define
$f(h,a_0)=a_0^2\phi(a_0,\dom) - \frac{\a_+}{\a_++\a_-}$. Then 
\begin{equation}\label{deriv_a0_h}
	\frac{da_0}{dh} = \frac{-\partial f/\partial h}{\partial f/\partial a_0} = \frac{-a_0(1-a_0)}{2[1+(1-2a_0)h]} < 0
\end{equation}
because $0<a_0<1$ and $-1\le h \le 1$. (iv) is shown analogously.
\end{proof}

Figure \ref{fig:plot_dom_1loc} illustrates the dependence of the cline on the sign and degree of dominance. It shows that the slope at $x=0$ is independent of dominance, but it also demonstrates that dominance has a strong influence on the shape, the value $P(0)$ in the center, and the asymptotic properties of the cline (see also Section \ref{sec:asymptotics}). In particular, (environment-independent) dominance leads to remarkable asymmetries of the clines.

\begin{figure}[t]
	\centering
	\includegraphics[width=0.6\textwidth]{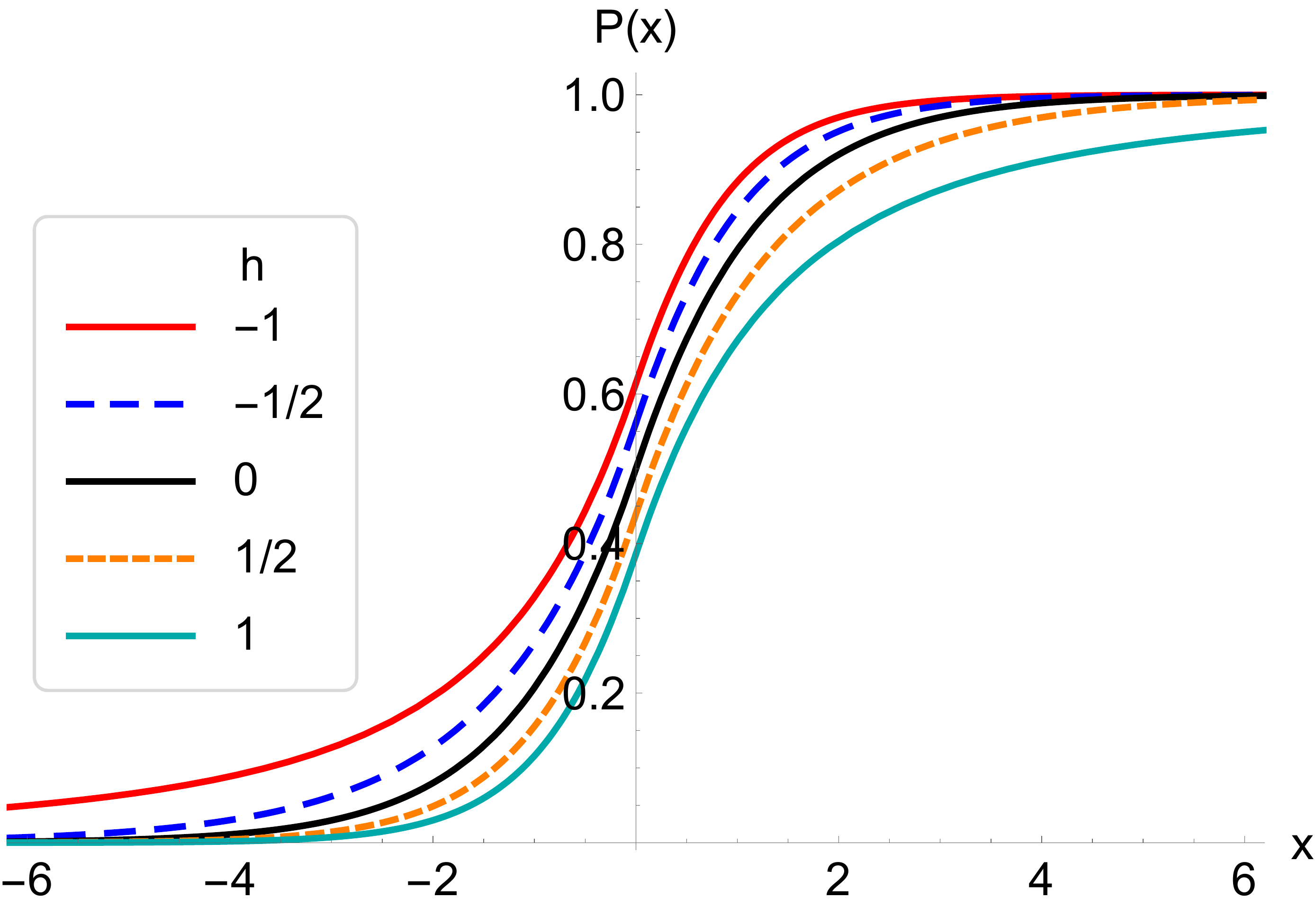}
\caption{The single-locus cline $P(x)$ in \eqref{P(x)_dom} as a function of $x$ for different values of the dominance parameter $\dom$. The other parameters are $a_+=a_-=1$.}\label{fig:plot_dom_1loc}
\end{figure}

\begin{remark}\label{rem:width}\rm
The inverse of the maximum slope, in the present model $P'(0)$, is often called the width of the cline. If $\a_+=\a_-$, the width is proportional to Slatkin's (1973) characteristic length for allele frequency variation, which, in our notation and scaling, becomes $\si/\sqrt{\a_+}=\sqrt2/\sqrt{\la \a_+}$. To extend this concept to asymmetric selection, i.e.,  to $\a_+\ne\a_-$, we define the characteristic length for locus $A$ by
\begin{equation}\label{cline_length}
	\ell_A = \frac{\si}{\sqrt{H(\a_+,\a_-)}} = \frac{\sqrt2}{\sqrt{\la H(\a_+,\a_-)}} \,.
\end{equation}
By \eqref{P'(0)} the width of the cline is
\begin{equation}\label{cline_width}
	\om_A = 1/P'(0) = \sqrt3\, \ell_A\,.
\end{equation}
Since $a_0$ does not depend on $\la$, \eqref{P(x)_dom} informs us immediately that $P$ satisfies the invariance property
\begin{equation}\label{invariance_P}
	P(x,c\la) = P(\sqrt{c}x,\la)\,;
\end{equation}
cf.\ Lemma \ref{lem:invariance}. This invariance property lies at the heart of the definition of a characteristic length.
\end{remark}

Although $P$ and $P'$ are continuous everywhere, in particular at $x=0$, the second derivative of $P$ at $x=0$ is not continuous because 
\begin{subequations}\label{P''(0)}
\begin{align}
	P''(0+) &= -\la\a_+ a_0(1-a_0)(1+\dom-2\dom a_0)<0 \,, \\
	P''(0-) &= \la\a_- a_0(1-a_0)(1+\dom-2\dom a_0)>0\,.
\end{align}
\end{subequations}
Finally, we note that
\begin{equation}
	\frac{d^n P}{dx^n}(\pm\infty)=0 \label{cline_cond1}
\end{equation}
for derivatives of every order $n\ge1$.

For later use, we note that if $-1<\dom<1$, $P'$ can be written as
\begin{equation}\label{P'(x)_dom}
	P'(x) = \begin{cases} 6a_+(1-\dom)^{3/2}\dfrac{Z_+-(1+3\dom)Z_+^{-1}}{[Z_++2(1-3\dom)+(1+3\dom)Z_+^{-1}]^2} &\quad\text{if } x\ge0\,, \vspace{.3cm} \\
			6a_-(1+\dom)^{3/2}\dfrac{Z^{-1}-(1-3\dom)Z_-}{[(1-3\dom)Z_- +2(1+3\dom)+Z_-^{-1}]^2} &\quad\text{if } x<0\,. 
	\end{cases}
\end{equation}

In the absence of dominance ($h=0$), we obtain the well known clinal solution (Haldane 1948, Nagylaki 1976)
\begin{equation}\label{P(x)}
	P(x) = 
	\begin{cases}
			-\dfrac12 + \dfrac32 \tanh^2[\tfrac12(\xix_++\ln A_+)]  &\; \text{if } x\ge0\,,\vphantom{\biggl[} \\
			 \hphantom{-}\dfrac32 - \dfrac32 \tanh^2[\tfrac12(\xix_--\ln A_-)]  &\; \text{if } x<0\,,\vphantom{\biggl[} 
	\end{cases}
\end{equation}
where 
\begin{subequations}\label{abbrevs_A}
\begin{equation}\label{A_pm}
	\ln A_+ = 2\tanh^{-1}\sqrt{\frac{1+2a_0}{3}} \,,\quad \ln A_- = 2\tanh^{-1}\sqrt{\frac{3-2a_0}{3}}\,, 
\end{equation}
and $a_0$ is the unique solution in $(0,1)$ of the cubic equation
\begin{equation}\label{get_a0}
	3a_0^2-2a_0^3 = \frac{\a_+}{\a_+ + \a_-} = \frac{a_+^2}{a_+^2+a_-^2}\,.
\end{equation}
\end{subequations}
We observe from \eqref{get_a0} that $a_0=\tfrac12$ if and only if $a_+=a_-$, and $a_0>\tfrac12$ if and only if $a_+>a_-$.

We refer to Slatkin (1973) and Nagylaki (1975, 1976) for related results on other specific environmental functions $\a(x)$, and also on the treatment of barriers to gene flow. Conley (1975) proved existence of clines on $\Reals$ for a large class of fitness functions, including the step function treated above. As noted by Nagylaki (1975), his proof applies to arbitrary dominance. Fife and Peletier (1977) proved existence, uniqueness, and global asymptotic stability of clines under rather general assumptions about selection. Although they assume that $\a'(x)$ is uniformly bounded, the proof of their local stability result (their Theorem 3) applies without this assumption if the cline exists and is unique. This is the case in the present model, whence a form of local asymptotic stability of the single-locus cline $P(x)$ is established if $-1\le\dom\le1$. I am grateful to Dr.\ Linlin Su for pointing this out. Whether their global stability result, Theorem 4, applies seems to be open.

\section{Strong recombination}\label{sec:strong_rec}
Now we return to the two-locus problem \eqref{eq:ABD_add} with \eqref{step_functions} and assume that
\begin{equation}\label{ep}
	\ep=1/\rh
\end{equation}
is sufficiently small. Our aim is to derive a first-order approximation in $\ep$ for the two-locus cline, i.e., for the stationary solution that satisfies \eqref{constraints_twoloc} and \eqref{boundary_cond1a}. Under our assumptions \eqref{step_functions} about the step environment, such an equilibrium solution is expected to exist whenever both single-locus problems have a cline, because linkage will lead to additional, indirect, selection on each locus, so that the loci will reinforce each other and the cline will persist. Because we assume the unbounded domain $\Reals$, the single-locus clines, $P$ and $Q$, exist always and are uniquely determined. The main result is Theorem \ref{thm:p(x)}. In population-genetic terms, this theorem provides the quasi-linkage-equilibrium approximation to the two-locus cline. 

\subsection{Approximation of the two-locus cline}\label{sec:approx_twoloc}
We write $P$ and $Q$ for the single-locus clines in the allele-frequencies of $A$ and $B$, respectively. They are obtained from \eqref{P(x)_dom}. For given $\ep$, we write $(p_A,p_B,D)$ for the cline solution of \eqref{eq:ABD_add}. We want to determine solutions of the form
\begin{subequations}\label{approx_eps}
\begin{equation}\label{approx_eps1}
	p_A = p_A^{(\ep)} + O(\ep^2)\,, \quad p_B = p_B^{(\ep)} + O(\ep^2)\,, \quad D = D^{(\ep)} + O(\ep^2)\,, 
\end{equation}
where
\begin{equation}\label{approx_eps2}
	p_A^{(\ep)} = P + \ep p\,, \quad p_B^{(\ep)} = Q + \ep q \,, \quad D^{(\ep)} = \ep d\,. 
\end{equation}
\end{subequations}
Here, the limit $\ep\to0$ is assumed to hold uniformly in $x\in\Reals$ for $(p_A,p_B,D)$ and for its first and second derivatives, with the restriction that at $x=0$ only left and right second derivatives exist; cf.\ \eqref{P''(0)}. From the boundary conditions \eqref{boundary_cond1a} and the fact that $P'(\pm\infty)=Q'(\pm\infty)=0$, we obtain the boundary conditions 
\begin{equation}\label{boundary_cond_pqd}
	p'(\pm\infty)=q'(\pm\infty)=d'(\pm\infty)=0\,.
\end{equation}

Before formulating our main result, we need additional notation and definitions. Let
\begin{subequations}\label{IpIm_gen}
\begin{align}
	I_+(x) &= \int_x^\infty [P'(y)]^2 Q'(y)[1+\dom_B-2\dom_B Q(y)]\, dy\,, \\
	I_-(x) &= \int_{-\infty}^x [P'(y)]^2 Q'(y)[1+\dom_B-2\dom_B Q(y)]\, dy\,.
\end{align}
\end{subequations}
We recall from Section \ref{sec:one_locus} that $P'(x)>0$ and $Q'(x)>0$ on $\Reals$. Therefore, $I_+(x)>0$ and $I_-(x)>0$ on $\Reals$. These integrals exist because $P'(y)$ and $Q'(y)$ decay rapidly as $y\to\pm\infty$ (see Theorem \ref{thm:1-locus_cline}, and Section \ref{sec:asymptotics} for asymptotic results). However, they can be computed explicitly only in special cases (Section \ref{sec:explicit}).

In addition to the abbreviations \eqref{abbrevs_Adom}, we introduce
\begin{equation}\label{b_pm}
	b_+ = \sqrt{\la\be_+}\,, \quad b_- = \sqrt{\la\be_-}\,. 
\end{equation}

\begin{theorem}\label{thm:p(x)}
Assume that for sufficiently small $\ep>0$, \eqref{eq:ABD_add} admits a two-locus cline of the form \eqref{approx_eps}. Then

(i) $d(x)$ is given by
\begin{equation}\label{D_ep}
	 d(x) = 2P'(x) Q'(x)\,.
\end{equation}

(ii) $p(x)$ is the unique continuously differentiable solution on $\Reals$ of the linear problem
\begin{gather}
	p''(x) + \la \a(x)[1-2P(x)+\dom_A(1-6P(x)+6P(x)^2)] p(x)  \notag\\
		+ 2\la \be(x) P'(x)Q'(x)[1+\dom_B-2\dom_B Q(x)] = 0\,. \label{p''inhomog}
\end{gather}

(iii) $p(x)$ is given by
\begin{equation}\label{p(x)}
	p(x) = \begin{cases}
				P'(x) k_+(x) \quad \text{if } x\ge0\,, \\
				P'(x) k_-(x)\quad \text{if } x\le0\,,
			\end{cases}
\end{equation}
where
\begin{subequations}\label{define_k_pm}
\begin{alignat}{2}
	k_+(x) &= k_0 + 2b_+^2\,\int_0^x \frac{I_+(y)}{[P'(y)]^2}\,dy&&\quad\text{if } x\ge0\,, \label{define_k_p}\\
	k_-(x) &= k_0 - 2b_-^2\,\int_x^0 \frac{I_-(y)}{[P'(y)]^2}\,dy &&\quad\text{if } x\le0\,,
\end{alignat}
\end{subequations}
and
\begin{equation}\label{k_0}
	k_0 = \frac{2\sqrt3[b_+^2 I_+(0) - b_-^2 I_-(0)]}{a_0(1-a_0)(1+\dom_A-2\dom_A a_0)a_+a_-\sqrt{a_+^2 + a_-^2}}\,.
\end{equation}

(iv) The first-order term $q(x)$ in the expansion of $p_B(x)$ is of analogous form.
\end{theorem}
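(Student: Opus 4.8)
The plan is to carry out a standard singular-perturbation expansion in $\ep = 1/\rh$, substituting the ansatz \eqref{approx_eps} into the stationary version of \eqref{eq:ABD_add} and collecting powers of $\ep$. First I would look at \eqref{eq:ABD_D} at stationarity. Since $D = \ep d$ and $\rh = 1/\ep$, the term $\rh D = d$ is $O(1)$, while $D'' = \ep d''$, the cross term $2p_A'p_B' \to 2P'Q' + O(\ep)$, and the selection-coupling term is $O(\ep)$. Balancing the $O(1)$ terms gives $0 = 2P'Q' - d$, i.e. $d = 2P'Q'$, which is part (i). Then substituting $p_A = P + \ep p$ into \eqref{eq:ABD_a} at stationarity, using that $P$ already solves \eqref{1-locus ODE}, and collecting the $O(\ep)$ terms yields the linear inhomogeneous ODE for $p$: the $\la\a p_A(1-p_A)\vth_A$ term linearizes (with $\vth_A = 1+\dom_A - 2\dom_A p_A$) around $P$, producing the coefficient $\la\a(x)[1-2P+\dom_A(1-6P+6P^2)]$ multiplying $p$, and the $\la\be\vth_B D = \la\be \vth_B \ep d = \ep\cdot 2\la\be P'Q'[1+\dom_B - 2\dom_B Q]$ term supplies the inhomogeneity. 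That is exactly \eqref{p''inhomog}, giving part (ii); the boundary conditions \eqref{boundary_cond_pqd} follow from \eqref{boundary_cond1a} as already noted.

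For part (iii) the key observation is that the homogeneous version of \eqref{p''inhomog} has $P'(x)$ as an explicit solution: differentiating \eqref{1-locus ODE} with respect to $x$ shows $P''' + \la\a(x)\,\partial_P[P(1-P)(1+\dom-2\dom P)]\,P' = 0$ away from $x=0$, and $\partial_P[P(1-P)(1+\dom_A-2\dom_A P)] = 1 - 2P + \dom_A(1-6P+6P^2)$, which is precisely the coefficient of $p$ in \eqref{p''inhomog}. So I would use reduction of order / variation of parameters with the known solution $P'$. Writing $p = P' \cdot k$, the equation for $k$ becomes $(P'^2 k')' = -2\la\be(x)[P']^2 Q' [1+\dom_B-2\dom_B Q]$ on each half-line (the $P'^2$ factor being the Wronskian-type weight). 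Integrating once on $x\ge 0$ gives $[P'(x)]^2 k'(x) = \text{const} + 2\la\be_+ \int_x^\infty [P']^2 Q'[1+\dom_B-2\dom_B Q]\,dy = \text{const} + 2\la\be_+ I_+(x)$; the constant is forced to vanish by the boundary condition $p'(+\infty)=0$ together with the decay of $P'$ and its derivatives (a small argument using \eqref{cline_cond1} and the rapid decay from Theorem \ref{thm:1-locus_cline}), leaving $k'(x) = 2b_+^2 I_+(x)/[P'(x)]^2$ since $b_+^2 = \la\be_+$. Integrating again from $0$ yields \eqref{define_k_p}; the analogous computation on $x\le 0$, using $p'(-\infty)=0$, gives the second line of \eqref{define_k_pm}. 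Finally the single remaining constant $k_0 = k(0)$ is pinned down by requiring $p'(0)$ to be continuous (equivalently, that $p_A'$ match at the step): computing $p'(0\pm) = P''(0\pm)k_0 + P'(0)k'(0\pm)$ from the two sides, using the explicit values $P'(0)$ from \eqref{P'(0)}, $P''(0\pm)$ from \eqref{P''(0)}, $a_+^2+a_-^2 = \la(\a_++\a_-)$, and $k'(0\pm) = 2b_\pm^2 I_\pm(0)/[P'(0)]^2$, and solving the resulting linear equation for $k_0$ gives \eqref{k_0} after simplification.

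For part (iv), $q$ satisfies the same type of linear inhomogeneous problem as $p$ with the roles of the two loci interchanged — $P\leftrightarrow Q$, $\a\leftrightarrow\be$, $\dom_A\leftrightarrow\dom_B$, $a_\pm\leftrightarrow b_\pm$ — obtained by substituting $p_B = Q + \ep q$ into \eqref{eq:ABD_b} and using $D = 2\ep P'Q'$; the homogeneous solution is now $Q'(x)$, so the identical reduction-of-order argument applies verbatim and produces a formula of the same shape with $I_\pm$ replaced by $\tilde I_\pm(x) = \int [Q']^2 P'[1+\dom_A - 2\dom_A P]$ and $a_0$ by $Q(0)$.

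I expect the main obstacle to be the bookkeeping in part (iii): justifying rigorously that the integration constant in $[P']^2 k' = \text{const} + 2b_+^2 I_+$ must be zero (this needs the uniform-in-$x$ decay of $p$ and $p'$ assumed in the setup, combined with the asymptotics of $P'$, and must be done carefully because $[P'(x)]^{-2}$ blows up as $x\to\infty$ while $I_+(x)\to 0$ — so one needs the ratio $I_+(x)/[P'(x)]^2$ to be integrable, which ultimately rests on the asymptotic estimates of Section \ref{sec:asymptotics}), and then the algebraic simplification that turns the continuity-of-$p'$ condition into the clean closed form \eqref{k_0}. Everything else is routine linearization and variation of parameters.
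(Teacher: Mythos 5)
Your proposal is correct and follows essentially the same route as the paper: balance the $O(1)$ terms in the stationary $D$-equation to get $d=2P'Q'$, linearize the $p_A$-equation around $P$ to obtain \eqref{p''inhomog}, use the fact that $P'$ solves the homogeneous equation to reduce the order via $p=P'k_\pm$, fix the integration constant in $[P']^2k_\pm'=\mathrm{const}+2b_\pm^2 I_\pm$ by boundedness/decay at $\pm\infty$, and determine $k_0$ from continuity of $p'$ at $x=0$ using \eqref{P'(0)} and \eqref{P''(0)}. No substantive differences from the paper's argument.
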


\begin{proof}
To deduce the stationary solution, we equate the right-hand sides of \eqref{eq:ABD_add} to zero. 

(i) If $\rh=1/\ep$ and \eqref{approx_eps} holds, then the right-hand side of \eqref{eq:ABD_D} becomes $2P'(x)Q'(x) - d(x) + O(\ep^2)$, which yields \eqref{D_ep}.

(ii) We obtain \eqref{p''inhomog} by substituting \eqref{approx_eps} and \eqref{D_ep} into the right-hand side of \eqref{eq:ABD_a} and observing \eqref{1-locus problem}. The solution of \eqref{p''inhomog} is unique because the homogeneous equation 
\begin{equation}\label{homog_P'}
	p'' + \la \a(x)[1-2P(x)+\dom_A(1-6P(x)+6P(x)^2)] p  = 0
\end{equation}
has, up to constant multiples, the unique solution $P'(x)$. That the latter is a solution follows immediately by differentiating \eqref{1-locus problem} and observing that $\a(x)$ is the step function \eqref{step_function_alpha}.

(iii) Because $P'(x)$ is the appropriate solution of the homogeneous problem \eqref{homog_P'}, the solution $p(x)$ of the inhomogeneous problem can be obtained by variation of constants. Therefore, by substituting \eqref{p(x)} into \eqref{p''inhomog}, using that $P'$ solves \eqref{homog_P'}, and recalling \eqref{b_pm}, we obtain
\begin{equation}\label{k_pm''}
	P'(x)k_\pm''(x) + 2P''(x)k_\pm'(x) \pm 2b_\pm^2 P'(x)Q'(x)[1+\dom_B-2\dom_B Q(x)] = 0\,.
\end{equation}
If we view \eqref{k_pm''} as a set of two first-order equations for $k_\pm'$, one defined for $x\ge0$ and the other for $x<0$, we find by straightforward calculations that 
\begin{equation}\label{k_pm'(x)}
	k_\pm'(x) = 2b_\pm^2 \frac{I_\pm(x)}{[P'(x)]^2}
\end{equation}
is the only special solution that is potentially integrable (the general solution is of the form $k_\pm'(x) +c_\pm/P'(x)$; it diverges if $c_\pm\ne0$). Therefore, integration yields \eqref{define_k_pm}, and $k_+$ and $k_-$ must have the same constant $k_0$ because we require that $p(x)$ be continuous at $x=0$ (hence, everywhere).

It remains to determine $k_0=k_+(0)=k_-(0)$. Because $p(x)$ needs to be continuously differentiable at $x=0$, and because $P'(x)$ is continuous, we require $p'(0+)=p'(0-)$. Differentiation of \eqref{p(x)} shows that this holds if and only if
\begin{equation}
	 P''(0+)k_0 + P'(0)k_+'(0+) = P''(0-)k_0 + P'(0)k_-'(0-)\,.
\end{equation}
Using \eqref{k_pm'(x)}, we obtain 
\begin{equation}\label{k_pm(0)}
	k_0 = \frac{2[b_+^2 I_+(0) - b_-^2 I_-(0)]}{P'(0)[P''(0-)-P''(0+)]}
\end{equation}
and, after applying \eqref{P'(0)} and \eqref{P''(0)}, \eqref{k_0}.

(iv) is obvious.
\end{proof}

%\begin{remark}\label{rem:d_and_D}\rm
The approximation \eqref{D_ep} for the linkage disequilibrium was already derived by Barton and Shpak (2000) in a multilocus context.
Since $P(x)$ and $Q(x)$ are strictly monotone increasing, the leading term $\ep d(x) = 2\ep P'(x)Q'(x)$ of $D(x)$ is strictly positive on $\Reals$. In addition, because $P'(x)$ is positive, maximized at $x=0$, and decays monotonically to zero as $x\to\pm\infty$, the same properties are shared by $d(x)$. This does not imply that the true equilibrium solution $D(x)$ is maximized at $x=0$. If selection is not symmetric, i.e., if $a_+\ne a_-$ or $b_+\ne b_-$, then $D(x)$ will be maximized only in an $\ep$-neighborhood of 0 (see also Fig.\ \ref{fig:D_perturb_accur}). We note that $D'(0)$ exists and is continuous for every $\ep>0$, whereas $d(x)$ has different one-sided derivatives at $x=0$ for every $\ep>0$, as has $D'(x)$.
%\end{remark}

The integrals appearing in \eqref{define_k_pm} can be computed explicitly only in special cases (Section \ref{sec:explicit}). However, using integration by parts they can be reduced to expressions requiring only one-fold integration (Appendix \ref{app:one-fold_int}). Therefore, it is straightforward to compute $p(x)$ numerically. Asymptotic expansions are derived in Section \ref{sec:asymptotics}.

\begin{remark}\label{rem:kappa_k}\rm
Below, we show that the integrals $\int_0^\infty \frac{I_+(y)}{[P'(y)]^2}\,dy$ and $\int_{-\infty}^0 \frac{I_-(y)}{[P'(y)]^2}\,dy$ are finite; see \eqref{Ipm/P'_exists}. 
Therefore, we observe from \eqref{define_k_pm} that $k_+$ and $k_-$ are bounded and strictly monotone increasing on their domain. Hence, the limits 
\begin{subequations}\label{kpm_limits}
\begin{align}
	\ka_+ &= \lim_{x\to\infty} k_+(x) = k_0 + 2b_+^2\,\int_0^\infty \frac{I_+(y)}{[P'(y)]^2}\,dy\,, \\
	\ka_- &= \lim_{x\to-\infty} k_-(x) = k_0 - 2b_-^2\,\int_{-\infty}^0 \frac{I_-(y)}{[P'(y)]^2}\,dy
\end{align}
\end{subequations}
exist and satisfy $\ka_- < \ka_+$. Therefore, we can rewrite \eqref{define_k_pm} as
\begin{subequations}\label{k_pm_kappa}
\begin{alignat}{2}
	k_+(x) &= \ka_+ - 2b_+^2\,\int_x^\infty \frac{I_+(y)}{[P'(y)]^2}\,dy&&\quad\text{if } x\ge0\,, \\
	k_-(x) &= \ka_- + 2b_-^2\,\int_{-\infty}^x \frac{I_-(y)}{[P'(y)]^2}\,dy &&\quad\text{if } x\le0\,.
\end{alignat}
\end{subequations}

Although $p(x)$ is continuously differentiable at $x=0$, $k_+'(0+)\ne k_-'(0-)$. This follows directly from \eqref{k_pm'(x)}, but also from the structure of $p$ and the fact that $P'$ is not differentiable at $x=0$.
\end{remark}

\subsection{Properties of the approximate two-locus cline}\label{sec:properties_twoloc}
The following corollary summarizes some simple but important consequences of Theorem \ref{thm:p(x)}.

\begin{corollary}\label{Cor:prop_p(0)_gen}
{\rm (i)} We have
\begin{equation}\label{p'(0)_gen}
		p'(0) = \frac{2\sqrt{3}[a_-^2b_+^2 I_+(0) + a_+^2b_-^2 I_-(0) ]}{a_+a_-\sqrt{a_+^2+a_-^2}} > 0\,.
\end{equation}
Therefore, the cline $p_A=P+\ep p + O(\ep^2)$ gets steeper in its center as $\ep$ increases from 0, i.e., if linkage between the two loci gets tighter.

{\rm (ii)} The sign of
\begin{equation}\label{p(0)_gen}
	p(0) = \frac{2[b_+^2 I_+(0) - b_-^2 I_-(0)]}{a_0(1-a_0)(1+\dom_A-2\dom_A a_0)(a_+^2 + a_-^2)}
\end{equation}
can be positive, negative, or zero.

{\rm (iii)} $p(x)$ can be positive on $(-\infty,\infty)$, negative, or change sign once.

{\rm (iv)} $p(x)$ satisfies the invariance property
\begin{equation}\label{invariance_p}
	p(x,c\la) = c p(\sqrt{c}x,\la)\,.
\end{equation}
\end{corollary}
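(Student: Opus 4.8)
The plan is to read off all four statements from the explicit formulas for $p(x)$ given in Theorem \ref{thm:p(x)}, namely \eqref{p(x)}, \eqref{define_k_pm}, and \eqref{k_0}, together with the sign information $P'>0$, $Q'>0$, $0<a_0<1$, $|\dom_A|\le1$, and $I_\pm(0)>0$ already established in the excerpt. For part (i), I would differentiate \eqref{p(x)} at $x=0$: since $p(x)=P'(x)k_\pm(x)$ on the two half-lines, $p'(0)=P''(0)k_0+P'(0)k_\pm'(0)$, and because $p$ is continuously differentiable the two one-sided expressions agree. The cleanest route is to use the right-hand version $p'(0)=P''(0+)k_0+P'(0)k_+'(0+)$, substitute $k_0$ from \eqref{k_0}, $k_+'(0+)=2b_+^2 I_+(0)/[P'(0)]^2$ from \eqref{k_pm'(x)}, and $P''(0+)=-\la\a_+ a_0(1-a_0)(1+\dom_A-2\dom_A a_0)$ from \eqref{P''(0)}, with $P'(0)$ from \eqref{P'(0)}. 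After simplification (using $\la\a_\pm=a_\pm^2$ and $P'(0)=\frac{1}{\sqrt6}\sqrt{H(a_+^2,a_-^2)}$), the term in $a_0(1-a_0)(1+\dom_A-2\dom_A a_0)$ coming from $P''(0+)k_0$ combines with the $k_+'(0+)$ term to yield the symmetric expression \eqref{p'(0)_gen}; positivity is then immediate since all of $a_\pm,b_\pm,I_\pm(0)$ are positive. The interpretive sentence about steepening follows because $p_A'(0)=P'(0)+\ep\,p'(0)+O(\ep^2)$ with $p'(0)>0$.

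For part (ii), formula \eqref{p(0)_gen} is just $p(0)=P'(0)k_0$, which is obtained by inserting \eqref{k_0} and \eqref{P'(0)} and simplifying $\sqrt{\la}/\sqrt6\cdot\sqrt{H(a_+^2,a_-^2)}=\frac{1}{\sqrt6}\sqrt{2a_+^2a_-^2/(a_+^2+a_-^2)}$ against the $a_+a_-\sqrt{a_+^2+a_-^2}$ in the denominator of $k_0$; the constant factor $\sqrt3$ cancels, leaving the denominator $a_0(1-a_0)(1+\dom_A-2\dom_A a_0)(a_+^2+a_-^2)$. Since the denominator is strictly positive (again by $0<a_0<1$, $|\dom_A|\le1$), the sign of $p(0)$ equals the sign of $b_+^2 I_+(0)-b_-^2 I_-(0)$, and I would note that this quantity vanishes in the symmetric case $a_+=a_-$, $b_+=b_-$ (where $I_+(0)=I_-(0)$ by the reflection symmetry $P(x)=1-P(-x)$, $Q(x)=1-Q(-x)$ of Theorem \ref{thm:1-locus_cline}) and can be made of either sign by perturbing $\be_\pm$ or $\a_\pm$; one concrete unequal example in each direction suffices. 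Part (iii) then follows from \eqref{p(x)} and the monotonicity of $k_\pm$ (Remark \ref{rem:kappa_k}): $p(x)=P'(x)k_\pm(x)$ has the sign of $k_\pm(x)$, and since $P'>0$ while $k_-$ increases from $\ka_-$ to $k_0$ and $k_+$ increases from $k_0$ to $\ka_+$, one checks the cases $k_0\ge\ka_+>\ka_-$ impossible—rather, if $k_0>0$ and $\ka_->0$ then $p>0$ throughout; if $k_0<0$ and $\ka_+<0$ then $p<0$ throughout; and if $\ka_-<0<\ka_+$ with the sign of $k_0$ determining on which side the zero lies, $p$ changes sign exactly once. All three configurations are realizable by choosing the selection parameters appropriately, which I would illustrate with a parameter sketch rather than a full computation.

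For part (iv), the invariance $p(x,c\la)=c\,p(\sqrt c x,\la)$ is a scaling bookkeeping exercise. From Remark \ref{rem:width} we already have $P(x,c\la)=P(\sqrt c x,\la)$, hence $P'(x,c\la)=\sqrt c\,P'(\sqrt c x,\la)$, and the same for $Q$; also $a_0$ is independent of $\la$. I would track how each ingredient of $k_\pm$ scales: $b_\pm^2=\la\be_\pm$ scales by $c$; the integrands $I_\pm(x)/[P'(x)]^2$ transform, after the substitution $y=\sqrt c\,\tilde y$ in the defining integrals and using that $[P']^2Q'$ picks up $c^{3/2}$ while $dy$ picks up $c^{-1/2}$, so $I_\pm(x,c\la)=c\,I_\pm(\sqrt c x,\la)$ and $I_\pm/[P']^2$ is invariant while the outer $dy$ contributes $c^{-1/2}$; combined with the $2b_\pm^2$ prefactor this shows $k_\pm(x,c\la)=\sqrt c\,k_\pm(\sqrt c x,\la)$, and likewise $k_0(c\la)=\sqrt c\,k_0(\la)$ from \eqref{k_0} (numerator scales by $c\cdot c=c^2$ via $b_\pm^2 I_\pm(0)$... rather $b_\pm^2$ by $c$ and $I_\pm(0)$ by $c$, giving $c^2$; denominator $a_+a_-\sqrt{a_+^2+a_-^2}$ scales by $c^{3/2}$, net $c^{1/2}$). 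Then $p(x,c\la)=P'(x,c\la)k_\pm(x,c\la)=\sqrt c P'(\sqrt c x,\la)\cdot\sqrt c k_\pm(\sqrt c x,\la)=c\,p(\sqrt c x,\la)$. The main obstacle is purely organizational: keeping the various powers of $c$ straight through the nested integrals in \eqref{define_k_pm}–\eqref{k_0}; everything else is a direct substitution into formulas already proved.
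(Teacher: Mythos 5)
Your proof is correct, and for parts (i)--(iii) it is essentially the paper's argument: the paper likewise obtains \eqref{p'(0)_gen} and \eqref{p(0)_gen} by combining Theorem \ref{thm:p(x)} with \eqref{P'(0)} and \eqref{P''(0)} (your explicit computation $p'(0)=P''(0+)k_0+P'(0)k_+'(0+)$ with $k_+'(0+)$ from \eqref{k_pm'(x)} does check out and reproduces the stated symmetric form), and it deduces (iii) exactly as you do, from the strict monotonicity and boundedness of $k_\pm$ (Remark \ref{rem:kappa_k}) together with $P'>0$, citing Figure \ref{fig:p_perturb} for the realizability of the three sign configurations rather than constructing examples --- the same level of detail as your ``parameter sketch.'' Where you genuinely diverge is part (iv): you propagate the scaling $\la\mapsto c\la$ through every ingredient of the explicit representation ($P'$, $I_\pm$, the nested integrals in \eqref{define_k_pm}, and $k_0$), whereas the paper simply observes that the defining linear ODE \eqref{p''inhomog} is itself invariant under $x\mapsto\sqrt{c}x$, $p\mapsto cp$ once one uses \eqref{invariance_P} and the fact that $\a$ and $\be$ are step functions with step at $0$, so that uniqueness of the solution gives \eqref{invariance_p} in one line. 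Your bookkeeping is correct ($I_\pm(x,c\la)=cI_\pm(\sqrt{c}x,\la)$, $k_0(c\la)=\sqrt{c}\,k_0(\la)$, hence $k_\pm(x,c\la)=\sqrt{c}\,k_\pm(\sqrt{c}x,\la)$ and $p(x,c\la)=cp(\sqrt{c}x,\la)$), but it is more laborious and ties the invariance to the particular closed form, while the ODE argument makes clear that the scaling law is structural and would survive any generalization of the inhomogeneous term.
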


\begin{proof}
The expressions for $p(0)$ and $p'(0)$ follow straightforwardly from Theorem \ref{thm:p(x)}, \eqref{P'(0)} and \eqref{P''(0)}. Because the integrals $I_\pm(0)$ are strictly positive, $p'(0)>0$. 

(iii) Remark \ref{rem:kappa_k} shows that $k_\pm(x)$ is strictly monotone increasing and bounded. Whether $\ka_+$ and $\ka_-$ have the same or different sign depends on the parameters (see Fig.\ \ref{fig:p_perturb}). Because $P'$ is strictly positive on $\Reals$, statement (iii) now follows from \eqref{p(x)}.

(iv) The invariance property \eqref{invariance_p} follows easily from \eqref{p''inhomog} and \eqref{invariance_P} because $\a$ and $\be$ are step functions with step at $x=0$.
\end{proof}

\begin{figure}[t]
	\centering
	\includegraphics[width=1.0\textwidth]{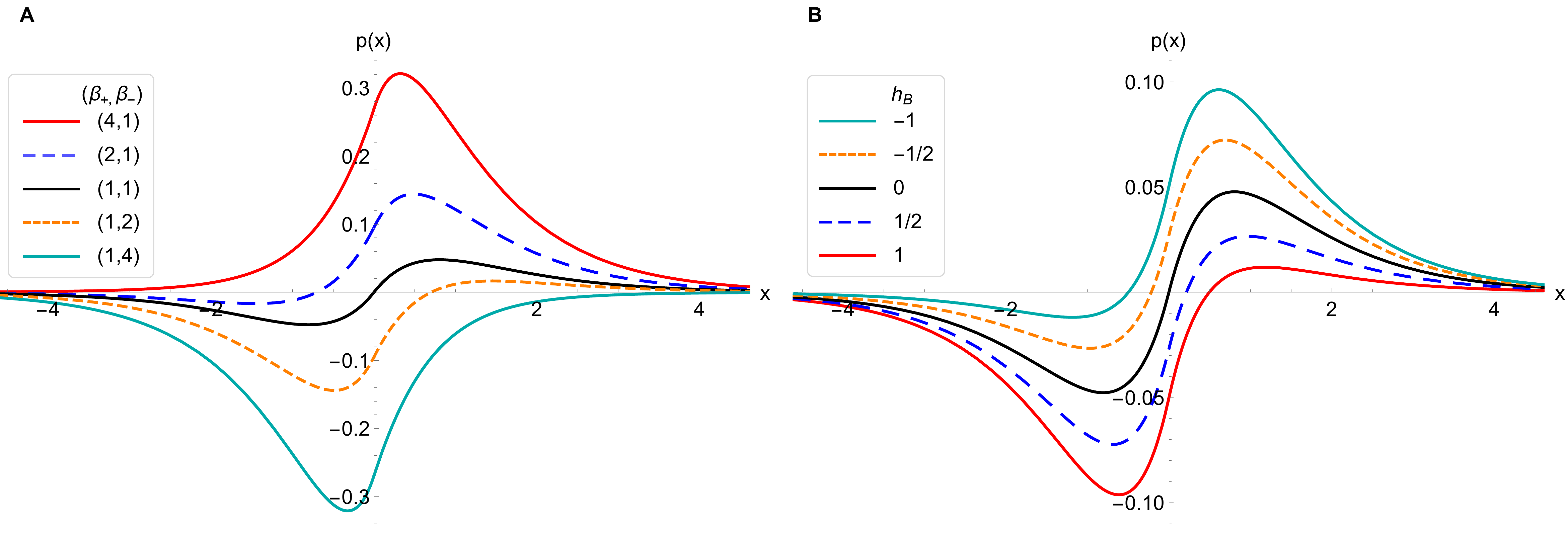}
\caption{The first-order perturbation $p(x)$ as a function of $x$ for different parameter combinations. {\bf A}. The selection intensities at locus $\B$ in the two environments, $(\be_+,\be_-)$, are varied for fixed $\a_+=\a_-=1$. Dominance is absent ($h_A=h_B=0$). {\bf B}. The selection intensities at loci $\A$ and $\B$ are fixed ($\a_+=\a_-=\be_+=\be_-=1$), and the strength and sign of dominance $\dom_B$ at locus $\B$ are varied. Dominance is absent at locus $\A$ ($h_A=0$). In all cases, $\la=1$. The perturbation $p(x)$ is obtained from \eqref{p(x)}, \eqref{define_k_pm}, and \eqref{k_0} by numerical integration using Lemma \ref{p(x)_num_int}. We note that in panel A, $p(x)$ does not change sign if $(\be_+,\be_-)=(4,1)$ or $(\be_+,\be_-)=(1,4)$. In the first case, we have $\ka_-\approx0.09$, $\ka_+\approx1.32$; in the second case, $\ka_-\approx-1.32$, $\ka_+\approx-0.09$; cf.\ Corollary \ref{Cor:prop_p(0)_gen}(iii).}\label{fig:p_perturb}
\end{figure}

In Section \ref{sec:one_locus}, we have shown that the slope of the cline in its center, $P'(0)$, is independent of dominance. This is no longer so in the two-locus case, because $p'(0)$ in \eqref{p'(0)_gen} depends on $I_+(0)$ and $ I_-(0)$, and both integrals depend on $\dom_A$ and $\dom_B$; see also \eqref{p'(0)_equivalent_dom}.

Because $P$ and $p$ satisfy the invariance properties \eqref{invariance_P} and \eqref{invariance_p}, respectively, it follows immediately from \eqref{approx_eps2} that $p_A^{(\ep)}=p_A^{(1/\rh)}$ satisfies the invariance property 
\begin{equation}
	p_A^{(\frac{1}{c_2\rh})}(x,c_1\la) = p_A^{(\frac{c_1}{c_2\rh})}(\sqrt{c_1}x,\la) = p_A^{(\frac{1}{\rh})}(\sqrt{c_2}x,\frac{c_1}{c_2}\la)\,.
\end{equation}
This parallels \eqref{invariance}, as is to be expected.

For equivalent loci, an explicit representation of $p(x)$ will be derived in Section \ref{sec:explicit}. Figure \ref{fig:p_perturb} displays $p(x)$ for various parameter choices. All cases clearly show that $p'(0)>0$ \eqref{p'(0)_gen}. However, also each of the three possibilities in statements (ii) and (iii) of Corollary \ref{Cor:prop_p(0)_gen} are exemplified. The accuracy of the approximation for the cline $(p_A,p_B,D)$ derived in Theorem \ref{thm:p(x)} will be investigated numerically in Section \ref{sec:numerics}.

\subsection{Asymptotic properties}\label{sec:asymptotics}
Here, we derive asymptotic properties of $P(x)$, $p(x)$, and $p_A(x)$ for $x\to\pm\infty$. We recall that two functions $f(x)$ and $g(x)$ are asymptotically equal as $x\to\infty$, i.e., $f(x)\sim g(x)$ as $x\to\infty$, if $\lim_{x\to\infty} f(x)/g(x)=1$, and analogously if $x\to-\infty$.

We start with the single-locus case and assume $-1<\dom_A<1$ and $-1<\dom_B<1$. From \eqref{P(x)_dom}, we obtain immediately the asymptotic equalities
\begin{subequations}\label{P(x)_asymp}
\begin{alignat}{2}
		1-P(x)&\sim 6(1-\dom_A)A_+^{-1}e^{-\xix_+} &&\quad\text{as } x\to+\infty\,, \\
		P(x) &\sim 6(1+\dom_A)A_- e^{\xix_-} 	&&\quad\text{as } x\to-\infty\,,
\end{alignat}
\end{subequations}
and, after differentiation of \eqref{P(x)_dom},
\begin{equation}\label{P'(x)_asymp}
	P'(x) \sim 6a_\pm (1\mp\dom_A)^{3/2} A_\pm^{\mp1} e^{\mp\xix_\pm} \quad\text{as } x\to\pm\infty\,.
\end{equation}
As a trivial consequence, we note that
\begin{subequations}\label{P(x)_asymp_P'(x)}
\begin{alignat}{2}
		1-P(x)&\sim \frac{1}{a_+\sqrt{1-\dom_A}}P'(x) &&\quad\text{as } x\to+\infty\,, \label{P(x)_asymp_P'(x)_a}\\
		P(x) &\sim \frac{1}{a_-\sqrt{1+\dom_A}}P'(x) 	  &&\quad\text{as } x\to-\infty\,.  \label{P(x)_asymp_P'(x)_b}
\end{alignat}
\end{subequations}

In analogy to \eqref{abbrevs_Adom}, we introduce the abbreviations
\begin{subequations}\label{abbrevs_B}
\begin{gather}
	B_+ = F_+(b_0,\dom_B) \,,\quad B_- = F_-(b_0,\dom_B) \,, \label{B_pm_dom} \\
	\yix_+ = x b_+\sqrt{1-\dom_B} \quad\text{ if } \dom_B<1\,, \label{yix}\\
	\yix_- = x b_-\sqrt{1+\dom_B}  \quad\text{ if } \dom_B>-1\,, \label{yix-}
\end{gather}
\end{subequations}
where
\begin{equation}\label{b0}
	b_0=Q(0)
\end{equation}
is the unique solution in $(0,1)$ of the quartic equation
\begin{equation}\label{get_b0_dom}
	b_0^2\phi(b_0,\dom_B) = \frac{\be_+}{\be_+-\be_-}\,;
\end{equation}
cf.\ \eqref{get_a0_dom} and \eqref{phi}.

The main result of this section is the following.

\begin{proposition}\label{prop:asymptotic}
Let $-1<\dom_A<1$ and $-1<\dom_B<1$. 
The asymptotic rates of convergence of the marginal allele-frequency cline $p_A$ are proportional to those of the single-locus cline $P$, i.e., 
\begin{subequations}\label{pA_asymp}
\begin{align}
	\lim_{x\to\infty} \frac{1-p_A(x)}{1-P(x)} &= 1 - \ep a_+\ka_+\sqrt{1-\dom_A} \,,\label{pA_asymp_p} \\
	\lim_{x\to-\infty} \frac{p_A(x)}{P(x)} &=  1 + \ep a_-\ka_- \sqrt{1+\dom_A}	\,. \label{pA_asymp_m}
\end{align}
\end{subequations}
In fact, the following stronger result holds:
%\begin{equation}\label{p(x)_asymp}
%	p(x) -\ka_\pm P'(x) \sim   \frac{\mp 2b_\pm\sqrt{1\mp\dom_B}}{2a_\pm\sqrt{1\mp\dom_A}+b_\pm\sqrt{1\mp\dom_B}}\,P'(x) Q'(x) \quad\text{as } x\to\pm\infty \,.
%\end{equation}
\begin{equation}\label{p(x)_asymp}
	\lim_{x\to\pm\infty} \frac{p(x) -\ka_\pm P'(x)}{P'(x) Q'(x)} =  \frac{\mp 2b_\pm\sqrt{1\mp\dom_B}}{2a_\pm\sqrt{1\mp\dom_A}+b_\pm\sqrt{1\mp\dom_B}} \,.
\end{equation}
Asymptotic equalities in terms of exponential functions can be obtained straightforwardly by applying \eqref{P'(x)_asymp}. 
\end{proposition}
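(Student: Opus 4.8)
The plan is to exploit the explicit representation $p=P'k_\pm$ of Theorem~\ref{thm:p(x)}, rewritten via \eqref{k_pm_kappa} in terms of the limiting constants $\ka_\pm$ (Remark~\ref{rem:kappa_k}). For $x\ge0$ this gives
\begin{equation*}
	p(x)-\ka_+P'(x) = -2b_+^2\,P'(x)\int_x^\infty\frac{I_+(y)}{[P'(y)]^2}\,dy\,,
\end{equation*}
and the sign-flipped analogue with $\ka_-$, $b_-$, $I_-$ for $x\le0$; both tail integrals converge (Remark~\ref{rem:kappa_k}) and tend to $0$, so what remains is to extract the exact rate at which they decay.

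For that I would feed in the exponential asymptotics \eqref{P'(x)_asymp} for $P'$ and its $Q$-analogue obtained from \eqref{P(x)_dom} with the abbreviations \eqref{abbrevs_B}. Writing $\ga_\pm=2a_\pm\sqrt{1\mp\dom_A}+b_\pm\sqrt{1\mp\dom_B}$ and using $Q(y)\to1$ (resp.\ $0$), these give
\begin{equation*}
	[P'(y)]^2Q'(y)\bigl[1+\dom_B-2\dom_BQ(y)\bigr]\sim(1\mp\dom_B)\,[P'(y)]^2Q'(y)\sim \mathrm{const}\cdot e^{\mp\ga_\pm y}\quad\text{as }y\to\pm\infty\,.
\end{equation*}
Integrating this eventually positive integrand (here one invokes that $f\sim g>0$ implies $\int f\sim\int g$ when the integrals converge) yields $I_\pm(x)\sim\frac{1\mp\dom_B}{\ga_\pm}[P'(x)]^2Q'(x)$, hence $I_+(y)/[P'(y)]^2\sim\frac{1-\dom_B}{\ga_+}Q'(y)$ and, after one more integration, $\int_x^\infty I_+(y)/[P'(y)]^2\,dy\sim\frac{1-\dom_B}{\ga_+}\bigl(1-Q(x)\bigr)$, with the corresponding statement for the left tail. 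Finally the $Q$-version of \eqref{P(x)_asymp_P'(x)} replaces $1-Q(x)$ by $Q'(x)/(b_+\sqrt{1-\dom_B})$ on the right (and $Q(x)$ by $Q'(x)/(b_-\sqrt{1+\dom_B})$ on the left). Substituting back into the first display produces
\begin{equation*}
	p(x)-\ka_\pm P'(x)\sim\frac{\mp2b_\pm\sqrt{1\mp\dom_B}}{\ga_\pm}\,P'(x)Q'(x)\quad\text{as }x\to\pm\infty\,,
\end{equation*}
which is precisely \eqref{p(x)_asymp}. (The key step $I_\pm(x)\sim\frac{1\mp\dom_B}{\ga_\pm}[P'(x)]^2Q'(x)$ can equivalently be obtained in one stroke by l'H\^opital's rule, differentiating the integral defining $I_\pm$ and using $P''/P'\to\mp a_\pm\sqrt{1\mp\dom_A}$ and $Q''/Q'\to\mp b_\pm\sqrt{1\mp\dom_B}$, which follow from the single-locus ODEs and \eqref{P(x)_asymp_P'(x)}; this variant avoids writing out the explicit constants in \eqref{P'(x)_asymp}.)

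To get \eqref{pA_asymp}, note that $Q'(x)\to0$, so the last display gives $p(x)\sim\ka_\pm P'(x)$; combined with \eqref{P(x)_asymp_P'(x)} this yields $p(x)/(1-P(x))\to a_+\ka_+\sqrt{1-\dom_A}$ as $x\to\infty$ and $p(x)/P(x)\to a_-\ka_-\sqrt{1+\dom_A}$ as $x\to-\infty$. Plugging these into $p_A^{(\ep)}=P+\ep p$ from \eqref{approx_eps2} gives \eqref{pA_asymp_p} and \eqref{pA_asymp_m}; the asymptotic equalities in exponential form then follow by inserting \eqref{P'(x)_asymp}. The computations (algebra of the substitutions, exact constants in \eqref{P'(x)_asymp}) are routine; the only point that genuinely needs care is the passage $f\sim g\Rightarrow\int_x^{\pm\infty}f\sim\int_x^{\pm\infty}g$, which is valid here solely because $P',Q'>0$ on $\Reals$ and the integrals converge. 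The main (if modest) obstacle is keeping signs straight between the two ends: on the left $P'',Q''>0$ and $1+\dom_B-2\dom_BQ\to1+\dom_B$, so a number of signs flip relative to the right, and the superimposed $\mp$/$\pm$ notation of \eqref{p(x)_asymp} is designed exactly to carry this bookkeeping.
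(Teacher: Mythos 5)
Your proposal is correct and follows essentially the same route as the paper: both establish $I_\pm(x)\sim\frac{1\mp\dom_B}{2a_\pm\sqrt{1\mp\dom_A}+b_\pm\sqrt{1\mp\dom_B}}[P'(x)]^2Q'(x)$, integrate $I_\pm/[P']^2$ once more, substitute into \eqref{k_pm_kappa} to get $k_\pm(x)-\ka_\pm\sim\frac{\mp2b_\pm\sqrt{1\mp\dom_B}}{2a_\pm\sqrt{1\mp\dom_A}+b_\pm\sqrt{1\mp\dom_B}}Q'(x)$, and deduce \eqref{pA_asymp} from $k_\pm\to\ka_\pm$ together with \eqref{P(x)_asymp_P'(x)}. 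The only (harmless) difference is presentational: you keep the asymptotics in terms of $P'$, $Q'$, $1-Q$ and the constant $\ga_\pm$ (with an optional l'H\^opital shortcut), whereas the paper writes out the explicit exponential forms $\tilde P'$, $\tilde Q'$, $\tilde I_\pm$ and converts back at the end.
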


\begin{proof}
The proof of \eqref{pA_asymp} is very simple. From \eqref{approx_eps} and \eqref{p(x)}, and after invoking \eqref{kpm_limits} and \eqref{P(x)_asymp_P'(x)_a},
we obtain
\begin{equation*}
	\lim_{x\to\infty}\frac{1-p_A(x)}{1-P(x)} = 1 - \ep\lim_{x\to\infty}\frac{P'(x)k(x)}{1-P(x)} = 1 - \ep\ka_+a_+\sqrt{1-\dom_A}\,,
\end{equation*}
and analogously \eqref{pA_asymp_m}.

To prove \eqref{p(x)_asymp}, we have to work harder. Mimicking \eqref{P'(x)_asymp}, we introduce
\begin{subequations}\label{P'Q'_asymp}
\begin{alignat}{2}
		\tilde P'(x) &= 6a_\pm (1\mp\dom_A)^{3/2} A_\pm^{\mp1} e^{\mp\xix_\pm}   &&\quad\text{if } x\gtrless0 \,,\\
		\tilde Q'(x) &= 6b_\pm (1\mp\dom_B)^{3/2} B_\pm^{\mp1} e^{\mp\yix_\pm}   &&\quad\text{if } x\gtrless0 \,.
\end{alignat}
\end{subequations}
Furthermore, we define
\begin{subequations}
\begin{align}
		\tilde I_+(x) &= \int_x^\infty \left[\tilde P'_+(y)\right]^2\tilde Q'_+(y)[1+\dom_B-2\dom_B \tilde Q_+(y)]\, dy \,, \\
		\tilde I_-(x) &= \int_{-\infty}^x \left[\tilde P'_-(y)\right]^2\tilde Q'_-(y)[1+\dom_B-2\dom_B \tilde Q_-(y)]\, dy \,.
\end{align}
\end{subequations}
Then straightforward integration yields
\begin{equation}\label{Ipm_asymp}
	\tilde I_\pm(x) = \frac{6^3 a_\pm^2b_\pm (1\mp\dom_A)^3(1\mp\dom_B)^{5/2}}{2a_\pm\sqrt{1\mp\dom_A}+b_\pm\sqrt{1\mp\dom_B}}\,A_\pm^{\mp2}B_\pm^{\mp1} e^{\mp(2\xix_\pm + \yix_\pm)}  \quad\text{as } x\to\pm\infty\,.
\end{equation}

Because
\begin{equation}
	I_\pm(x) \sim \tilde I_\pm(x) \quad\text{as } x\to\pm\infty\,,
\end{equation}
we obtain from \eqref{Ipm_asymp} and \eqref{P'Q'_asymp}
\begin{equation}
	I_\pm(x)  \sim \frac{1\mp\dom_B}{2a_\pm\sqrt{1\mp\dom_A}+b_\pm\sqrt{1\mp\dom_B}} [P'(x)]^2 Q'(x) \quad\text{as } x\to\pm\infty\,.
\end{equation}
Therefore,
\begin{equation}\label{Ipm/P'_exists}
	\frac{I_\pm(x)}{[P'(x)]^2} \sim \frac{6b_\pm(1\mp\dom_B)^{5/2}}{2a_\pm\sqrt{1\mp\dom_A}+b_\pm\sqrt{1\mp\dom_B}}B_\pm^{\mp1} e^{\mp \yix_\pm} \quad\text{as } x\to\pm\infty\,
\end{equation}
and
\begin{subequations}\label{**}
\begin{alignat}{2}
	\int_x^\infty \frac{I_+(y)}{[P'(y)]^2}\,dy &\sim \frac{6(1-\dom_B)^2}{2a_+\sqrt{1-\dom_A}+b_+\sqrt{1-\dom_B}}\,B_+^{-1} e^{-\yix_+}   &&\quad\text{as } x\to+\infty\,,\\
	\int_{-\infty}^x \frac{I_-(y)}{[P'(y)]^2}\,dy &\sim \frac{6(1+\dom_B)^2}{2a_-\sqrt{1+\dom_A}+b_-\sqrt{1+\dom_B}}\,B_- e^{\yix_-}  &&\quad\text{as } x\to-\infty\,.
\end{alignat}
\end{subequations}
Substitution into \eqref{k_pm_kappa} yields
\begin{equation}
	k_\pm(x)-\ka_\pm \sim  \frac{\mp 12b_\pm^2(1\mp\dom_B)^2}{2a_\pm\sqrt{1\mp\dom_A}+b_\pm\sqrt{1\mp\dom_B}}\, B_\pm^{\mp1} e^{\mp \yix_\pm} \quad\text{as } x\to\pm\infty\,,
\end{equation}
which can be written as
\begin{equation}
	k_\pm(x) -\ka_\pm \sim  \frac{\mp  2b_\pm\sqrt{1\mp\dom_B}}{2a_\pm\sqrt{1\mp\dom_A}+b_\pm\sqrt{1\mp\dom_B}}\, Q'(x) \quad\text{as } x\to\pm\infty\,.
\end{equation}
Now \eqref{p(x)} establishes \eqref{p(x)_asymp}. 
\end{proof}

The equations \eqref{pA_asymp} may leave the impression that these limits are independent of $h_B$. This is not the case, because $\ka_\pm$ depends both on $h_A$ and $h_B$, and of course on $\a_\pm$ and $\be_\pm$. If $\ka_-<0<\ka_+$, as is the case for loci of equal effects (see eqs.\ \eqref{kappapm_exp_nodom} and Appendix \ref{sec:derive_k_pm_dom}), then \eqref{pA_asymp} shows that the cline $p_A$ in the two-locus system converges faster to its limits 1 or 0 as $x\to\infty$ or $x\to-\infty$ than the single-locus cline $P$. If $\ka_-$ and $\ka_+$ have the same sign (for examples, see Fig.\ \ref{fig:p_perturb}), then convergence may be faster at one end but slower at the other. 

Equations \eqref{kappapm_exp_nodom} and \eqref{kappa_dom_explicit} provide explicit expressions for $\ka_\pm$ for equivalent loci without and with dominance, respectively.

In many of the above asymptotic formulas, the compound constants $A_\pm^{\mp1}$ (and the analogous $B_\pm^{\mp1}$) occur. 
From the definitions of $A_\pm$ in \eqref{A_pm_dom} and the monotonicity of $F_\pm$ in \eqref{F_pm}, we observe that $A_+^{-1}=A_+^{-1}(a_0)$ is monotone decreasing in $a_0$, and $A_-=A_-(a_0)$ is monotone increasing. They satisfy $A_+^{-1}(0)=(2+\sqrt3\sqrt{1-h_A})^{-1}$, $A_+^{-1}(1)=0$, $A_-(0)=0$, and $A_-(1)=(2+\sqrt3\sqrt{1+h_A})^{-1}$.

\begin{remark}\rm
The cases where one or both of $\dom_A$ and $\dom_B$ equal $\pm1$ can be treated easily. The results, however, differ drastically from intermediate dominance. We present the case $\dom_A=\dom_B=1$. For the single-locus clines, we obtain from \eqref{P(x+)_dom}
\begin{equation}
	1-P(x)  \sim \frac{3}{a_+^2}\,x^{-2} \quad\text{as } x\to\pm\infty\,,
\end{equation}
and 
\begin{equation}
	P'(x) \sim \frac{6}{a_+^2}\,x^{-3}   \quad\text{as } x\to\pm\infty\,.
\end{equation}
Simple calculations yield
\begin{equation}
	I_+(x) \sim \frac{648}{5a_+^4b_+^4}\,x^{-10} \quad\text{as } x\to\pm\infty\,,
\end{equation}
and
\begin{equation}
	\int_x^\infty \frac{I_+(y)}{[P'(y)]^2}\,dy \sim \frac{6}{5b_+^2}\,x^{-3}  \quad\text{as } x\to\pm\infty\,.
\end{equation}
Therefore, we find
\begin{equation}
	k_+(x) -\ka_+ \sim - \frac{12}{5b_+^2}\,x^{-3} \quad\text{as } x\to\pm\infty\,.
\end{equation}
Thus, \eqref{pA_asymp_p} and \eqref{p(x)_asymp} are replaced by the qualitatively different 
\begin{equation}
	\frac{1-p_A(x)}{1-P(x)} -1 \sim - \frac{2\ep a_+\ka_+}{x} \quad\text{as } x\to+\infty\,,
\end{equation}
and
\begin{equation}
	p(x) -\ka_+P'(x) \sim - \frac{2}{5} P'(x)Q'(x) \quad\text{as } x\to+\infty\,,
\end{equation}
respectively.
\end{remark}

\subsection{Explicit solution for equivalent loci}\label{sec:explicit}
We assume $b_\pm=a_\pm$ and $-1<h_A=h_B<1$, i.e., the loci have equal effects. Then we can calculate the term $p(x)$ in $p_A(x)=P(x)+\ep p(x)+O(\ep^2)$ explicitly. Since the expressions with dominance are very complicated, we relegate them to Appendix \ref{app:explicit}, which also contains the derivations. After presenting the explicit results in the absence of dominance, we describe and illustrate the influence of dominance on $p'(0)$.

\begin{theorem}\label{thm_explicit}
Let $b_\pm=a_\pm$ and $h=h_A=h_B=0$. Then $p(x)$ is given by \eqref{p(x)}, where
\begin{equation}
	P'(x) = \begin{cases} \dfrac{6a_+(Z_+-Z_+^{-1})}{(Z_++2+Z_+^{-1})^2} &\text{if } x\ge0\,, \vspace{.3cm}\\
			\dfrac{6a_-(Z_-^{-1}-Z_-)}{(Z_+-2+Z_-^{-1})^2} &\text{if } x\ge0\,, \end{cases}
\end{equation}
\begin{equation}
	Z_\pm = A_\pm e^{x a_\pm}\,,
\end{equation}
\begin{equation}
	A_+ = \frac{2+a_0 + \sqrt{3}\sqrt{1+2a_0}}{1-a_0}\,, \quad A_- = \frac{3-a_0 - \sqrt{3}\sqrt{3-2a_0}}{1-a_0}\,,
\end{equation}
\begin{subequations}\label{k_pm_explicit}
\begin{align}
%	k_+(x) &= k_0 + 2a_+ \Bigl[\frac{A_+^{-1}e^{-a_+x}-2}{A_+e^{a_+x}-A_+^{-1}e^{-a_+x}} - \frac{A_+^{-1}-2}{A_+-A_+^{-1}}  \Bigr]   \\
%	k_-(x) &=  k_0 - 2a_- \Bigl[\frac{A_-^{-1}-2}{A_--A_-^{-1}} - \frac{A_-^{-1}e^{-a_-x}-2}{A_-e^{a_-x}-A_-^{-1}e^{-a_-x}} \Bigr]   \\
	k_+(x) &= k_0 + 2a_+ \Bigl[\frac{Z_+^{-1}-2}{Z_+ - Z_+^{-1}} - \frac{A_+^{-1}-2}{A_+-A_+^{-1}}  \Bigr] \quad\text{if } x\ge0\,, \label{kp(x)_exp} \\
	k_-(x) &=  k_0 - 2a_- \Bigl[\frac{A_-^{-1}-2}{A_--A_-^{-1}} - \frac{Z_-^{-1}-2}{Z_--Z_-^{-1}} \Bigr]   \quad\text{if } x<0\,, \label{km(x)_exp}
\end{align}
\end{subequations}
and
\begin{subequations}
\begin{align}
	k_0 &= \frac{2a_+(1-a_0)(2a_0-1)}{\sqrt3\sqrt{1+2a_0}(3-2a_0)} 
		= \frac{2a_-a_0(2a_0-1)}{\sqrt3\sqrt{3-2a_0}(1+2a_0)}\,.
\end{align}
\end{subequations}
In addition,
\begin{subequations}\label{kappapm_exp_nodom}
\begin{align}
	\ka_+ = \lim_{x\to\infty}k_+(x)&= a_+\left(1-\frac{2-a_0}{3-2a_0}\sqrt{\frac{1+2a_0}{3}} \right) \quad\text{if } x\ge0\,, \label{kappap_exp_nodom} \\
	\ka_- = \lim_{x\to-\infty}k_-(x)&= -a_-\left(1-\frac{1+a_0}{1+2a_0}\sqrt{\frac{3-2a_0}{3}} \right) \quad\text{if } x<0\,. \label{kappam_exp_nodom} 
\end{align}
\end{subequations}
\end{theorem}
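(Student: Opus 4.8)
The plan is to specialize Theorem~\ref{thm:p(x)} to the case $b_\pm=a_\pm$, $h_A=h_B=0$ and carry out the integrals explicitly, using the $\tanh^2$-form of the single-locus cline in \eqref{P(x)} and its derivative in \eqref{P'(x)_dom}. First I would rewrite $P$ and $P'$ in the variable $Z_+=A_+e^{xa_+}$ on $x\ge0$, obtaining the stated rational expressions; the formulas for $A_\pm$ follow by setting $x=0$ and matching $P(0)=Q(0)=a_0$ via \eqref{A_pm} after clearing the $\tanh^{-1}$ (using $\tanh^2u=(1-\operatorname{sech}^2u)$ and the half-angle identities, so $A_\pm$ come out as the displayed algebraic functions of $a_0$). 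Since the loci are equivalent, $Q=P$ and $Q'=P'$, so the integrand $[P'(y)]^2Q'(y)[1+h_B-2h_BQ(y)]$ collapses to $[P'(y)]^3$ when $h_B=0$.

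The core computation is $I_+(x)=\int_x^\infty [P'(y)]^3\,dy$ and then $\int_0^x I_+(y)/[P'(y)]^2\,dy$. Because $P$ solves the first-order equation $P'=\frac{a_+}{\sqrt3}(1-P)\sqrt{\phi(1-P,0)}$ (from \eqref{P'_right} with $h=0$, i.e.\ $P'=\frac{a_+}{\sqrt3}(1-P)\sqrt{3-2(1-P)}$), one can change variables from $y$ to $P$ in $I_+$, turning it into an elementary algebraic integral in $P$; equivalently, and more cleanly, change variables to $Z_+$, under which $dy = dZ_+/(a_+Z_+)$ and every quantity becomes rational in $Z_+$. This yields $I_+(x)$ as a rational function of $Z_+$, and then $I_+(y)/[P'(y)]^2$ is again rational in $Z_+$, so the outer integral $\int_0^x \cdots\,dy = \int_{A_+}^{Z_+}\cdots \,dZ_+/(a_+Z_+)$ is a routine partial-fractions integration producing the bracketed expression $\bigl[(Z_+^{-1}-2)/(Z_+-Z_+^{-1})\bigr]$ in \eqref{kp(x)_exp}, with the lower-limit term giving the constant $-\,(A_+^{-1}-2)/(A_+-A_+^{-1})$. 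The $x<0$ branch is obtained by the reflection $x\mapsto -x$ established in the proof of Theorem~\ref{thm:1-locus_cline} (the substitutions $P\to 1-P$, $a_+\to a_-$), which swaps $A_+\leftrightarrow A_-$, $Z_+\leftrightarrow Z_-^{-1}$ in the appropriate sense and flips the overall sign, yielding \eqref{km(x)_exp}.

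For $k_0$, I would substitute the explicit $I_\pm(0)$ (the value of the rational function at $Z_\pm = A_\pm$) into \eqref{k_0} with $\dom_A=0$, $b_\pm=a_\pm$, and simplify using the defining relation \eqref{get_a0} ($3a_0^2-2a_0^3 = a_+^2/(a_+^2+a_-^2)$, hence also $3-2a_0 = $ the complementary ratio times $(a_+^2+a_-^2)/a_-^2$ up to the factor $a_0^2$), which is exactly what collapses the symmetric-looking numerator/denominator to the two equivalent one-line forms for $k_0$; the two forms are equal precisely because of \eqref{get_a0}, and verifying that identity is the one genuinely fiddly algebraic check. Finally, $\ka_\pm$ is obtained by letting $Z_+\to\infty$ (resp.\ $Z_-\to 0^+$) in $k_\pm(x)$: the $Z$-dependent term tends to $0$ (resp.\ a finite limit), leaving $\ka_+ = k_0 - 2a_+(A_+^{-1}-2)/(A_+-A_+^{-1})$, into which one substitutes the explicit $A_+(a_0)$ and simplifies — again via \eqref{get_a0} and the identity $A_+ - A_+^{-1}$ in terms of $\sqrt{1+2a_0}$ — to reach \eqref{kappap_exp_nodom}, and symmetrically \eqref{kappam_exp_nodom}. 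The main obstacle is purely bookkeeping: keeping the $Z_\pm$-substitution, the partial-fraction decomposition, and the repeated use of \eqref{get_a0} consistent so that the final expressions come out in the compact advertised form rather than an unsimplified mess; there is no conceptual difficulty, since existence, uniqueness, and the structural form of $p$ are already supplied by Theorem~\ref{thm:p(x)}.
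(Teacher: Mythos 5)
Your proposal is correct and follows essentially the same route as the paper: specialize Theorem \ref{thm:p(x)} to $b_\pm=a_\pm$, $h_A=h_B=0$ (so the integrand of $I_\pm$ collapses to $[P']^3$), evaluate $I_\pm$ and then $\int I_\pm/[P']^2$ in closed form, and assemble $k_0$, $k_\pm$, $\ka_\pm$ using \eqref{k_0}, \eqref{define_k_pm}, \eqref{kpm_limits} and the defining relation \eqref{get_a0}. The only difference is tactical: the paper computes $I_\pm$ for general $h$ by repeated integration by parts against the cubic ODE $f''=a_1f+a_2f^2+a_3f^3$ and delegates the outer integral to \emph{Mathematica} before setting $h=0$, whereas your substitution $Z_+=A_+e^{xa_+}$ (so $dy=dZ_+/(a_+Z_+)$ and everything is rational in $Z_+$) reduces both integrals to hand-checkable partial fractions — an equally valid, arguably more self-contained, execution of the same computation.
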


\begin{remark}\label{properties of kpm}\rm
It is easy to show that $\ka_+$ is strictly monotone decreasing in $a_0\in[0,1]$ and satisfies $1.23a_+\approx 2a_+(1-\frac{2}{3\sqrt3})\ge \ka_+ \ge0$. Similarly, $\ka_-$ is strictly monotone decreasing in $a_0\in[0,1]$ and satisfies 
$0\ge \ka_-\ge -2a_-(1-\frac{2}{3\sqrt3}) \approx-1.23a_-$.
\end{remark}

The following corollary specifies several important properties of $p(x)$.

\begin{corollary}\label{Cor:prop_p(0)_explicit_nodom}
Under the assumptions of Theorem \ref{thm_explicit}, $p(x)$ has the following properties:

{\rm (i)}
\begin{equation}
	p(0) = 8a_+^2\,\frac{(1-a_0)^2(2a_0-1)}{3(3-2a_0)}
	 = \frac83(a_+^2+a_-^2)a_0^2(1-a_0^2)(2a_0-1)\,.
\end{equation}
Therefore, $p(0)\gtrless0$ if and only if $a_0\gtrless\frac12$, which is the case if and only if $a_+\gtrless a_-$.

{\rm (ii)}
\begin{equation}\label{p'(0)_equivalent}
	p'(0) = \frac{8a_+^3(1-a_0)^3\sqrt{3+6a_0}}{3-2a_0} = 8\sqrt{3} a_+a_- \sqrt{a_+^2+a_-^2}\, a_0^2(1-a_0^2)\,,
\end{equation}
which is always positive, in accordance with the general case. 

{\rm (iii)} The frequency $p(x)$ changes sign once. If $p(x_k)=0$, then $p(x)>0$ if $x>x_k$ and $p(x)<0$ if $x<x_k$.
\end{corollary}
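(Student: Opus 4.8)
The plan is to specialise the general results of Theorem~\ref{thm:p(x)} and Corollary~\ref{Cor:prop_p(0)_gen} to equivalent loci and to carry out the resulting elementary (if somewhat lengthy) algebra, plus a short monotonicity argument for part~(iii). A preliminary reduction simplifies everything: since $b_\pm=a_\pm$ and $\dom_A=\dom_B=0$, the cline $Q$ satisfies exactly the boundary-value problem \eqref{1-locus problem} solved by $P$, so $Q=P$ by uniqueness (Theorem~\ref{thm:1-locus_cline}); in particular $b_0=a_0$ and the integrands in \eqref{IpIm_gen} collapse to $[P'(y)]^3$. Using $[P'(x)]^2=\tfrac13 a_+^2(1-P)^2(1+2P)$ for $x\ge0$ and $[P'(x)]^2=\tfrac13 a_-^2 P^2(3-2P)$ for $x<0$ (both from \eqref{ph^2} with $\dom=0$), the substitution $s=P(x)$ reduces $I_+(0)=\int_0^\infty[P']^3$ and $I_-(0)=\int_{-\infty}^0[P']^3$ to polynomial integrals, giving $I_+(0)=\tfrac16 a_+^2(1-a_0)^3(1+a_0)$ and $I_-(0)=\tfrac16 a_-^2 a_0^3(2-a_0)$.

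For~(i) I would use $p(0)=P'(0)k_0$ (valid because $p(x)=P'(x)k_\pm(x)$ by Theorem~\ref{thm:p(x)}(iii) with $k_+(0)=k_-(0)=k_0$), insert $P'(0)$ from \eqref{P'(0)} and $k_0$ from Theorem~\ref{thm_explicit} (equivalently, use the $p(0)$-formula of Corollary~\ref{Cor:prop_p(0)_gen}(ii) with the values of $I_\pm(0)$ above), and simplify. Analogously, for~(ii) I would use the $p'(0)$-formula of Corollary~\ref{Cor:prop_p(0)_gen}(i), which for $b_\pm=a_\pm$ collapses to $p'(0)=2\sqrt3\,a_+a_-\,[I_+(0)+I_-(0)]/\sqrt{a_+^2+a_-^2}$. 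In both cases the only non-mechanical step is to invoke the defining cubic \eqref{get_a0}, i.e.\ $a_+^2=(a_+^2+a_-^2)a_0^2(3-2a_0)$ and $a_-^2=(a_+^2+a_-^2)(1-a_0)^2(1+2a_0)$, to pass between the two displayed forms (along the way one checks that the $a_0$-dependent bracket appearing in $I_+(0)+I_-(0)$ is in fact constant). Positivity of $p'(0)$ is then manifest, and the sign of $p(0)$ is that of $2a_0-1$ because the remaining factors are positive for $a_0\in(0,1)$; by the remark following \eqref{get_a0}, $a_0\gtrless\tfrac12$ holds precisely when $a_+\gtrless a_-$.

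For~(iii), note that $p(x)=P'(x)k_\pm(x)$ with $P'(x)>0$ on $\Reals$ (Theorem~\ref{thm:1-locus_cline}), so $\sgn p(x)$ equals the sign of $k(x)$, where $k$ is obtained by gluing $k_-$ on $(-\infty,0]$ to $k_+$ on $[0,\infty)$. This $k$ is continuous at $0$ (both pieces equal $k_0$) and, by \eqref{k_pm'(x)}, has strictly positive derivative on each half-line, hence is strictly increasing on $\Reals$, with limits $\ka_\pm$ as $x\to\pm\infty$ (Remark~\ref{rem:kappa_k}). It then suffices to show $\ka_-<0<\ka_+$: the explicit expressions \eqref{kappapm_exp_nodom} read $\ka_+=a_+\bigl(1-\tfrac{2-a_0}{3-2a_0}\sqrt{(1+2a_0)/3}\bigr)$ and $\ka_-=-a_-\bigl(1-\tfrac{1+a_0}{1+2a_0}\sqrt{(3-2a_0)/3}\bigr)$, and each bracket factors into a product of functions monotone on $[0,1]$ and strictly between $0$ and $1$ for $a_0\in(0,1)$, so both brackets are positive there (cf.\ Remark~\ref{properties of kpm}); since the standing assumption $\a_\pm>0$ forces $a_0\in(0,1)$ and $a_\pm>0$, this yields $\ka_-<0<\ka_+$. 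A strictly increasing continuous function running from a negative limit to a positive limit has a unique zero $x_k$, is negative on $(-\infty,x_k)$ and positive on $(x_k,\infty)$; multiplying by $P'>0$ transfers this structure to $p$, which is exactly~(iii). I expect no real obstacle here: parts~(i) and~(ii) are bookkeeping organised by \eqref{get_a0}, and the only substantive point, the sign determination $\ka_-<0<\ka_+$, reduces to the short monotonicity check just indicated.
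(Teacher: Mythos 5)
Your proposal is correct and takes essentially the same route as the paper: parts (i) and (ii) are obtained by specializing Corollary \ref{Cor:prop_p(0)_gen} with the values of $I_\pm(0)$ (which you compute directly via the substitution $s=P(x)$ and the first integral $[P']^2=\tfrac13 a_\pm^2(\cdots)$, rather than citing \eqref{I_+-(0)_domexp} at $h=0$; the resulting expressions agree), and part (iii) is exactly the paper's argument — $\ka_-<0<\ka_+$, strict monotonicity of the glued $k$, and positivity of $P'$. Your only addition is to justify $\ka_-<0<\ka_+$ from the explicit formulas \eqref{kappapm_exp_nodom} instead of merely citing Remark \ref{properties of kpm}, which makes the argument, if anything, more self-contained than the paper's.
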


\begin{proof}
Statements (i) and (ii) follow immediately from the corresponding statements in Corollary \ref{Cor:prop_p(0)_gen} by applying \eqref{I_+-(0)_domexp}.

(iii) From Remark \ref{properties of kpm}, we obtain $\ka_-<0<\ka_+$. Because $k_+$ and $k_-$ are strictly monotone increasing on their domains (see eq.\ \ref{define_k_pm}), $k(x)=k_\pm(x)$, $x\gtrless0$, changes sign once, say at $x=x_k$. Because $P'>0$ on $\Reals$, $p$ changes sign once, at $x=x_k$, and satisfies $p(x)>0$ if $x>x_k$ and $p(x)<0$ if $x<x_k$.
\end{proof}

Explicit asymptotic expansions of $p(x)$ and $p_A(x)$ as $x\to\pm\infty$ follow immediately from Proposition \ref{prop:asymptotic} by setting $b_\pm=a_\pm$, $h=h_A=h_B=0$, and using \eqref{kappapm_exp_nodom}.

Corollary \ref{Cor:prop_p(0)_explicit_dom} generalizes Corollary \ref{Cor:prop_p(0)_explicit_nodom} to intermediate dominance. In contrast to single-locus clines, in two-locus systems the slope $p_A'(0) \approx P'(0) + \rh^{-1}p'(0)$ of the allele-frequency clines depends on $h$ because $p'(0)=p'(0,h)$ depends on $h$. Figure \ref{fig:p'(0,h)/p'(0,0)} displays the ratio $p'(0,h)/p'(0,0)$ for different values of $\tilde\a$. That this ratio depends only on $\tilde\a=\a_+/(\a_++\a_-)$, and not on $\a_+$ and $\a_-$ separately, is a consequence of the scaling property shown in Lemma \ref{lem:invariance} because $\tilde\a$ remains unchanged if and only if both $\a_+$ and $\a_-$ are multiplied by the same constant. The estimates \eqref{p'(dom)_bounds} show that with intermediate dominance, the slope $p'(0,h)$ is at most 20\% larger or smaller than without dominance. Since our analysis is based on the assumption of large $\rh$, the influence of dominance on $p_A'(0)$ will be small, at least for loci of equal or similar effects.

\begin{figure}[t]
	\centering
	\includegraphics[width=0.6\textwidth]{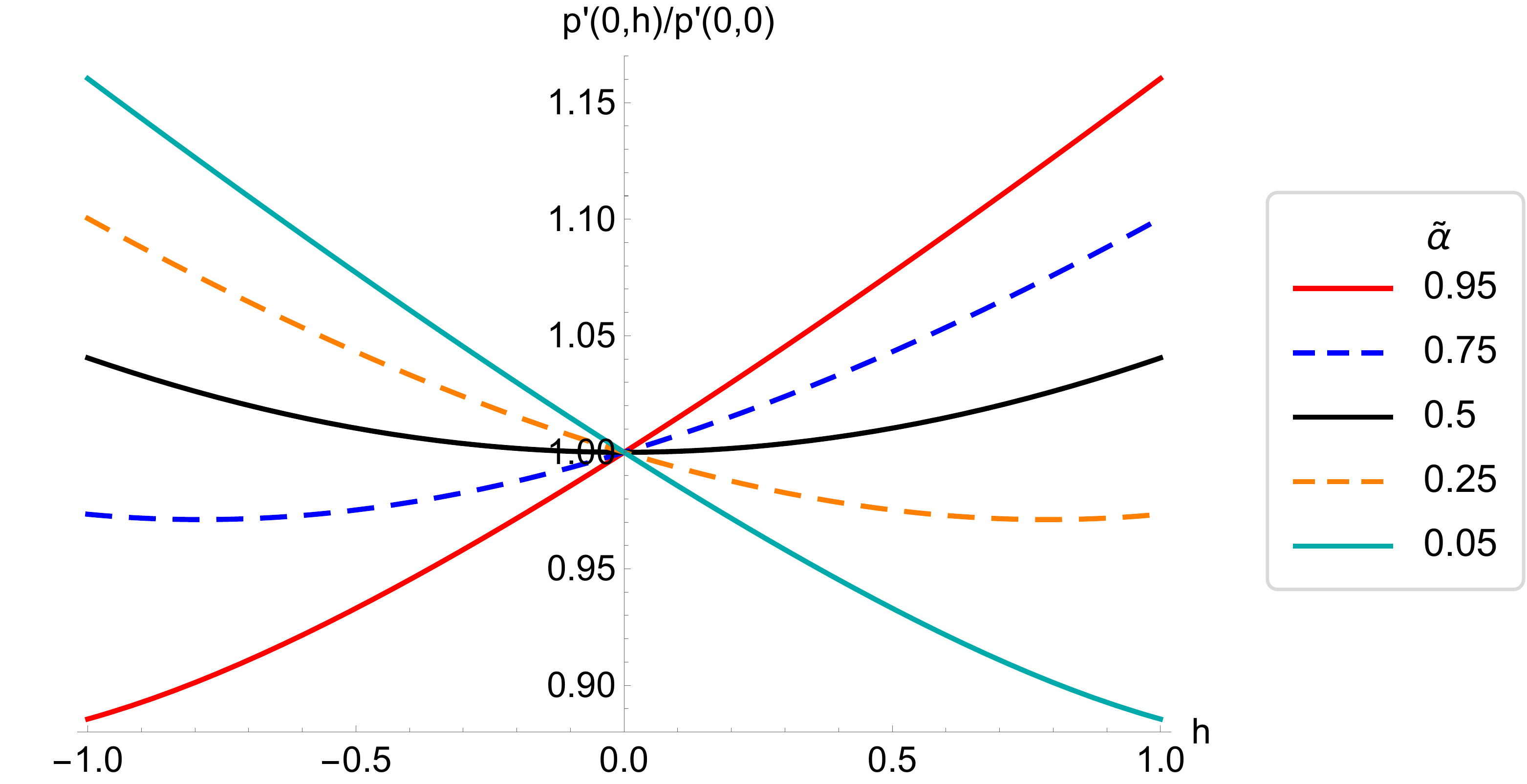}
\caption{The effects of dominance on the slope of two-locus clines for equivalent loci. Displayed is the ratio $p'(0,h)/p'(0,0)$ in \eqref{p'(dom)_bounds_1} as a function of  $h$ for different values of $\tilde\a=\a_+/(\a_++\a_-)$. The dependence of $p'(0)=p'(0,h)$ on $h$ is indicated explicitly.}\label{fig:p'(0,h)/p'(0,0)}
\end{figure}

\subsection{A global measure for the steepness of a cline}\label{sec:global_steepness}
The slope of a cline in its center, $P'(0)$ or $p_A'(0)$, provides a local measure for its steepness, which is the inverse of  its width. Here, we study the global measure
\begin{equation}\label{def:steepness}
	s(\pi) = \int_{-\infty}^\infty \pi'(x)^2 \,dx
\end{equation}
for the steepness of a cline $\pi(x)$ and compare with the slope $\pi'(0)$. The function
$s(\pi)$ is the square of the measure introduced by Liang and Lou (2011), who proved for a more general model on a bounded domain that it is a strictly monotone increasing function of $\la$; see also Lou et al.\ (2013). We mention that $s(\pi)$ is one of the two terms in the Lyapunov function that is minimized by the solution (Fleming 1975).
To shape intuition, we note that $s(\tanh(ax)) = \frac{4a}{3}$.

\subsubsection{A single locus}
We start by collecting properties of the steepness $s(P)$ for the single-locus cline \eqref{P(x)_dom}. We write $P_\la$ to indicate that for given $(\a_+,\a_-)$ each $\la$ defines a different function. Then the invariance property \eqref{invariance_P} implies $P_{c\la}'(x)=cP_\la'(\sqrt{c}x)$. A simple integral transformation yields
\begin{equation}\label{steep_scale_P}
	s(P_{c\la}) = \sqrt{c}\, s(P_\la)\,.
\end{equation}
Therefore, as expected, the cline becomes steeper with stronger selection, i.e., larger $\la$.

The steepness of the single-locus cline $P(x)$ can be computed explicitly. The expression, however, is very complicated and given by equations \eqref{steep_P_dom} in Appendix \ref{app:steepness}. Figure \ref{fig:steep_dom} displays $s(P)$ as a function of $h$ for various parameter combinations, which are chosen such that all resulting clines have the same slope $P'(0)$. Whereas, $P'(0)$ does not depend on the degree of dominance, the steepness $s(P)$ does depend on $h$, although not very strongly. The reason is that the region near the environmental step contributes most to $s(P)$ because it is there that $P'(x)$ is highest and (nearly) independent of $h$.

In the absence of dominance ($h=0$), the steepness simplifies to
\begin{equation}\label{steep_P_nodom}
	s(P) = \frac{3}{5}\Bigl[(a_++a_-) - \frac{\sqrt{a_+^2+a_-^2}}{\sqrt3}\sqrt{(3-2a_0)(1+2a_0)}  \Bigr]\,;
\end{equation}
see Appendix \ref{app:steepness}.
Evidently, \eqref{steep_P_nodom} satisfies \eqref{steep_scale_P}.
If we fix $\la$ and the step size $\a_++\a_-$, or equivalently $a_+^2+a_-^2$, we can consider $s(P)$ as a function of $a_+$, or of $a_0$. 
Then \eqref{steep_P_nodom} shows that $s(P)$ decreases monotonically to 0 as $a_+\to0$ (because then $a_0\to0$) or as $a_+\to\sqrt{a_+^2+a_-^2}$ (because then $a_0\to1$). In addition, $s(P)$ is symmetric about $a_+=a_-$, or $a_0=1/2$, and $s(P)$ is maximized at this point, at which it assumes the value
\begin{equation}\label{steep_P_a0=1/2}
	s(P) = \frac{2}{5}(3-\sqrt6)a_+\,.
\end{equation}

In Appendix \ref{app:steepness}, we provide the simple and accurate approximation \eqref{s(P)_approx_nodom} to \eqref{steep_P_nodom}, and we show that in the absence of dominance the ratio $P'(0)/s(P)$ varies between quite narrow bounds, i.e., 
\begin{equation}\label{P'(0)/s(P)_bounds}
1.564\approx \frac{5}{23} (2 + 3\sqrt3) \le \frac{P'(0)}{s(P)} \le \frac{5}{12}(2+\sqrt6) \approx 1.854\,.
\end{equation}

\begin{figure}[t]
	\centering
	\includegraphics[width=0.6\textwidth]{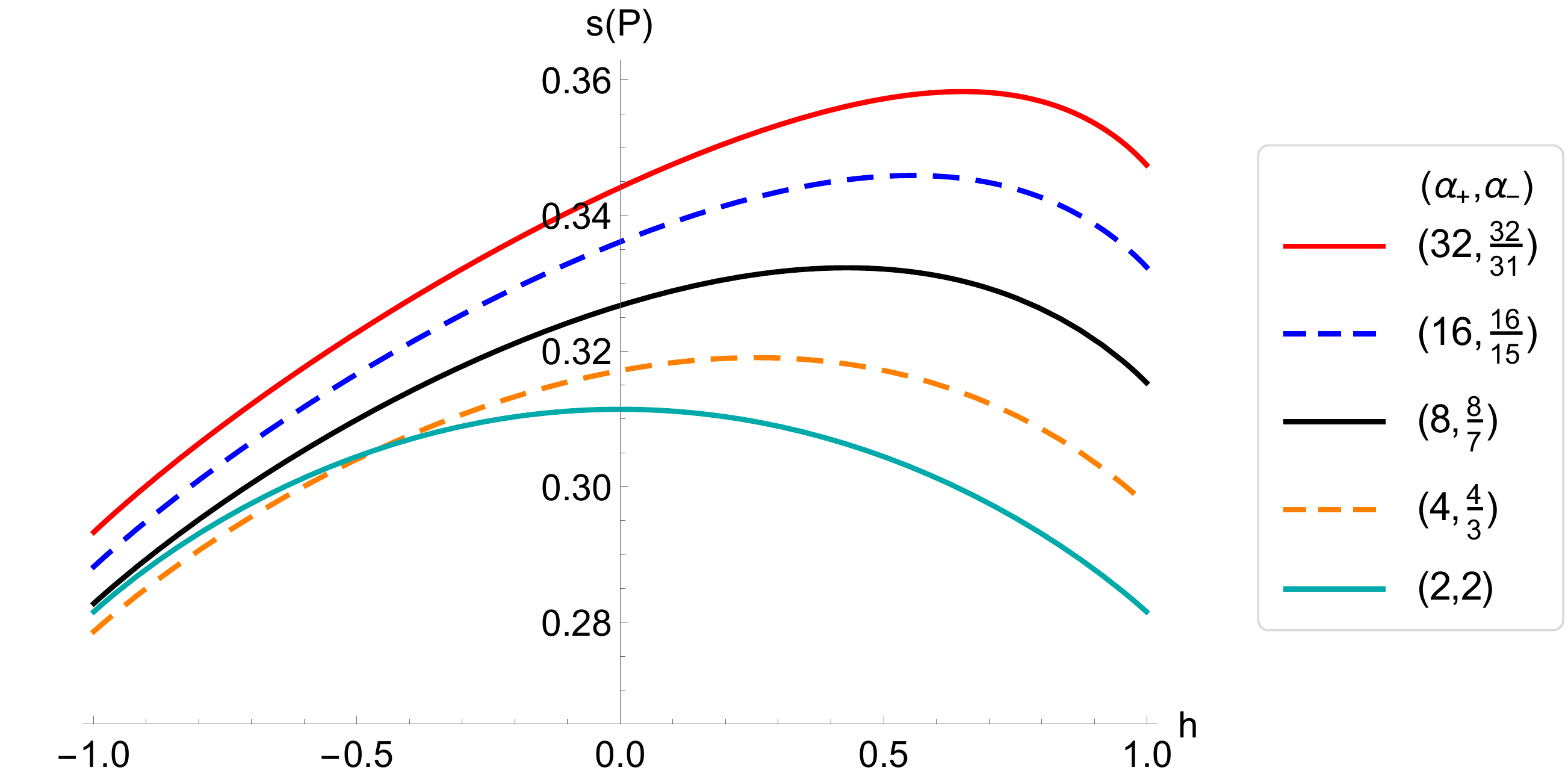}
\caption{Steepness of single-locus clines, $s(P)$, as a function of the degree of dominance, $h$. The curves show $s(P)$ in \eqref{steep_P_dom} as a function of $h$ for the five clines emerging from the pairs of values $(\a_+,\a_-)$ given in the legend, where $\la=1$. These pairs are chosen such that they have the same harmonic mean of $2$. By \eqref{P'(0)}, the corresponding clines have the same slope in the center, i.e., $P'(0)=\sqrt{2/6}\approx 0.577$.}\label{fig:steep_dom}
\end{figure}

\subsubsection{Two recombining loci}
We write $p_{A,(\la,\rh)}$ to indicate the dependence of the cline on $\la$ and $\rh$. 
For two recombining loci the invariance property \eqref{invariance} yields, in a similar way as for a single locus, 
\begin{equation}\label{scale_s_pA}
	s[p_{A,(c_1\la,c_2\rh)}] = \sqrt{c_1}s[p_{A,(\la,\frac{c_2}{c_1}\rh)}] = \sqrt{c_2}s[p_{A,(\frac{c_1}{c_2}\la,\rh)}]\,.
\end{equation}
In particular, this shows that the steepness increases by a factor of $\sqrt{c}$ if both $\la$ and $\rh$ are multiplied by the same constant $c$. 

For two loosely linked loci, we obtain more detailed insight. From \eqref{def:steepness} and \eqref{approx_eps}, we infer
\begin{align}\label{steep_pA}
	s(p_A) &= \int_{-\infty}^\infty P'(x)^2 dx + \frac{2}{\rh} J + O(\rh^{-2}) \notag \\
			&= s(P) + \frac{2}{\rh} J + O(\rh^{-2}) \,,
\end{align}
where
\begin{equation}\label{J_def}
	J = \int_{-\infty}^\infty P'(x)p'(x) dx\,.
\end{equation}
Therefore, $J$ is a measure of the influence of locus $\B$ on the steepness of the cline at locus $\A$. It contributes to $s(p_A)$ in an analogous way as $p'(0)$ to $p_A'(0)$.
Equation \eqref{steep_pA} demonstrates that the allele-frequency cline (at each locus) gets steeper as $\rh$ decreases if and only if $J>0$.  

To study the integral $J$, we write $J(\la)$ to make the dependence on $\la$ explicit and obtain
\begin{align}\label{steep_scale_J}
	J(c\la) &= \int_{-\infty}^\infty P'(x,c\la)p'(x,c\la) dx \notag \\
			&= \int_{-\infty}^\infty \sqrt{c} P'(\sqrt{c} x,\la) c\sqrt{c} p'(\sqrt{c}x,\la) dx \notag \\
			&= \frac{c^2}{\sqrt{c}}\int_{-\infty}^\infty P'(y,\la) p'(y,\la) dy \notag\\
			&= c^{3/2} J(\la)\,.
\end{align}
From \eqref{steep_pA}, \eqref{steep_scale_P} and \eqref{steep_scale_J}, and again \eqref{steep_pA},
we derive
\begin{align}\label{scale_s_pA_rh}
	s(p_{A,(c\la,\rh)}) &= s(P_{c\la}) + \frac{2}{\rh}J(c\la) + O(\rh^{-2})\notag \\
			&= \sqrt{c}\,s(P_{\la}) + \frac{2c^{3/2}}{\rh}J(\la) + O(\rh^{-2}) \notag \\
			&= \sqrt{c}\,s(p_{A,(\la,\rh)}) + \frac{2\sqrt{c}}{\rh}(c-1)J(\la) + O(\rh^{-2})\,.
\end{align}
This is not only in accordance with \eqref{scale_s_pA} but shows, provided $J(\la)>0$, that stronger selection ($c>1$) causes a stronger increase in the steepness of the cline for smaller values of $\rh$ than for larger values.

We conjecture that $J$ is always positive. This conjecture is based on a proof for loci of equal effects without dominance (Appendix \ref{app:steep_explicit}) and comprehensive numerical evaluation of $J$ (e.g., Figs.\ \ref{fig:J_steep_general} and \ref{fig:plot_J_dom}).

\begin{figure}[t]
	\centering
	\includegraphics[width=0.6\textwidth]{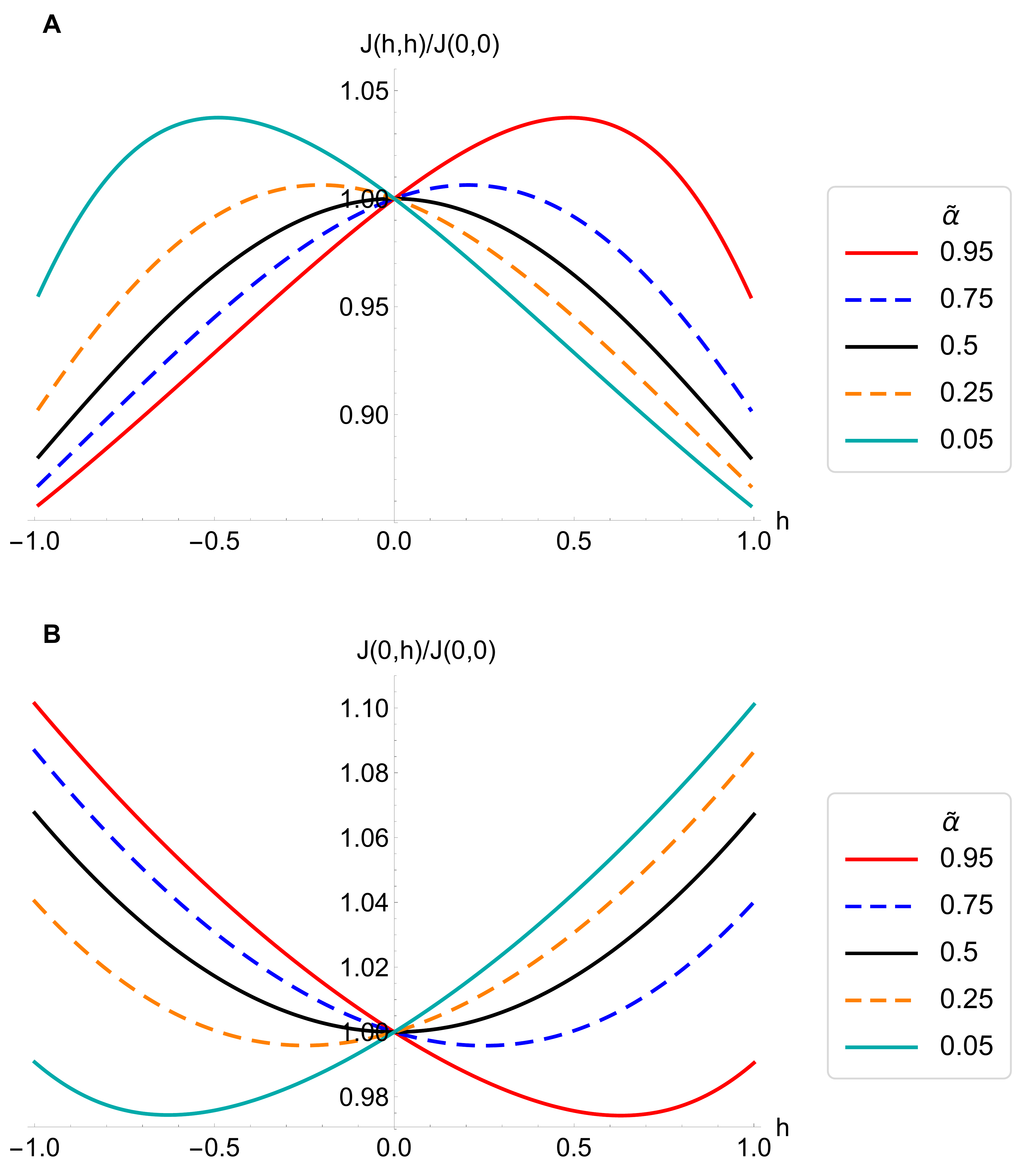}
\caption{The effects of dominance on the steepness of two-locus clines. {\bf A} The ratio $J(h,h)/J(0,0)$ as a function of  $h$ for different values of $\tilde\a=\a_+/(\a_++\a_-)$. {\bf B} The ratio $J(0,h)/J(0,0)$ as a function of $h$ for different values of $\a$. The scaling property \eqref{steep_scale_J} shows that the ratios displayed in the figure do not depend on $\a_+$ and $\a_-$ separately, but only on $\a$. In both panels, both loci have the step sizes $\be_\pm=\a_\pm$, so that selection on homozygotes is equally strong at both loci.}\label{fig:plot_J_dom}
\end{figure}

Figure \ref{fig:plot_J_dom} illustrates the effects of dominance on $J=J(h_A,h_B)$ for loci of equal effects (panel A) and for the case where only locus $\B$ exhibits dominance (panel B). Panel A is analogous to Fig.\ \ref{fig:p'(0,h)/p'(0,0)} and shows that $J$  depends on $h$ in a qualitatively different way than $p'(0)$. If dominance is absent at locus $\A$, dominance at locus $\B$ affects the steepness $s(p_A)$ in \eqref{steep_pA} as depicted by Fig.\ \ref{fig:plot_J_dom}B.  
For loci of equal effects and in the absence of dominance, Fig.\ \ref{fig:steepness_vs_slope} displays the ratio $p_A'(0)/s(p_A)$ as a function of $\a_+$ for given step size $\a_++\a_-$ and three (large) recombination rates. 

We conclude that the global measure $s(p_A)$ for the steepness has properties very similar to the slope $p_A'(0)$ if there is no dominance. In the presence of dominance, the steepness $s(p_A)$ depends on the degree of dominance and therefore reflects its effects much better than the slope. This is most conspicuous if the `primary' locus (here, $\A$) exhibits dominance because in this case the slope in the center is independent of $h_A$. The indirect effects of dominance at a loosely linked locus are weak; this is true for both measures of steepness if both loci are under similarly strong selection.  Of course, if the second locus is under much stronger selection than the first, then dominance at this locus may have a bigger effect on the first locus. 

\section{No recombination}\label{sec:norec}
If recombination is absent ($\rh=0$), the system of partial differential equations \eqref{dynamics_pi} for the gamete frequencies becomes formally equivalent to a one-locus four-allele model in which the gametes play the role of the alleles. With the fitness functions \eqref{step_functions} and the assumption \eqref{sign_ab}, one-locus theory (Section \ref{sec:one_locus}) implies that on each of the edges ($p_i+p_j=1$, $i\ne j$) of the state space there exists a unique cline. 

By our assumptions, $AB$ is the gamete with highest fitness ($\a_++\be_+$) if $x\ge0$ and that with lowest fitness ($-\a_--\be_-$) if $x<0$, whereas $ab$  is the gamete with lowest fitness ($-\a_+-\be_+$) if $x\ge0$ and that with highest fitness ($\a_-+\be_-$) if $x<0$. The gametes $Ab$ and $aB$ have intermediate fitness everywhere. We conjecture that the cline formed by the gametes $AB$ and $ab$ (satisfying $p_2=p_3=0$, $p_1>0$, $p_4>0$) is globally asymptotically stable. The frequency $p_1$ of gamete $AB$ of this cline is obtained from the one-locus formula \eqref{P(x)} by substituting in \eqref{abbrevs_A} $\a_++\be_+$ for $\a_+$, and $\a_-+\be_-$ for $\a_-$. Then the two-locus cline, denoted by $p_{AB}=(p_A,p_B,D)$, satisfies $p_A=p_B=p_1$ and $D=p_1(1-p_1)$.

This conjecture is supported by Corollary 4.7 of Lou and Nagylaki (2004) which implies global stability for an analogous model defined on a \emph{bounded} domain if $\la$ is sufficiently large. Because $\rh=0$, Lemma \ref{lem:invariance}(iii) shows that if our cline $p_{AB}$ is globally asymptotically stable for one value of $\la$, then it is globally asymptotically stable for every $\la>0$. 
Numerical integration of the time-dependent equation \eqref{eq:ABD_add} also supports this conjecture. 

In principle, a regular perturbation analysis of the equilibrium $p_{AB}$ can be performed to obtain an approximation of the two-locus cline for small values of $\rh$. This leads to a system of three coupled linear PDEs that seems to be difficult to analyze. In fact, already for a continent-island model (B\"urger and Akerman 2011) and a two-deme model (Akerman and B\"urger 2014), the resulting approximations were very complicated and not immediately intuitive.

\section{Numerical results for arbitrary recombination}\label{sec:numerics}
Here, we present numerical results to (i) test the accuracy of the approximation for the cline $(p_A,p_B,D)$ derived in Theorem \ref{thm:p(x)}, and to (ii) illustrate the dependence of the properties of the two-locus cline on the recombination rate over the full range of recombination rates 

Stationary solutions were obtained by numerical integration of the system \eqref{eq:ABD_add} of partial differential equations from given initial conditions. We used the \emph{Mathematica} function $\mathsf{NDSolve}$ with sufficiently small values of $\mathsf{MaxStepFrac}$ (mostly $=0.0005$).  We solved this system for $(x,t)\in[-L,L]\times[0,T]$, where for given parameters $\la$ (usually $\la=1)$, $\a_\pm$, and $\be_\pm$, the lengths $T$ and $L$ of the time and spatial intervals were chosen sufficiently large in the following sense. $T$ was chosen such that $\max_{-L\le x\le L}\abs{p_A(x,T)-p_A(x,T/2)}<5\times10^{-7}$ (and analogously for $p_B$). Our choice of $L$ yields
$\max_{-2L/3\le x\le 2L/3}\abs{p_A(x,T,\rh=\infty)-P(x)}<1.2\times10^{-3}$, where $P(x)$ is the analytically calculated cline of the one-locus case (and analogously for $p_B$). We imposed zero-flux boundary conditions; cf.\ \eqref{boundary_cond1}. Deviations of the numerically computed solution on $[-L,L]$ from the analytical solution on $\Reals$ close to the boundaries of this interval cannot be minimized arbitrarily because on a finite interval stationary allele frequencies are always strictly positive at the boundary. Values of $T$ and $L$ are given in the figure legends.

The reader may recall from \eqref{scaled_pars} that we use scaled parameters, i.e., $\rh=r\la=2r/\si^2$ is the recombination \emph{rate} relative to the diffusion constant $d=1/\la=\frac12\si^2$, which is half the migration variance. 
Following \eqref{cline_length}, we use the harmonic mean of $\a_+$ and $\a_-$, $H(\a_+,\a_-)$, as a measure for the selection intensity (before scaling by $d$). Then the ratio of recombination rate to selection intensity at locus $A$ is 
\begin{equation}
	\frac{r}{H(\a_+,\a_-)} = \frac{1}{\ep\la H(\a_+,\a_-)}\,.
\end{equation}
We expect that the strong-recombination approximation derived in Section \ref{sec:strong_rec} is accurate if this ratio is sufficiently bigger than 1.
For the parameters in Figs.\ \ref{fig:p_perturb_accur} and \ref{fig:D_perturb_accur}, we obtain $r/H(\a_+,\a_-) = 25/(16\ep) = 0.5625/\ep$. For those in Fig.\ \ref{fig:pApBD_rho}, we obtain $r/H(\a_+,\a_-) = 9/(8\ep) = 1.125/\ep$ and $r/H(\be_+,\be_-) = 15/(4\ep) = 3.75/\ep$. Indeed, Fig.\ \ref{fig:slope_function_of_rho} shows that the slope at the center of the allele-frequency clines is reasonably well approximated by the weak-recombination approximation if $r/H\gtrsim1$.

\begin{figure}[t]
	\centering
	\includegraphics[width=0.7\textwidth]{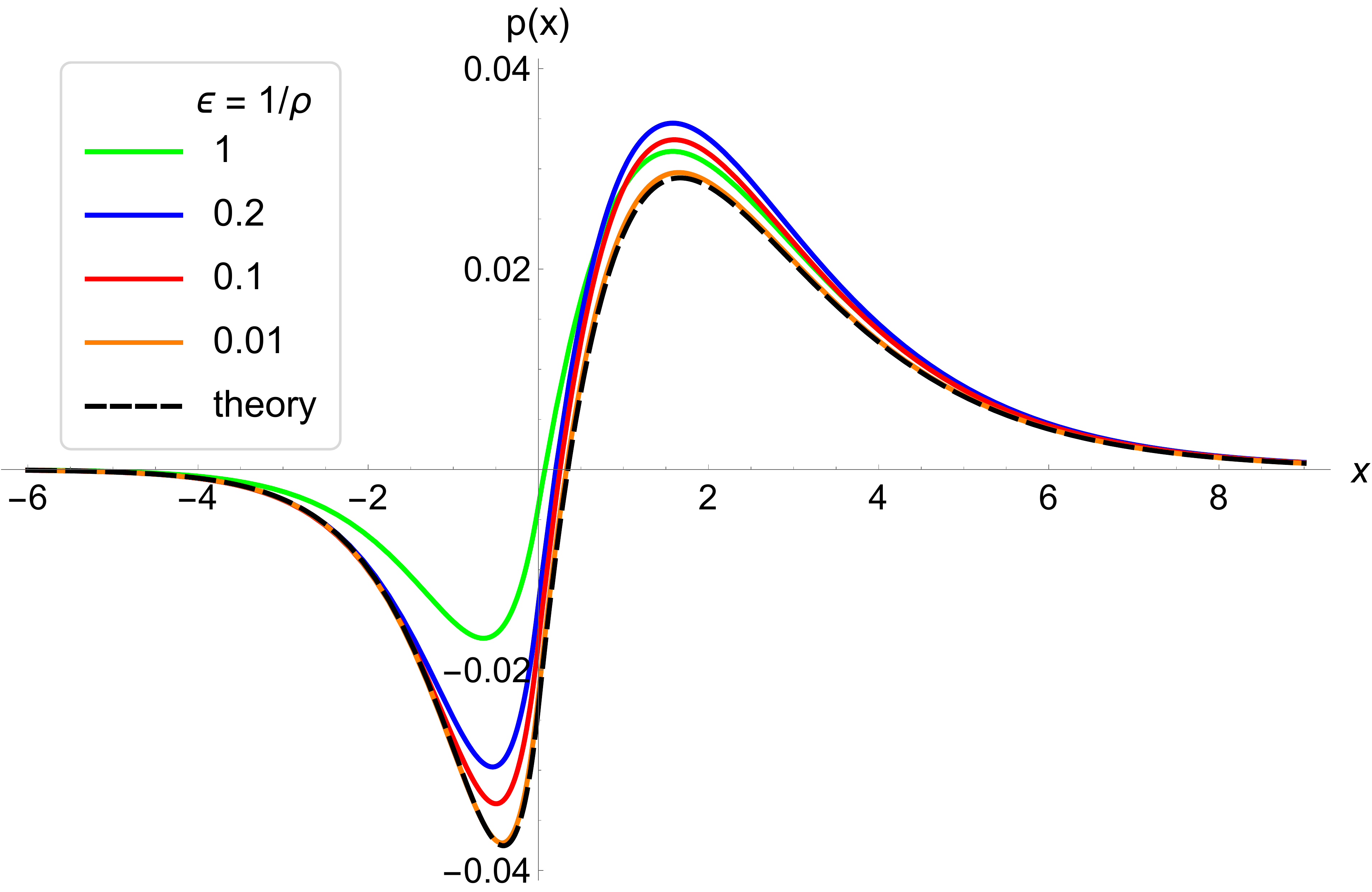}
\caption{Accuracy of the first-order perturbation $p(x)$ in the absence of dominance ($h_A=h_B=0)$. The selection parameters are $\la=1$, $\a_+=\be_+=0.4$, $\a_-=\be_-=1.6$. Because the loci are equivalent, $p(x)$ (dashed black line) was computed from the explicit formulas given in Theorem \ref{thm_explicit}. The solid lines shows $[p_A(x)-P(x)]/\ep$, where $p_A(x)$ and $P(x)$ were obtained by numerical integration as described in the main text. We chose $T=100$ and $L=12$. The ordering of lines in the legend coincides with their order near $x=-1$.}\label{fig:p_perturb_accur}
\end{figure}

\subsection{Accuracy of the strong-recombination approximation}\label{sec:accuracy}
For a range of (scaled) recombination rates $\rh=1/\ep$, Fig.\ \ref{fig:p_perturb_accur} compares the analytical approximation $p(x)$ from Theorem \ref{thm:p(x)} with $\frac{1}{\ep}(p_A(x)-P(x))$, where $p_A(x)$ and $P(x)$ where obtained by numerical integration of the PDE as described above. Here, the loci are equivalent and dominance is absent. Qualitatively similar results were found for a variety of parameter combinations including strong dominance and loci of different effects. If the selection strength is similar to that in Fig.\ \ref{fig:p_perturb_accur}, then the accuracy of the approximation $p(x)$ to $\frac{1}{\ep}(p_A(x)-P(x))$ is also similar to that in the figure (results not shown).

Figure \ref{fig:D_perturb_accur} shows $D(x)$ and its approximation $\ep d(x)$ in \eqref{D_ep}. Whereas $D(x)$ has a continuous first derivative at $x=0$, $d(x)$ has not. Also recall from Sect.\ \ref{sec:model} that $D(x)$ has different one-sided second derivatives at $x=0$. Note that in contrast to its approximation $\ep d(x)$, $D(x)$ is in general not maximized at $x=0$.

\begin{figure}[t]
	\centering
	\includegraphics[width=0.7\textwidth]{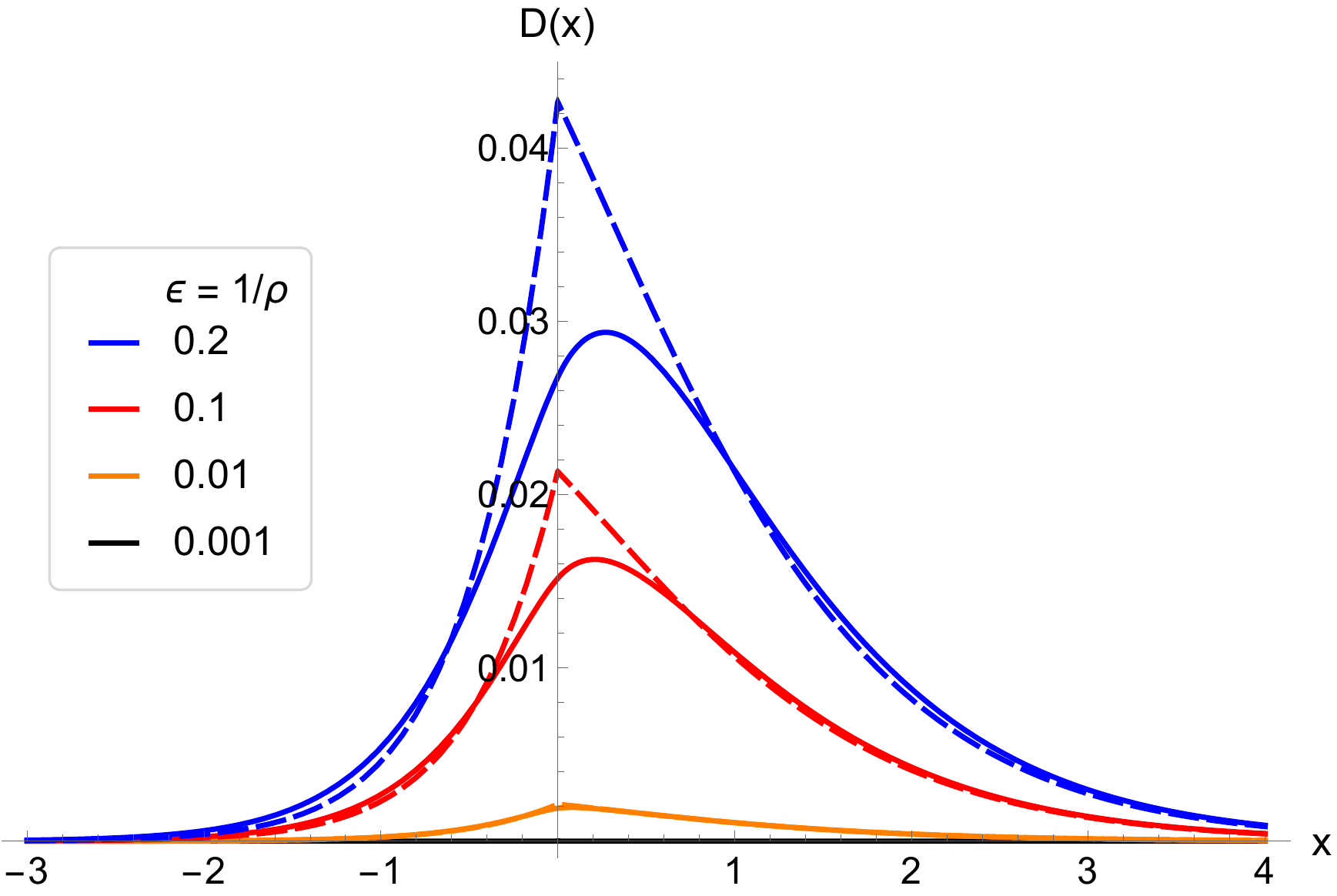}
\caption{Accuracy of the first-order perturbation $\ep d(x)$ in \eqref{D_ep} of $D(x)$. The latter was obtained by numerical solution of \eqref{eq:ABD_add}. Solid lines show the true $D(x)$, dashed lines the corresponding approximation $\ep d(x)$. The selection parameters are as in Fig.\ \ref{fig:p_perturb_accur}. We chose $T=100$ and $L=12$.}\label{fig:D_perturb_accur}
\end{figure}

\subsection{Dependence of the two-locus cline on the recombination rate}\label{sec:reco_rate_general}
To investigate the dependence of the shape of two-locus clines on the recombination rate for given parameters $\a_\pm$, $\be_\pm$, it is sufficient to vary $\rh$ and keep $\la$ constant. This follows from the scaling property \eqref{invariance}. 

Figure \ref{fig:pApBD_rho} illustrates the dependence on the scaled recombination rate $\rh$ of the clines in allele frequencies at each locus and of the linkage disequilibrium between the loci. As already shown by Slatkin (1975), stronger linkage has a stronger effect on the shape, in particular on the slope, of the cline at the locus under weaker selection. This is in accordance with intuition because the locus under stronger selection exerts stronger indirect selection on the weaker locus than vice versa.

\begin{figure}[t]
%\vglue-1cm
\begin{center}
\begin{adjustbox}{minipage=\linewidth,scale=0.8}
	\centering
	\includegraphics[height=\textheight]{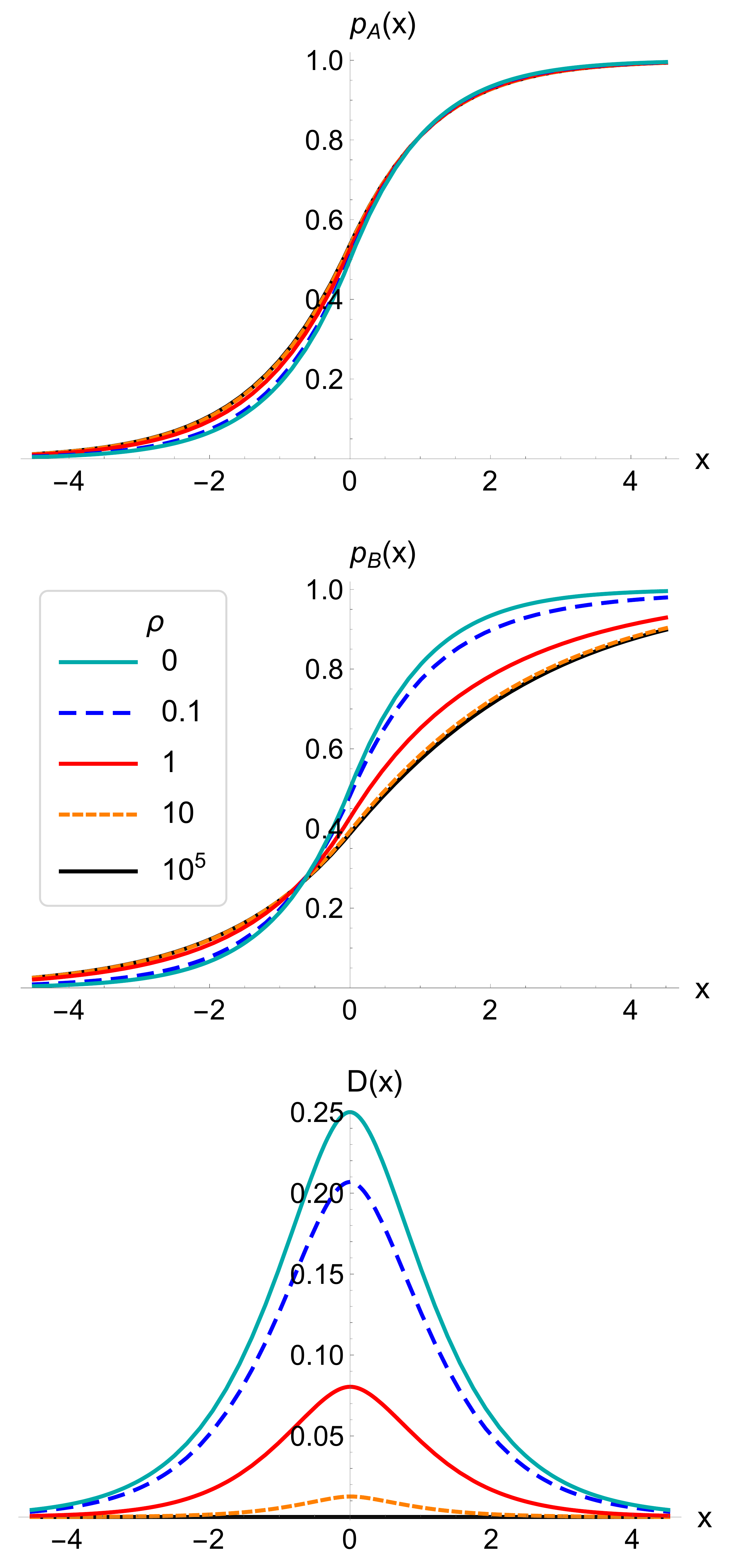}
	\end{adjustbox}
 \end{center}
\caption{Dependence of clines in $p_A$, $p_B$, and $D$ on the scaled recombination rate $\rh$. The other parameters are $\la=1$, $\a_+=2$, $\a_-=1.6$, $\be_+=0.4$, $\be_-=0.8$. For $\rh=0$, the maximum relative deviation on the interval $[-8,8]$ of the numerically calculated $p_A=p_B$ from the corresponding exact one-locus solution is $<2\times10^{-5}$. For $\rh=10^5$, the maximum relative deviations on the interval $[-8,8]$ of $p_A$ and $p_B$ from the corresponding exact one-locus solutions are about $8\times10^{-4}$ and $2.5\times10^{-3}$, respectively.  The ordering of lines in the legend coincides with their order for $x>0$ in the same panel and with the order in the bottom panel. We chose $T=200$ and $L=12$.}\label{fig:pApBD_rho}
\end{figure}
\clearpage

If the loci are unlinked ($\rh\to\infty$, black lines in Fig.\ \ref{fig:pApBD_rho}), the width \eqref{cline_width} of the cline at locus $\A$ ($\B$) is $\om_A=\frac{3\sqrt3}{2}\approx 2.598$ ($\om_B=3\sqrt\frac{5}{2}\approx 4.743$). For completely linked loci ($\rh=0$, green lines in Fig.\ \ref{fig:pApBD_rho}), we obtain $\om_A=\om_B=\sqrt{5}\approx2.236$ because the width can be calculated from the added step-size parameters, i.e., $\a_++\be_+=\a_-+\be_-=1.2$ (Section \ref{sec:norec}). Indeed, most gene-frequency change at loci $\A$ and $\B$ occurs in the intervals $(-\om_A,\om_A)$ and $(-\om_B,\om_B)$, respectively.

Figure \ref{fig:slope_function_of_rho} displays the slope in the center as a function of $\log_{10}(\rh)$ for various parameter combinations. In particular, it compares analytical results from the strong-recombination approximation with results obtained from numerical integration of the PDE \eqref{eq:ABD_add}. The solid lines, based on the analytical approximation, always overestimate the true slope (dots), but provide an accurate approximation if, approximately, $\rh>\sqrt{10}$, i.e., $\log_{10}(\rh)>0.5$. For small recombination rates, the approximation diverges, whereas the true slope approaches that of one-locus clines with values $\a_++\be_+$ and $-(\a_-+\be_-)$ for the step function (dashed horizontal lines on the left).

\begin{figure}[t]
	\centering
	\includegraphics[width=0.7\textwidth]{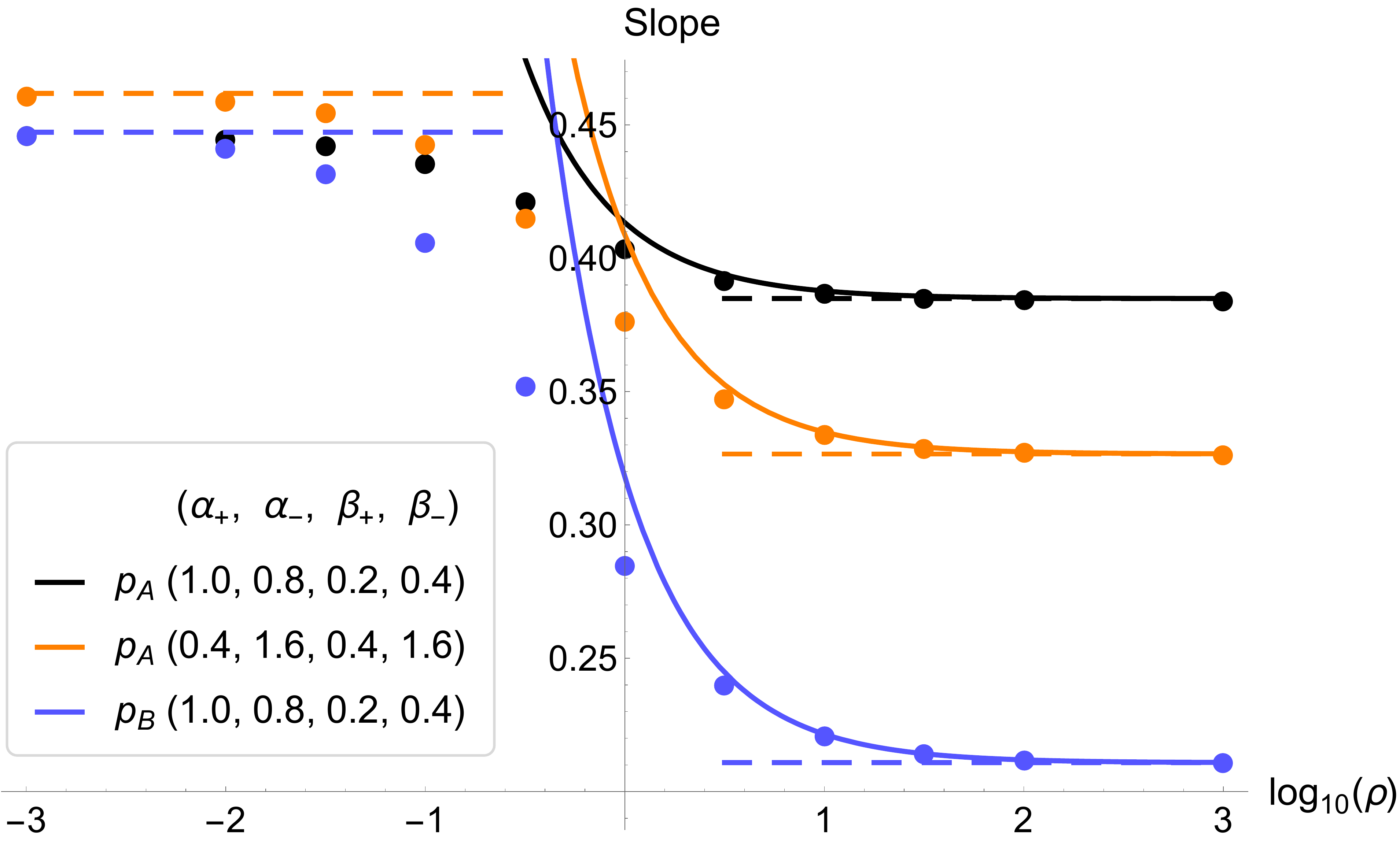}
\caption{Slope of allele-frequency clines in the center as a function of the (decadic) logarithm of the recombination rate $\rh$. Black and orange solid lines show $p_A'(0)=P'(0)+\frac{1}{\rh}\,p'(0)$ for the parameter combinations given in the legend (all with $\la=1$), where $P'(0)$ and $p'(0)$ are calculated from \eqref{P'(0)} and \eqref{p'(0)_equivalent_la}. The blue solid line shows $p_B'(0)$ and was calculated analogously. For the other parameter combination (orange line) loci are equivalent, whence $p_B=p_A$. The ordering of lines in the legend coincides with their order for $\log_{10}(\rh)>0$. The dots are the slopes obtained from numerically computed solutions of the PDE. The dashed horizontal lines on the right are for independent loci ($\rh\to\infty$) and are computed from \eqref{P'(0)}. The dashed horizontal lines on the left are for $\rh=0$ and are computed from \eqref{P'(0)} by using $\a_++\be_+$ and $-(\a_-+\be_-)$ as values for the step function \eqref{step_function_alpha}. The black dashed line on the left side is invisible because it coincides with the blue one since the loci are equivalent. }\label{fig:slope_function_of_rho}
\end{figure}

\section{Discussion}
The shape of a cline in allele, genotype, or phenotype frequencies has many determinants. Among these are the migration distribution, the spatial dependencies of fitnesses, but also dominance relations among alleles and linkage to other loci under selection. In this paper, we focus on the latter two and base our analysis on a simple but classical scenario for migration and selection (Haldane 1948; Slatkin 1973, 1975; Nagylaki 1975). We model migration by homogeneous, isotropic diffusion on the real line and let selection act additively on two loci, $\A$ and $\B$, such that in one region ($x\ge0$) one allele at each locus ($A_1$, $B_1$) is advantageous, whereas in the other region ($x<0)$ the other allele ($A_2$, $B_2$) is advantageous. We admit arbitrary intermediate, spatially independent dominance at both loci, and we assume that the two loci are recombining. Therefore, linkage disequilibrium is generated. We start with a summary of main results.

We studied the non-constant stationary, or equilibrium, solutions of the system \eqref{eq:ABD_add} of partial differential equations for the allele frequencies and the linkage disequilibrium. This system is obtained from the original set of differential equations \eqref{dynamics_pi} for the gamete frequencies (Slatkin 1975) by transforming the variables and rescaling time and parameters appropriately; see \eqref{scaled_pars}. The equilibrium equations resulting from system \eqref{eq:ABD_add} are a special case of a multilocus cline model developed by Barton and Shpak (2000).

Already Haldane showed that the slope of the cline in the center, i.e., at the environmental step, is proportional to the square root of the ratio of the selection coefficient and the migration variance. Therefore, the slope of the cline or its inverse, the width, can be used to infer the selection intensity if an estimate of the migration variance is available. Haldane's results reveal that the asymptotic behavior of the clines for no or complete dominance differ drastically; see also Nagylaki (1975), who concentrated on semi-infinite clines, or an environmental pocket. 

In generalization of Haldane's work, we derived an explicit solution for the single-locus cline with dominance (Theorem \ref{thm:1-locus_cline}). It shows that the slope of the cline in the center, $P'(0)$, is independent of the degree of dominance (eq.\ \eqref{P'(0)} and Figure \ref{fig:plot_dom_1loc}). Since we admit non-symmetric selection, i.e., the selection strength in the two regions may differ, the appropriate measure of selection intensity is the harmonic mean of the step sizes ($\a_+$, $\a_-$; eq.\ \ref{step_function_alpha}). Slatkin's (1973) characteristic length for allele-frequency variation and the width of a cline are generalized accordingly (Remark \ref{rem:width}).

The asymptotic behavior of the single-locus cline follows immediately from the explicit representation in \eqref{P(x)_dom} and is given by eqs.\ \eqref{P(x)_asymp}. The approach to constancy (to 0 as $x\to-\infty$, to 1 as $x\to+\infty$) is exponential, except when dominance is complete. Then the approach is inverse quadratic in the region where the advantageous allele is recessive. In the general case, the exponent of the exponential that describes convergence as $x\to\pm\infty$ is $a_\pm\sqrt{1\mp h}$, where $h$ is the dominance coefficient and
$a_\pm$ is the square root of the ratio of step size to half the migration variance; see eqs.\ \eqref{a_pm} and \eqref{scaled_pars}. Therefore, strong dominance does have a strong effect on the asymptotic properties of the cline. This is also reflected by a global measure for the steepness of a cline, which is investigated in Section \ref{sec:global_steepness}; cf.\ Figure \ref{fig:steep_dom}.  

Sections \ref{sec:strong_rec}, \ref{sec:norec}, and \ref{sec:numerics} are primarily dedicated to the exploration of the influence of linkage on the shape of two-locus clines. Most of the analysis is performed for arbitrary intermediate dominance. Our main result is Theorem \ref{thm:p(x)}. For strong recombination relative to selection and diffusion, it provides the approximations for the clines of (marginal) one-locus allele frequencies and of the linkage disequilibrium. In other words, the quasi-linkage-equilibrium approximation of the two-locus cline is derived. The allele-frequency cline at locus $\A$ is $p_A \approx P+(1/\rh)p$, where $P$ is the single-locus cline at $\A$. The perturbation term $p$ is the solution of the linear second-order differential equation \eqref{p''inhomog} and can be obtained by integration; see \eqref{p(x)}, \eqref{define_k_pm}, \eqref{k_0}. The linkage disequilibrium is approximately $(1/\rh)d(x)$, where $d(x)$ is given by \eqref{D_ep}. 

Corollary \ref{Cor:prop_p(0)_gen} summarizes simple, general properties of the perturbation term $p(x)$. It shows that $p'(0)$ is always positive. Hence, the cline gets steeper in the center with tighter linkage, as is expected intuitively (see Slatkin 1975 and Barton 1983). Depending on the parameters, $p(0)$ can be positive, negative, or zero. In fact, $p(x)$ can be positive on the real line, negative, or change sign once (Figure \ref{fig:p_perturb}). Thus, linkage to other loci may affect the shape of the allele-frequency clines in complex ways. We also note that, in contrast to $P'(0)$, $p'(0)$ does depend on dominance, although only weakly.

If multiple additive loci contribute to fitness, then terms resulting from all pairwise and high-order linkage disequilibria have to be added to the right-hand side of the differential equations for the allele frequencies, i.e., to \eqref{eq:ABD_a} and the corresponding equations for the other loci. If recombination is strong, only pairwise linkage disequilibria will matter and the equilibrium equations for the allele frequencies will have the same form as eqs.\ (24) and (25) in Barton and Shpak (2000). Also differential equations for the linkage disequilibria can be deduced. A quasi-linkage-equilibrium approximation for the pairwise linkage disequilibria was derived by Barton and Shpak (2000). As already noted above, our equation \eqref{D_ep} is a special case of their equation (15). Putting all this together, a multilocus generalization of the second-order differential equation \eqref{p''inhomog} for the perturbation term $p(x)$ in \eqref{approx_eps} can be derived. This has a more complicated inhomogeneous term, resulting from all pairwise linkage disequilibria, and is awaiting analysis. Based on the investigation of Barton (1983) of a model with hybrid inferiority (and spatially uniform selection), we expect that the cumulative effect of many loci may considerably steepen the allele-frequency clines.

In Section \ref{sec:asymptotics}, we derived the asymptotic properties of the perturbation term $p(x)$ and of the quasi-linkage-equilibrium approximation of $p_A$. Proposition \ref{prop:asymptotic} shows that, unless there is complete dominance, $p_A$ approaches constancy (0 or 1) as $x\to\pm\infty$ at the same exponential rate as the single-locus cline $P$. What differs is the multiplicative factor in front of the exponential, which depends on the recombination rate and the other parameters. This factor may increase or decrease with $\rh$ \eqref{pA_asymp}.

For equivalent loci, we calculated an explicit expression for $p(x)$, which is relatively simple if dominance is absent (Theorem \ref{thm_explicit}), but complicated in general (Appendix \ref{app:explicit}). For the slope $p'(0)$ and the value $p(0)$ simple expressions are obtained if there is no dominance (Corollary \ref{Cor:prop_p(0)_explicit_nodom}), and somewhat more complicated expressions if there is dominance (Corollary \ref{Cor:prop_p(0)_explicit_dom}). They all are easily evaluated numerically and provide insight into the effects of dominance on the slope in the center of two-locus clines (Figure \ref{fig:p'(0,h)/p'(0,0)}). These corollaries also show that for equivalent loci, $p(x)$ changes sign once and is positive for large $x$, and negative for small $x$. Therefore, $p_A(x) \approx P(x) + (1/\rh)p(x) > P(x)$ holds above a threshold, and $p_A(x) < P(x)$ below that threshold.

In Section \ref{sec:global_steepness}, we studied the global, cumulative measure of steepness $s(P)$ in \eqref{def:steepness}. It is the square of a measure investigated by Liang and Lou (2011). For a quite general one-locus cline model on a bounded domain, these authors had shown that this measure is a strictly monotone increasing function of $\la$, i.e., stronger selection or weaker migration increase the steepness of the cline. This is also true in our setting, and \eqref{steep_scale_P} quantifies the dependence of $s(P)$ on $\la$. In contrast to the slope in the center, this cumulative measure depends on dominance, although in a complicated way \eqref{steep_P_dom}. This is visualized in Figure \ref{fig:steep_dom}. 

For two recombining loci, we showed that an increase in the strength of selection causes an increase in the steepness $s(p_A)$ of the one-locus cline $p_A$ \eqref{scale_s_pA_rh}. Moreover, this increase is stronger the smaller the recombination rate $\rh$ is. We conjecture, but could not prove, that $s(p_A)$ increases if $\rh$ decreases. This conjecture is supported by numerical evaluations (e.g., Figures \ref{fig:J_steep_general} and \ref{fig:plot_J_dom}) and by a proof for the case of loci of equal effects without dominance (Appendix \ref{app:steep_explicit}). In summary, the dependence of the global measure $s(p_A)$ on the parameters is qualitatively the same as that of the slope in the center, $p_A'(0)$, if there is no dominance. However, in contrast to the slope, which is a local measure, the global measure reflects the influence of dominance. The advantage of the slope in the center is that it will be easier to estimate in practice.

In Section \ref{sec:norec}, we briefly treated the case of no recombination, when the model becomes formally equivalent to a one-locus four-allele system. We conjecture that the cline formed by the gametes $AB$ and $ab$ is globally asymptotically stable, i.e., the intermediate gametes $Ab$ and $aB$ are eventually lost. This cline can be computed from the one-locus formula \eqref{P(x)}. For weak migration, stability of this cline was proved for a finite number of demes (Akerman and B\"urger 2014) and also for a bounded continuous habitat (Su and B\"urger, unpublished). For strong migration, clines usually vanish in bounded discrete or continuous habitats (see below).

Finally, in Section \ref{sec:numerics}, we solved the system \eqref{eq:ABD_add} of PDEs numerically to (i) check the accuracy of our  quasi-linkage-equilibrium approximation (Section \ref{sec:accuracy}, Figures \ref{fig:p_perturb_accur} and \ref{fig:D_perturb_accur}) and to (ii) explore the whole range of possible recombination rates (Section \ref{sec:reco_rate_general}, Figures \ref{fig:pApBD_rho} and \ref{fig:slope_function_of_rho}). The latter results complement and extend findings by Slatkin (1975) for essentially the same model, and by Barton (1983, 1999), Barton and Shpak (2000), and Geroldinger and B\"urger (2015) for related models, who showed numerically that tighter linkage steepens clines (in allele frequency or mean phenotype) and increases linkage disequilibrium. Unfortunately, no general theory is currently available to prove or quantify these findings for arbitrary recombination.

The model investigated in this paper assumes that the habitat is unbounded and, in particular, that each allele is advantageous in an unbounded region (of infinite measure). As a consequence, a cline exists for arbitrarily strong migration relative to selection. However, as the migration variances increases, the cline becomes increasingly flatter. In our model, this follows easily from the scaling, or invariance, properties derived in Lemma \ref{lem:invariance}; for the single-locus case, see also Remark \ref{rem:width}. For very general fitness functions, but still assuming that each allele is advantageous in an unbounded region, Conley (1975) and Fife and Peletier (1977) proved that clines generally exist independently of the strength of migration. As shown by Nagylaki's (1975) analysis of an environmental pocket, an allele that is favored in a bounded region and disadvantageous in unbounded region will be lost under sufficiently strong migration. This is in line with the apparently generic property of migration-selection models on bounded domains that clinal or, more general, genetic variation can be maintained only if selection is sufficiently strong relative to migration. Otherwise, i.e., below a threshold value of the ratio of selection intensity to migration rate, the type with the highest spatially averaged fitness will swamp the whole population. For recent reviews of cline models, we refer to Nagylaki and Lou (2008) and Lou et al.\ (2013), and for reviews of models with discrete demes to Lenormand (2002), Nagylaki and Lou (2008), and B\"urger (2014).

We conjecture that for the present step environment on the real line, the two-locus cline exists, is unique, and globally asymptotically stable for all parameter combinations satisfying our assumptions. Our conjecture is based on the invariance property derived in Lemma \ref{lem:invariance} and the apparent existence and stability for strong recombination and arbitrary $\lambda$ (Theorem \ref{thm:p(x)}).

Lenormand et al.\ (1998) estimated the migration variance and the selection pressure from geographic gradients (clines) in allele frequencies at two loosely linked insecticide resistance loci in the mosquito \emph{Culex pipiens pipiens}. Their simulation results showed that different hypotheses of dominance (recessiveness, no dominance, complete dominance) do not have an important effect on the estimates of migration variance, whereas the recombination rate has a significant effect. Our study provides analytical support for their finding. However, our model is not directly applicable to the case study of Lenormand et al.\ because their organisms are adapted to an environmental pocket. Then the slope of the cline at the transition between the environments and the maximum allele frequencies are needed to estimate the migration variance (selection intensity) if independent estimates of the selection intensity (migration variance) are available (Nagylaki 1975). It would be of interest to extend the present analysis to study clines caused by environmental pocket. In particular, the effects of dominance should be identifiable by studying the asymptotic properties of a cline (see Section \ref{sec:asymptotics}) or the global measure of steepness that was investigated in Section \ref{sec:global_steepness}.

\section*{Acknowledgments}
I am very grateful to Professor Tom Nagylaki and to Dr.\ Linlin Su for perceptive and inspiring communication. It is a pleasure to thank Professor Nick Barton, Professor Tom Nagylaki, and Dr.\ Swati Patel for useful comments on a previous version. Financial support by the Austrian Science Fund (FWF) through Grant P25188-N25 is gratefully acknowledged.

\begin{appendix}
\numberwithin{figure}{section}
\section{Appendix}\label{sec:Appendix}
\subsection{Generality of the fitness scaling \eqref{fitscheme}}\label{app:fitness}
We assume absence of epistasis and assign spatially dependent fitnesses to one-locus genotypes as follows:
\begin{equation}\label{one-loc_fit}
\begin{tabular}{ccc}
	$AA$ & $Aa$ & $aa$ \\
\hline
	$2\a_1(x)$ & $\a_1(x)+\a_2(x) + \dom_A(\a_1(x)-\a_2(x))$ & $2\a_2(x)$ 
\end{tabular}\;,
\end{equation}
and analogously for the other locus. The dominance coefficient $\dom_A$ could also depend on $x$, but in view of our applications, we assume constant dominance coefficients.
Therefore, suppressing the variable $x$, we obtain the genotypic fitnesses 
\begin{equation}\label{fitscheme1}
\begin{tabular}{c|ccc}
	&  $BB$ & $Bb$ & $bb$\\
\hline
	$AA$ & $2\a_1 +2\be_1 $ & $2\a_1 +\be_1 +\be_2 + \dom_B(\be_1-\be_2) $ & $2\a_1 +2\be_2 $\\
	$Aa$ & $\begin{cases}\a_1 +\a_2 + 2\be_1 \\+\dom_A(\a_1-\a_2)\end{cases}$ 
		& $\begin{cases}\a_1 +\a_2 +\be_1 +\be_2 \\+\dom_A(\a_1-\a_2)+\dom_B(\be_1-\be_2) \end{cases}$ 
		& $\begin{cases}\a_1 +\a_2 + 2\be_2 \\ +\dom_A(\a_1-\a_2) \end{cases}$\\
	$aa$ & $2\a_2 +2\be_1 $ & $2\a_2 +\be_1 +\be_2 +\dom_B(\be_1-\be_2) $ & $2\a_2 +2\be_2 $\\
\end{tabular}\;.
\end{equation}

Because in the continuous-time model \eqref{dynamics_pi_a}, the same function of $x$ can be added to all genotypic fitness functions $w_{ij}(x)$ without changing the dynamics, we subtract $\a_1 +\a_2 +\be_1 +\be_2$ from all entries in \eqref{fitscheme1} and obtain
\begin{equation}\label{fitscheme2}
\begin{tabular}{c|ccc}
	&  $BB$ & $Bb$ & $bb$\\
\hline
	$AA$ & $\a_1 -\a_2 +\be_1 -\be_2 $ & $\a_1 -\a_2 + \dom_B(\be_1-\be_2) $ & $\a_1 -\a_2 -\be_1 +\be_2 $\\
	$Aa$ & $\dom_A(\a_1-\a_2)+ \be_1 -\be_2 $ & $\dom_A(\a_1-\a_2)+ \dom_B(\be_1-\be_2)$ & $\dom_A(\a_1-\a_2)-\be_1 +\be_2 $\\
	$aa$ & $-\a_1 +\a_2 +\be_1 -\be_2 $ & $-\a_1 +\a_2 + \dom_B(\be_1-\be_2)$ & $-\a_1 +\a_2 -\be_1 +\be_2 $\\
\end{tabular}\;.
\end{equation}
Defining
\begin{equation}
	\a(x) = \a_1(x)-\a_2(x) \quad\text{and}\quad \be(x) = \be_1(x)-\be_2(x)\,,
\end{equation}
we arrive at \eqref{fitscheme}. 

Assuming that $\a_i(x)$ and $\be_i(x)$ are step functions with a single step at $x=x_0$, and that $A$ and $B$ are advantageous on $[x_0,\infty)$ and disadvantageous on $(-\infty,x_0)$, we obtain the assumptions \eqref{step_functions} and \eqref{sign_ab}. Without loss of generality, we choose $x_0=0$.

\subsection{The integrals in \eqref{define_k_pm}}\label{app:one-fold_int}
We show that the integrals occurring in the definition \eqref{define_k_pm} of $k_\pm(x)$ can be obtained by one-fold integration.
We define
\begin{equation}
	\ps_+(x) = \int_0^x \frac{1}{[P'(y)]^2}\,dy \quad\text{and}\quad \ps_-(x) = -\int_x^0 \frac{1}{[P'(y)]^2}\,dy\,.
\end{equation}

\begin{lemma}\label{p(x)_num_int}
\begin{subequations}\label{int_I_Pp^2}
\begin{alignat}{2}
	\int_0^x \frac{I_+(y)}{[P'(y)]^2}\,dy	&= \ps_+(x)I_+(x) + \int_0^x \ps_+(y)[P'(y)]^2Q'(y)[1+\dom_B-2\dom_B Q(Y)]\,dy	&&\quad\text{if } x\ge0 \,, \\
	\int_x^0 \frac{I_-(y)}{[P'(y)]^2}\,dy	&= -\ps_-(x)I_-(x) - \int_x^0 \ps_-(y)[P'(y)]^2Q'(y)[1+\dom_B-2\dom_B Q(Y)]\,dy	&&\quad\text{if } x\le0\,,
\end{alignat}
\end{subequations}
where
\begin{equation}
	\ps_\pm(x) = \tilde\ps_\pm(x) - \tilde\ps_\pm(0)\,,
\end{equation}
and 
\begin{subequations}\label{tilde_psi}
\begin{align}
	\tilde\ps_+(x) &= \frac{1}{72a_+^3(1-h_A)^{7/2}}\biggl\{12xa_+\sqrt{1-h_A}(5-21h_A+36h_A)^2 \notag\\
	&\quad+ \frac{1}{Z_+-(1+3h_A)Z_+^{-1}}\Bigl[Z_+^3 + 16(1-3h_A)Z_+^2 \notag\\ 
	&\quad -3(1+3h_A)^{-1}(43-126h_A+51h_A^2+288h_A^3+432h_A^4)Z_+ \notag\\
	&\quad- 32 (1-3h_A)(5-3h_A+18h_A^2) - (1+3h_A)^2Z^{-1}  +16(1-3h_A)(1+3h_A)^2 Z^{-2} \notag\\
	&\quad + (1+3h_A)^3 Z^{-3}\Bigr] \biggr\} \qquad\text{ if } x\ge0\,,\; h_A<1\,, \\
	\tilde\ps_-(x) &= \frac{1}{72a_-^3(1+h_A)^{7/2}}\biggl\{12a_-\sqrt{1+h_A}(5+21h_A+36h_A)^2 \notag\\
	&\quad+ \frac{1}{(1-3h_A)Z_--Z_-^{-1}}\Bigl[(1-3h_A)^3 Z_-^3 + 16(1+3h_A)(1-3h_A)^2Z_-^2 \notag\\ 
	&\quad -3(43+126h_A+51h_A^2-288h_A^3+432h_A^4)Z_- \notag\\
	&\quad- 32 (1+3h_A)(5+3h_A+18h_A^2) - (1-3h_A)Z^{-1}  +16(1+3h_A) Z^{-2} \notag\\
	&\quad + Z^{-3}\Bigr] \biggr\}    \qquad\text{ if } x\le0\,, \; h_A>-1  \,.
\end{align}
\end{subequations}
We recall that $Z_\pm$ is defined in \eqref{Z_pm}.
\end{lemma}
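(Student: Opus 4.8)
The statement splits into two parts that I would handle separately. The reduction formulas \eqref{int_I_Pp^2} follow from a single integration by parts: differentiating \eqref{IpIm_gen} gives $I_+'(y)=-[P'(y)]^2Q'(y)[1+\dom_B-2\dom_B Q(y)]$ and $I_-'(y)=+[P'(y)]^2Q'(y)[1+\dom_B-2\dom_B Q(y)]$, whereas by construction $\ps_\pm'(y)=1/[P'(y)]^2$ and $\ps_\pm(0)=0$. Integrating $\int_0^x I_+/[P']^2$ by parts with $u=I_+$, $v=\ps_+$ (and likewise for the $-$ integral), the boundary term at the inner endpoint $0$ drops out because $\ps_\pm(0)=0$, and what remains is exactly the single integral displayed in \eqref{int_I_Pp^2}. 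This part is short and routine.

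The substance is the evaluation \eqref{tilde_psi}, which I would obtain by reducing to a rational integral. Inserting the explicit $P'$ from \eqref{P'(x)_dom}, for $x\ge0$ and $\dom_A<1$ one has $P'(x)=6a_+(1-\dom_A)^{3/2}\,Z_+\bigl(Z_+^2-(1+3\dom_A)\bigr)\big/\bigl(Z_+^2+2(1-3\dom_A)Z_++(1+3\dom_A)\bigr)^2$ with $Z_+=A_+e^{X_+}$, $X_+=xa_+\sqrt{1-\dom_A}$ (cf.\ \eqref{Z_pm}, \eqref{xix+}), so $1/[P'(x)]^2$ is rational in $Z_+$; using $dZ_+=a_+\sqrt{1-\dom_A}\,Z_+\,dx$, the antiderivative of $1/[P']^2$ becomes, up to an explicit positive constant, $\int R(Z)\,dZ$ with $R(Z)=\bigl(Z^2+2(1-3\dom_A)Z+(1+3\dom_A)\bigr)^4\big/\bigl(Z^3(Z^2-(1+3\dom_A))^2\bigr)$, $Z=Z_+$. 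Here $R$ is a linear polynomial plus a proper rational function with poles only at $Z=0$ (order $3$) and at $Z=\pm\sqrt{1+3\dom_A}$ (order $2$). The decisive structural fact is that the residues of $R$ at $Z=\pm\sqrt{1+3\dom_A}$ vanish: writing $s^2=1+3\dom_A$ and $N(Z)=(Z^2+2(1-3\dom_A)Z+(1+3\dom_A))^4$, the relation $s^2=1+3\dom_A$ gives $N'(s)/N(s)=4/s$, whence $\frac{d}{dZ}\bigl[N(Z)/(Z^3(Z+s)^2)\bigr]=0$ at $Z=s$, and likewise at $Z=-s$. Consequently the partial-fraction integral of $R$ contains no $\ln(Z\mp s)$; it produces a rational function of $Z_+$ plus a single term proportional to $\ln Z_+$, and $\ln Z_+=\ln A_++xa_+\sqrt{1-\dom_A}$ furnishes precisely the $x$-linear term in \eqref{tilde_psi}. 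Collecting the coefficients yields $\tilde\ps_+$; then $\tilde\ps_-$ follows either from the same computation on the branch $x<0$ of \eqref{P'(x)_dom}, or more cheaply from the symmetry used in the proof of Theorem \ref{thm:1-locus_cline} (under $P(x)\mapsto1-P(-x)$, $\dom_A\mapsto-\dom_A$, $a_+\mapsto a_-$ one has $Z_+\mapsto Z_-$ and $1+3\dom_A\mapsto1-3\dom_A$, which maps the $+$ antiderivative to the $-$ one). Finally $\ps_\pm(x)=\tilde\ps_\pm(x)-\tilde\ps_\pm(0)$ because both sides have derivative $1/[P']^2$ and $\ps_\pm(0)=0$.

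I expect the only genuine difficulty to be the bulk of the algebra: performing the polynomial division and partial-fraction expansion of $R$, checking the residue identities, and assembling the many rational coefficients into the exact closed form of \eqref{tilde_psi}. This is finite and mechanical; a convenient independent check is to differentiate the stated $\tilde\ps_\pm$ and match the result against \eqref{P'(x)_dom}. Once the vanishing of the residues at $\pm\sqrt{1+3\dom_A}$ is in hand, nothing conceptual remains.
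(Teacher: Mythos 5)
Your proposal is correct, and the two halves line up with the paper as follows. For the reduction formulas \eqref{int_I_Pp^2} you do exactly what the paper does: a single integration by parts using $\ps_\pm'(y)=1/[P'(y)]^2$, $\ps_\pm(0)=0$, and $I_\pm'(y)=\mp[P']^2Q'[1+\dom_B-2\dom_BQ]$; the signs you obtain match the displayed identities. For the antiderivatives \eqref{tilde_psi} the paper offers no derivation at all --- it states that they ``were derived with the help of \emph{Mathematica}'' and ``can be checked by differentiation'' --- whereas you supply an actual analytic route: the substitution $Z=Z_+=A_+e^{X_+}$ turns $\int dx/[P']^2$ into $\mathrm{const}\cdot\int R(Z)\,dZ$ with $R(Z)=\bigl(Z^2+2(1-3\dom_A)Z+(1+3\dom_A)\bigr)^4/\bigl(Z^3(Z^2-(1+3\dom_A))^2\bigr)$, and the decisive point is your residue computation at $Z=\pm s$, $s^2=1+3\dom_A$: since $N'(\pm s)/N(\pm s)=\pm4/s$ one gets $N'/N-3/Z-2/(Z\mp s)\big|_{Z=\pm s}=0$, so no $\ln(Z\mp s)$ terms survive and the antiderivative is a rational function of $Z_+$ plus a multiple of $\ln Z_+=\ln A_++xa_+\sqrt{1-\dom_A}$ --- which is precisely the structure of $\tilde\ps_+$ (the $x$-linear term plus a rational remainder over $Z_+-(1+3\dom_A)Z_+^{-1}$). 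I checked the residue identity and it holds for $-1<\dom_A<1$. Your symmetry argument ($P\mapsto1-P(-\cdot)$, $\dom_A\mapsto-\dom_A$, $a_+\mapsto a_-$) for passing to $\tilde\ps_-$ is the same device the paper uses in the proof of Theorem \ref{thm:1-locus_cline}. What your approach buys is an explanation of \emph{why} the closed form exists and has this shape, rather than a bare computer-algebra output; what it does not do --- and neither does the paper --- is exhibit the final coefficient matching, which both of you correctly relegate to mechanical algebra verifiable by differentiation. No gap.
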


\begin{proof}
The equations \eqref{int_I_Pp^2} follow immediately by partial integration because $\ps_\pm(0)=0$ and $I_\pm(0)$ are finite. Equations \eqref{tilde_psi} were derived with the help of \emph{Mathematica}. They can be checked by differentiation.
\end{proof}

Expressions for $h_A=\pm1$ are available in a \emph{Mathematica} notebook.

\subsection{Explicit solution for equivalent loci with dominance}\label{app:explicit}
We assume $b_\pm=a_\pm$ and $-1<\dom=\dom_A=\dom_B<1$. Under these assumptions, we can calculate the first-order term $p(x)$ of $p_A(x)$ explicitly. 
In the following, we provide the most important steps of the derivation. Our main result are equations \eqref{k_pm_explicit_dom}, which present $k_\pm(x)$. Together with \eqref{p(x)}, they provide $p(x)$. Corollary \ref{Cor:prop_p(0)_explicit_dom} gives explicit expressions for $p(0)$ and $p'(0)$. The dependence of $p'(0)$ on the degree of dominance is analyzed thereafter.

\subsubsection{Derivation of $k_\pm(x)$}\label{sec:derive_k_pm_dom}
First, we compute $I_+(x)$ in \eqref{IpIm_gen}. Therefore we need to evaluate integrals of the form $\int_x^\infty [f'(y)]^3[1+\dom-2\dom f(y)]\,dy$, where 
\begin{equation}\label{f''}
	f''(x)=a_1f(x)+a_2f(x)^2+a_3f(x)^3.
\end{equation}
With $f(x) = P(x)$, \eqref{1-locus ODE} yields
\begin{equation}\label{a_123}
	a_1= -a_+^2(1+\dom)\,, \; a_2 = a_+^2(1+3\dom)\,, \; a_3 = -2a_+^2\dom\,.
\end{equation}
We start with the following simple observations:
\begin{subequations}
\begin{align}
	\left[(f')^2\right]' &= 2f'(a_1f+a_2f^2+a_3f^3)\,, \\
	\left[f(f')^2\right]' &= 2f'(a_1f^2+a_2f^3+a_3f^4)+(f')^3\,, \\
	\int_x^\infty f(y)^nf'(y)\,dy &= \frac{1}{n+1}(1-f(x)^{n+1})\,.	
\end{align}
\end{subequations}
The first two identities follow from \eqref{f''}, and the last from one-fold partial integration and the boundary condition $f(\infty)=1$.

Therefore, we obtain by one-fold partial integration and the boundary condition $f'(\infty)=0$:
\begin{subequations}
\begin{align}
	\int_x^\infty [f'(y)]^3\,dy &= -f(x)[f'(x)]^2- 2\left(\frac{a_1}{3}+\frac{a_2}{4}+\frac{a_3}{5}\right) \notag\\
			&\quad+ 2f(x)^3\left(\frac{a_1}{3}+\frac{a_2f(x)}{4}+\frac{a_3f(x)}{5}\right) \,, \\
	\int_x^\infty [f'(y)]^3f(y)\,dy &= -f(x)^2[f'(x)]^2- 2\left(\frac{a_1}{4}+\frac{a_2}{5}+\frac{a_3}{6}\right) \notag\\
			&\quad+ 2f(x)^3\left(\frac{a_1}{4}+\frac{a_2f(x)}{5}+\frac{a_3f(x)}{6}\right) - \int_x^\infty [f'(y)]^3f(y)\,dy\,.
\end{align}
\end{subequations}
By collecting terms and substituting \eqref{a_123}, we arrive at 
\begin{align}\label{int_f_genp}
	&\int_x^\infty [f'(y)]^3[1+\dom-2\dom f(y)]\,dy =\frac{a_+^2}{30}(1-f)^2 \bigl[5(1+2f+3f^2) \notag \\
	&\quad\qquad +\dom(1+2f+3f^2-36f^3) - 20\dom^2(1-f)f^3\bigr] - f(1+\dom-\dom f)(f')^2\,,
\end{align}
where on the right-hand side we abbreviated $f(x)$ by $f$. With $f(x)=P(x)$ and \eqref{P(x)_dom}, \eqref{int_f_genp} yields
\begin{subequations}\label{I_+-(x)_domexp}
\begin{align}
	I_+(x) &= \frac{72a_+^2(1-h)^5}{5\bigl[Z_++2(1-3h)+(1+3h)Z_+^{-1}\bigr]^6} \biggl[ 5Z_+^3 + 15Z_+^2 + (15-27h)Z_+ \notag\\
	&\quad+ 2(5-27h-54h^2) + 3(1+3h)(5-9h)Z_+^{-1} + 15(1+3h)^2Z_+^{-2} \notag\\
	&\quad+ 5(1+3h)^3Z_+^{-3} \biggr] \quad\text{if } x\ge0 \,,\\
	I_-(x) &= \frac{72a_-^2(1+h)^5}{5\bigl[(1-3\dom)Z_- +2(1+3\dom)+Z_-^{-1}\bigr]^6} \biggl[5(1-3h)^3Z_-^3 + 15(1-3h)^2Z_-^2 \notag\\
	&\quad + 3(1-3h)(5+9h) Z_- + 2(5+27h-54h^2) + 3(5+9h)Z_-^{-1} \notag\\
	&\quad + 15 Z_-^{-2} + 5Z_-^{-3}\biggr] \quad\text{if } x<0 \,,
\end{align}
\end{subequations}
where the expression for $I_-(x)$ follows from an analogous computation using the boundary conditions at $-\infty$.

At $x=0$, these expressions simplify to
\begin{subequations}\label{I_+-(0)_domexp}
\begin{align}
	I_+(0) &= \frac{a_+^2}{30}(1-a_0)^3[5(1+a_0) + h(1-7a_0-14a_0^2) + 10h^2a_0^3] \,,\\
	I_-(0) &= \frac{a_-^2}{30}a_0^3[5(2-a_0) + h(20-35a_0+14a_0^2) + 10h^2(1-a_0)^3] \,.
\end{align}
\end{subequations}
With \eqref{I_+-(0)_domexp}, we obtain from \eqref{k_0},
\begin{subequations}
\begin{align}
	k_0 &= \frac{1}{5\sqrt3\sqrt{a_+^2+a_-^2}[1-h(2a_0-1)]} \notag \\
	&\quad\times\Biggl(\frac{a_+^3(1-a_0)^2}{a_-a_0}[5(1+a_0) + h(1-7a_0-14a_0^2) + 10h^2a_0^3]  \notag \\
	&\qquad -\frac{a_-^3a_0^2}{a_+(1-a_0)}[5(2-a_0) + h(20-35a_0+14a_0^2) + 10h^2(1-a_0)^3] \Biggr) \label{k0_dom1} \\
	&=  \frac{a_+(1-a_0)}{5\sqrt3\sqrt{1+2a_0-3ha_0^2}[3-2a_0+3h(1-a_0)^2][1-h(2a_0-1)]} \notag \\
	&\quad\times \Bigl[10(2a_0-1) - 2h(10-37a_0+37a_0^2) + h^2(2a_0-1)(10-43a_0+43a_0^2] \notag \\
	&\qquad + 3h^3a_0(1-a_0)(3-10a_0+10a_0^2) \Bigr]\,, 
\end{align}
\end{subequations}
where the second equality is obtained by using \eqref{get_a0_dom} to express $a_-$ in terms of $a_+$ and $a_0$.

We note that
\begin{equation}
	\pder{k_0}{h}\Bigl|_{h=0} = a_+\,\frac{a_0(1-a_0)(2-61a_0+88a_0^2-44a_0^3)}{5\sqrt3(3-2a_0)^2(1+2a_0)^{3/2}}.
\end{equation}
This is negative if $a_0\gtrsim 0.0345$.

Defining
\begin{subequations}
\begin{align}
	\ph_+(z) & = \sqrt{1-h}\Biggl[\frac{(7+9h)z^{-1}+8}{(1+3h)z^{-1}-z} - \frac{10(1-3h+(1+3h)z^{-1})}{z+2(1-3h)+(1+3h)z^{-1}} \Biggr] \,,\\
	\ph_-(z) & = \sqrt{1+h}\Biggl[\frac{(7-9h)z+8}{z^{-1}-(1-3h)z} + \frac{10(1+3h+(1-3h)z)}{z^{-1}+2(1+3h)+(1-3h)z} \Biggr] \,,
\end{align}
\end{subequations}
the following indefinite integrals can be written as
\begin{subequations}\label{IntIpmOverPp2expl}
\begin{align}
	15a_+\int^x\frac{I_+(y)}{[P'(y)]^2}\,dy &= \ph_+(Z_+) + \frac{4}{\sqrt{h}}\arctan\left(\frac{Z_++1-3h}{3\sqrt{1-h}\sqrt{h}}\right) - \frac{2\pi\sgn h}{\sqrt{h}}  \quad\text{if } x\ge0\,, \label{IntIpOverPp2expl}\\
	15a_-\int^x\frac{I_-(y)}{[P'(y)]^2}\,dy &= \ph_-(Z_-) + \frac{4}{\sqrt{h}}\arctanh\left(\frac{Z_-^{-1}+1+3h}{3\sqrt{1+h}\sqrt{h}}\right) + \frac{2 i\pi}{\sqrt{h}}  \quad\text{if } x<0\,. \label{IntImOverPp2expl}
\end{align}
\end{subequations}
These integrals can be found with \emph{Mathematica} after rearranging terms. Checking by differentiation is easily done in \emph{Mathematica}.
We note that $\eqref{IntImOverPp2expl}$ is obtained from \eqref{IntIpOverPp2expl} by the joint substitutions $Z_+\to Z_-^{-1}$, $a_+\to a_-$, and $h\to-h$. The constants $\frac{2\pi\sgn h}{\sqrt{h}}$ and $\frac{2 i\pi}{\sqrt{h}}$ were added only to have real-valued right-hand sides for every $h\in(-1,1)$. (Note that the argument of $\arctanh$ is greater than 1 if $h>0$, and imaginary if $h<0$.)

Then we obtain from \eqref{define_k_pm},
\begin{subequations}\label{k_pm_explicit_dom}
\begin{alignat}{2}
	k_+(x) &= k_0 + \frac{2a_+}{15}[\ph_+(Z_+)-\ph_+(A_+)] \notag \\
		&\quad + \frac{8a_+}{15\sqrt{h}}\Bigl[\arctan\left(\frac{Z_++1-3h}{3\sqrt{1-h}\sqrt{h}}\right) -\arctan\left(\frac{A_++1-3h}{3\sqrt{1-h}\sqrt{h}}\right)\Bigr] &&\quad\text{if } x\ge0\,, \label{k_p_explicit_dom}\\
	k_-(x) &= k_0 - \frac{2a_-}{15}[\ph_-(A_-)-\ph_-(Z_-)] \notag\\
		&\quad- \frac{8a_-}{15\sqrt{h}}\Bigl[\arctanh\left(\frac{A_-^{-1}+1+3h}{3\sqrt{1+h}\sqrt{h}}\right) - \arctanh\left(\frac{Z_-^{-1}+1+3h}{3\sqrt{1+h}\sqrt{h}}\right) \Bigr] &&\quad\text{if } x<0\,. \label{k_m_explicit_dom}
\end{alignat}
\end{subequations}
Although the terms with $\arctan$ and $\arctanh$ are non-real for certain ranges of values of $h$, their respective differences divided by $\sqrt{h}$ are always real and finite. This follows from the above comment or from the identity $\arctan z = -i \arctanh(iz)$. 

Combining \eqref{k_pm_explicit_dom} with \eqref{P'(x)_dom}, we find from \eqref{p(x)} explicit, but complicated, expressions for $p(x)$.

For the limits $\ka_\pm = \lim_{x\to\pm\infty} k_\pm(x)$, we obtain
\begin{subequations}\label{kappa_dom_explicit}
\begin{align}
	\ka_+ &= k_0 - \frac{2a_+}{15}\ph_+(A_+) + \frac{8a_+}{15\sqrt{h}}\Bigl[\frac{\pi \sgn h}{2} -\arctan\left(\frac{A_++1-3h}{3\sqrt{1-h}\sqrt{h}}\right)\Bigr] \,, \label{kappap_dom_explicit}\\
	\ka_- &= k_0 - \frac{2a_-}{15}\ph_-(A_-) - \frac{8a_-}{15\sqrt{h}}\Bigl[\arctanh\left(\frac{A_-^{-1}+1+3h}{3\sqrt{1+h}\sqrt{h}}\right) + \frac{i \pi}{2} \Bigr] \,. \label{kappam_dom_explicit}
\end{align}
\end{subequations}
We note that $(\ka_\pm)/a_\pm$ depends only on $a_0$ and $h$. It can be shown that $\lim_{a_0\uparrow1}\frac{\ka_+}{a_+}=0$ and
\begin{align}\label{kappap_exp_dom_limit}
	\lim_{a_0\downarrow0}\frac{\ka_+}{a_+} &= \frac{2\sqrt3(1-9h)+9\sqrt{1-h}(1+7h)}{45(1+3h)} \notag \\
			&\quad - \frac{8}{15\sqrt{h}}\left[\arctan\left(\frac{\sqrt{3(1-h)}+3(1-h)}{3\sqrt{1-h}\sqrt{h}}\right) - \frac{2\pi\sgn h}{\sqrt{h}}\right] \ge0\,.
\end{align}

We have shown numerically that $\ka_+>0$ whenever $0<a_0<1$, and that $\ka_+/a_+$ is a decreasing function of $a_0$ and of $h$. Similarly, $\ka_-<0$ if $0<a_0<1$, and $\ka_-/a_-$ is a decreasing function of $a_0$ and of $h$. 

\subsubsection{Properties of $p(x)$}
Corollary \ref{Cor:prop_p(0)_explicit_nodom} can be generalized as follows.

\begin{corollary}\label{Cor:prop_p(0)_explicit_dom}
Let $b_\pm=a_\pm$ and $-1<\dom=\dom_A=\dom_B<1$. Then $p(x)$ has the following properties:

{\rm (i)}
\begin{align}
	p(0) &=\frac{a_0^2(1-a_0^2)(a_+^2+a_-^2)}{15[1+(1-2a_0)h]}\, \bigl[10(2a_0-1) - 2(10-37a_0+37a_0^2)h \notag \\
	&\quad+(2a_0-1)(10-43a_0+43a_0^2)h^2 + 3a_0(1-a_0)(3-10a_0+10a_0^2)h^3\bigr] \,.
\end{align}
If $h=0$, then $p(0)\gtrless0$ if and only if $a_0\gtrless\frac12$, which is the case if and only if $a_+\gtrless a_-$.

{\rm (ii)}
\begin{equation}\label{p'(0)_equivalent_dom}
	p'(0) = \frac{a_+a_- \sqrt{a_+^2+a_-^2}}{5\sqrt{3}} \, a_0^2(1-a_0^2)[15+18(1-2a_0)h + (3-20a_0+20a_0^2)h^2]\,,
\end{equation}
which is always positive, as in the general case. 

{\rm (iii)} The frequency $p(x)$ changes sign once. If $p(x_k)=0$, then $p(x)>0$ if $x>x_k$ and $p(x)<0$ if $x<x_k$.
\end{corollary}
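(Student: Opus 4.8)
The plan is to obtain parts (i) and (ii) by straightforward specialization of the general formulas in Corollary \ref{Cor:prop_p(0)_gen} together with the explicit values of $I_\pm(0)$ computed in \eqref{I_+-(0)_domexp}. For (i), I would start from \eqref{p(0)_gen}, substitute $b_\pm=a_\pm$, $\dom_A=\dom_B=h$, and plug in $I_+(0)$ and $I_-(0)$ from \eqref{I_+-(0)_domexp}. The numerator $b_+^2 I_+(0)-b_-^2 I_-(0) = a_+^2 I_+(0)-a_-^2 I_-(0)$ then contains the two factors $a_+^4(1-a_0)^3$ and $a_-^4 a_0^3$; using the defining relation \eqref{get_a0_dom}, i.e., $a_0^2\phi(a_0,h) = a_+^2/(a_+^2+a_-^2)$ together with $(1-a_0)^2\phi(1-a_0,-h) = a_-^2/(a_+^2+a_-^2)$, one can replace $a_+^4(1-a_0)^3$ and $a_-^4 a_0^3$ by common expressions proportional to $(a_+^2+a_-^2)^2 a_0^3(1-a_0)^3$ times polynomials in $a_0$ and $h$. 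After combining and cancelling the common factor $a_0(1-a_0)[1+(1-2a_0)h]$ present in the denominator of \eqref{p(0)_gen} (note $1+\dom_A-2\dom_A a_0 = 1+(1-2a_0)h$), the claimed cubic-in-$h$ bracket emerges. The $h=0$ sign statement is then immediate since the bracket reduces to $10(2a_0-1)$ and $a_0>\tfrac12 \iff a_+>a_-$ by the remark following \eqref{get_a0}.

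For (ii), I would argue identically from \eqref{p'(0)_gen}: set $b_\pm=a_\pm$ and $\dom_A=\dom_B=h$, so that $p'(0) = 2\sqrt3[a_-^2 a_+^2 I_+(0)+a_+^2 a_-^2 I_-(0)]/(a_+a_-\sqrt{a_+^2+a_-^2}) = 2\sqrt3\,a_+a_-(I_+(0)+I_-(0))/\sqrt{a_+^2+a_-^2}$. Again insert \eqref{I_+-(0)_domexp}, use \eqref{get_a0_dom} and its mirror identity to homogenize the $a_\pm$-powers into $(a_+^2+a_-^2)$, and collect the polynomial in $h$. The sum $I_+(0)+I_-(0)$ is manifestly positive because each summand is (recall $P'>0$, $Q'>0$, and by \eqref{intermediate_dom} the integrand $1+h-2hQ$ is nonnegative), which re-confirms $p'(0)>0$; alternatively one checks directly that $15+18(1-2a_0)h+(3-20a_0+20a_0^2)h^2>0$ for $|h|\le 1$, $0\le a_0\le 1$, by noting the discriminant in $h$ is negative (or by bounding: the quadratic in $h$ is at least $15-18-17 = -20$ only in the worst formal case, so one does the honest estimate over the compact square).

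For (iii), the argument parallels the proof of Corollary \ref{Cor:prop_p(0)_explicit_nodom}(iii): by \eqref{p(x)}, $p(x)=P'(x)k(x)$ with $P'>0$ on $\Reals$, so the sign of $p$ is the sign of $k(x)=k_\pm(x)$; since $k_\pm$ is strictly monotone increasing on its half-line (from \eqref{define_k_pm}, the integrands $I_\pm(y)/[P'(y)]^2$ being positive), it suffices to show $\ka_- < 0 < \ka_+$, after which $k$ has a single zero $x_k$ and $p>0$ for $x>x_k$, $p<0$ for $x<x_k$. The signs $\ka_+>0$ and $\ka_-<0$ for all $0<a_0<1$, $-1<h<1$ are exactly the numerically established facts reported after \eqref{kappap_exp_dom_limit}; I would cite those and, where a rigorous argument is wanted, supplement with the monotonicity of $\ka_\pm/a_\pm$ in $a_0$ and $h$ together with the boundary evaluations ($\ka_+/a_+\to 0$ as $a_0\uparrow 1$ and the nonnegative limit \eqref{kappap_exp_dom_limit} as $a_0\downarrow 0$, and symmetrically for $\ka_-$).

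The main obstacle is the algebraic reduction in (i) and (ii): verifying that the seemingly inhomogeneous combinations of $a_+^4(1-a_0)^3$ and $a_-^4a_0^3$ collapse, after using the quartic \eqref{get_a0_dom} and its $-h$ mirror, into the stated clean polynomials times $a_0^2(1-a_0^2)(a_+^2+a_-^2)$ — this is bookkeeping-heavy and best done (and checked) with a computer algebra system, exactly as the paper does elsewhere. The only genuinely non-routine point is establishing $\ka_-<0<\ka_+$ with full rigor for all admissible $(a_0,h)$; since the paper itself flags this as verified numerically, I would present (iii) as resting on that verification plus the clean structural facts ($P'>0$, monotone $k_\pm$), which is the same standard this paper adopts for its other sign conjectures.
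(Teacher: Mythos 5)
Your proposal is correct and follows essentially the same route as the paper: the paper's own proof is a one-line reference to substituting the explicit values $I_\pm(0)$ from \eqref{I_+-(0)_domexp} into the general formulas of Corollary \ref{Cor:prop_p(0)_gen} and arguing (iii) as in Corollary \ref{Cor:prop_p(0)_explicit_nodom} via monotonicity of $k_\pm$ and the signs of $\ka_\pm$. You also correctly identify that $\ka_-<0<\ka_+$ is only established numerically in the dominance case, which is exactly the standard the paper itself adopts here.
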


The proof is analogous to that of Corollary \ref{Cor:prop_p(0)_gen} and based on the expressions derived in Section \ref{sec:derive_k_pm_dom}.

\subsubsection{Dependence of the slope $p'(0)$ on dominance}
To indicate the dependence of the slope $p'(0)$ on the dominance parameter $h$, we write $p'(0,h)$. Our aim is to investigate the ratio $p'(0,h)/p'(0,0)$. This has the advantage that this ratio depends only on $\tilde\a=\a_+/(\a_++\a_-)$, and not on $\a_+$ and $\a_-$ separately. Indeed, \eqref{p'(0)_equivalent_dom} implies
\begin{align}\label{p'(dom)_bounds_1}
	&\frac{p'(0,h)}{p'(0,0)} \\
	&= \frac{a_0(h,\tilde\a)^2(1-a_0(h,\tilde\a))^2[15+18h(1-2a_0(h,\tilde\a)) + h^2(3-20a_0(h,\tilde\a)+20a_0(h,\tilde\a)^2)]}{a_0(0,\tilde\a)^2(1-a_0(0,\tilde\a))^2}\,, \notag
\end{align}
where $a_0$ depends on $h$ and $\tilde\a$; see \eqref{get_a0_dom}. 

Figure \ref{fig:p'(0,h)/p'(0,0)} displays $p'(0,h)/p'(0,0)$ as a function of $h$ for different values of $\tilde\a$. It suggests that the minimum is obtained as either $(h,\tilde\a)\to(-1,0)$ or $(h,\tilde\a)\to(1,1)$, and the maximum is obtained as $(h,\tilde\a)\to(1,0)$ or $(h,\tilde\a)\to(-1,1)$. We will show that for every $h\in(-1,1)$,
\begin{equation}\label{p'(dom)_bounds}
	\frac{4}{5} < \frac{p'(0,h)}{p'(0,0)} < \frac{6}{5}
\end{equation}
holds for every $\tilde\a$. 

First, we determine the limits $\tilde\a\to0$ and $\tilde\a\to1$, from which we obtain the minimum and maximum values.
Corollary \ref{cor:properties of P}(iv) informs us that for each given $h$, $a_0(h,\tilde\a)$ is a monotone increasing function of $\tilde\a$; in addition, $a_0(h,\tilde\a)\to0$ as $\tilde\a\to0$ and  $a_0(h,\tilde\a)\to1$ as $\tilde\a\to1$. It is easy to show that if $\tilde\a$ is sufficiently close to 0, then $a_0(h,\tilde\a)=\sqrt{\tilde\a/(3+3h)}$ to leading order in $\tilde\a$. Therefore, taking the limit $\tilde\a\to0$ in \eqref{p'(dom)_bounds_1}, we obtain
\begin{equation}
	\frac{p'(0,h)}{p'(0,0)} \xrightarrow{\tilde\a\to0} \frac{5+h}{5}\,.
\end{equation}
If $\tilde\a$ is sufficiently close to 1, then $a_0(h,\tilde\a)=1-\sqrt{\tilde\a/(3-3h)}$ to leading order. Then, taking the limit $\tilde\a\to1$ in \eqref{p'(dom)_bounds_1}, we obtain
\begin{equation}
	\frac{p'(0,h)}{p'(0,0)} \xrightarrow{\tilde\a\to1} \frac{5-h}{5}\,.
\end{equation}

We have not yet proved that $p'(0,h)/p'(0,0)$ is always between the bounds given by \eqref{p'(dom)_bounds}. Figure \ref{fig:p'(0,h)/p'(0,0)} suggests that for every given $h$, $p'(0,h)/p'(0,0)$ is monotone in $\tilde\a$, and for every given $\tilde\a$, $p'(0,h)/p'(0,0)$ is monotone in $h$. The first statements seems to be true, but we cannot prove it. The second statement is valid if $a_0(h,\tilde\a)<\frac{1}{35}(34-\sqrt{421})\approx0.385$; then $p'(0,h)/p'(0,0)$ is strictly monotone increasing in $h$. It is also valid if $a_0(h,\tilde\a)>\frac{1}{35}(1+\sqrt{421})\approx0.615$; then $p'(0,h)/p'(0,0)$ is strictly monotone decreasing in $h$. If $\frac{1}{35}(34-\sqrt{421})<a_0(h,\tilde\a)<\frac{1}{35}(1+\sqrt{421})$, then $p'(0,h)/p'(0,0)$ is concave in $h$. This follows from differentiation of \eqref{p'(dom)_bounds_1} with respect to $h$, which yields
\begin{align}
	\pder{}{h}\frac{p'(0,h,\tilde\a)}{p'(0,0,\tilde\a)} &= \frac{a_0(h,\tilde\a)^2(1-a_0(h,\tilde\a))^2}{a_0(0,\tilde\a)^2(1-a_0(0,\tilde\a))^2} \notag \\
		&\quad\times \bigl[33(1-2a_0(h,\tilde\a)) + 9h -70ha_0(h,\tilde\a)(1-a_0(h,\tilde\a)) \bigr]\,,
\end{align}
and straightforward analysis of the term in brackets. Therefore, critical points of $p'(0,h,\tilde\a)/p'(0,0,\tilde\a)$ are on the curve given by equating this bracket to 0. Substituting this curve into $\pder{}{\tilde\a}\frac{p'(0,h,\tilde\a)}{p'(0,0,\tilde\a)}$ shows that the only critical point in the interior is at $h=0$ and $a_0=1/2$. The minima of the concave function $h\to p'(0,h,\tilde\a)/p'(0,0,\tilde\a)$ are assumed at $h=-1$ or $h=1$, and these values lie within the bounds given by \eqref{p'(dom)_bounds}. Therefore, we have shown \eqref{p'(dom)_bounds}.

\subsubsection{No dominance}
Let $h=0$. Then, instead of \eqref{IntIpmOverPp2expl}, we obtain the much simpler expressions:
\begin{subequations}
\begin{align}
	\int^x\frac{I_+(y)}{[P'(y)]^2}\,dy &= \frac{1}{a_+}\frac{Z_+^{-1}-2}{Z_+-Z_+^{-1}}  \quad\text{if } x\ge0\,, \label{IntIpOverPp2explh0} \\
	\int^x\frac{I_-(y)}{[P'(y)]^2}\,dy &=  \frac{1}{a_-}\frac{Z_-^{-1}-2}{Z_--Z_-^{-1}}  \quad\text{if } x<0\,. \label{IntImOverPp2explh0}
\end{align}
\end{subequations}
These integrals can be obtained either by direct integration of $I_+(y)/[P'(y)]^2$ if $h=0$, or from \eqref{IntIpmOverPp2expl} by taking the limit $h\to0$. Moreover, $k_0$ \eqref{k0_dom1} simplifies to
\begin{subequations}
\begin{align}
	k_0 &= \frac{a_+^4(1-a_0)^3(1+a_0) - a_-^4a_0^3(2-a_0)}{\sqrt3 a_+a_-\,\sqrt{a_+^2+a_-^2}a_0(1-a_0)} \\
		&= \frac{2a_+(1-a_0)(2a_0-1)}{\sqrt3\sqrt{1+2a_0}(3-2a_0)} 
		= \frac{2a_-a_0(2a_0-1)}{\sqrt3\sqrt{3-2a_0}(1+2a_0)}\,.
\end{align}
\end{subequations}
The expressions for $k_\pm(x)$ and $\ka_\pm$ follow easily and are given in \eqref{k_pm_explicit} and \eqref{kappapm_exp_nodom}, respectively. 

\subsection{Steepness of single-locus clines}\label{app:steepness}
For the single-locus cline $P(x)$ in \eqref{P(x)_dom}, the integral \eqref{def:steepness} can be computed with \emph{Mathematica}. We assume $-1<h<1$. The steepness $s(P)$ is
\begin{subequations}\label{steep_P_dom}
\begin{equation}
	s(P) = \int_0^\infty [P'(y)]^2\,dy + \int_{-\infty}^0 [P'(y)]^2\,dy \,,
\end{equation}
where, after rearrangement,
\begin{align}		
	&\int_0^\infty [P'(y)]^2\,dy = a_+\Biggl\{\frac{\sqrt{1-h}}{9h^2\bigl[A_++2(1-3h)+(1+3h)A_+^{-1}\bigr]^3} \notag \\
	&\quad\times \Bigl[(1-9h^2)A_+^2 + (5-15h+117h^2-189h^3+162h^4)A_+ \notag \\
	&\quad+2(1-3h)(1+3h^2)(5-3h+18h^2) +2(1-3h)^2(1+3h)(5+6h+9h^2)A_+^{-1} \notag \\
	&\quad+(1-3h)(1+3h)^2(5-3h+18h^2)A_+^{-2} +(1+3h)^3(1+3h^2)A_+^{-3}\Bigr] \notag \\
	&\quad+\frac{1-9h^2}{27h^{5/2}}\Bigl[\arctan\left(\frac{A_++1-3h}{3\sqrt{1-h}\sqrt{h}}\right) -\frac{\pi \sgn h}{2}\Bigr]\Biggr\} \,, \label{steep_P_doma}\\
	 &\int_{-\infty}^0 [P'(y)]^2\,dy = a_-\Biggl\{\frac{\sqrt{1+h}}{9h^2\bigl[(1-3h)A_- +2(1+3h)+A_-^{-1}\bigr]^3}  \notag \\
	&\quad\times \Bigl[(1-3h)^3(1+3h^2)A_-^3 + (1-3h)^2(1+3h)(5+3h+18h^2)A_-^2 \notag \\
	&\quad+2(1-3h)(1+3h)^2(5-6h+9h^2)A_- + 2(1+3h)(1+3h^2)(5+3h+18h^2) \notag \\
	&\quad+(5+15h+117h^2+189h^3+162h^4)A_-^{-1} + (1-9h^2)A_-^{-2} \Bigr] \notag \\
	&\quad + \frac{1-9h^2}{27h^{5/2}} \Bigl[\arctanh\left(\frac{A_-(1-3h)+1+3h}{3\sqrt{1+h}\sqrt{h}}\right) -\arctanh\left(\frac{1+3h}{3\sqrt{1+h}\sqrt{h}}\right)\Bigr] \Biggr\}\,. \label{steep_P_domb}
\end{align}
\end{subequations}
Although these expressions are too complicated to yield analytical insight, they can be computed readily. In the limit $h\to0$, they simplify to
\begin{subequations}\label{steep_P_nodom1}
\begin{align}		
	&\int_0^\infty [P'(y)]^2\,dy = \frac{6a_+(1+5A_+-5A_+^2+15A_+^3)}{5(1+A_+)^5} \,,\\	
	 &\int_{-\infty}^0 [P'(y)]^2\,dy = \frac{6a_-A_-^2(15-5A_-+5A_-^2+15A_-^3)}{5(1+A_-)^5}\,.
\end{align}
\end{subequations}
Employing the substitutions \eqref{A_pm_dom}, we obtain 
\begin{subequations}\label{steep_P_nodom2}
\begin{align}		
	&\int_0^\infty [P'(y)]^2\,dy = \frac{a_+}{15}[9 - (2-a_0)(1+2a_0)\sqrt3\sqrt{1+2a_0}] \,,\\	
	 &\int_{-\infty}^0 [P'(y)]^2\,dy = \frac{a_-}{15}[9 - (1+a_0)(3-2a_0)\sqrt3\sqrt{3-2a_0}]\,.
\end{align}
\end{subequations}
We emphasize that taking the limits $h\to1$ in \eqref{steep_P_doma} or $h\to-1$ in \eqref{steep_P_domb} does not yield valid results. In fact, this is obvious from \eqref{A_pm_dom} and \eqref{F_pm}.

For the rest of this subsection, we assume $h=0$.
To derive \eqref{steep_P_nodom}, we observe from \eqref{get_a0} that
\begin{equation}\label{a_+nodom1}
	a_+ = a_0\sqrt{3-2a_0}\sqrt{a_+^2+a_-^2} \,.
\end{equation}
Because \eqref{get_a0} is equivalent to
\begin{equation}\label{geta0_from_a-}
	(1-a_0)^2(1+2a_0) = \frac{a_-^2}{a_+^2+a_-^2}\,,
\end{equation}
we obtain
\begin{equation}\label{a_+nodom2}
	a_- = (1-a_0)\sqrt{1+2a_0}\sqrt{a_+^2+a_-^2}\,.
\end{equation}
Substituting \eqref{a_+nodom1} and \eqref{a_+nodom2} into \eqref{steep_P_nodom2}, we find \eqref{steep_P_nodom}.

There seems to be no simple representation of $s(P)$ in \eqref{steep_P_nodom} solely in terms of $a_+$ and $a_-$. However, we found the following accurate approximation:
\begin{equation}\label{s(P)_approx_nodom}
	s(P) \approx \frac{6-2\sqrt6}{5}\,\sqrt{H(a_+^2,\a_-^2)}\,,
\end{equation}
where $(6-2\sqrt6)/5\approx 0.220$ (results not shown). It is proportional to $P'(0)$ in \eqref{P'(0)}.
%\begin{equation}
%	s(P) \approx \frac{3\sqrt2-2\sqrt3}{5}\,\sqrt{\la}\sqrt{\a_++\a_-}\, \sqrt{1-\left(\frac{\a_+ - \a_-}{\a_++\a_-}\right)^2}\,.
%\end{equation}

To compare the global and the local measure of steepness of a cline, we assume that $\la$ and $\a_++\a_-$, or equivalently $a_+^2+a_-^2$, are fixed but arbitrary. We already know that then both $P'(0)$ and $s(P)$ are maximized at $a_+=a_-$, or $a_0=\tfrac12$, and are symmetric about this value. From \eqref{P'(0)} we have $P'(0)=\frac{1}{\sqrt6}a_+$ if $a_+=a_-$. Together with \eqref{steep_P_a0=1/2}, this yields
\begin{equation}
	\frac{P'(0)}{s(P)} = \frac{5}{12}(2+\sqrt6)\approx1.854\,.
\end{equation}
Furthermore, starting from \eqref{P'(0)} and \eqref{steep_P_nodom}, straightforward calculations yield
\begin{equation}
	\lim_{\a_+\to0}\frac{P'(0)}{s(P)} = \frac{5}{23} (2 + 3\sqrt3)\approx1.564
\end{equation}
and \eqref{P'(0)/s(P)_bounds}.

\subsection{Steepness of two-locus clines}
In the first subsection, we derive an explicit expression for the integral $J$ in \eqref{J_def} for loci of equal effects without dominance and show that $J>0$.
We also compare the two measures of steepness of $p_A$, $p_A'(0)\approx P'(0)+\frac{1}{\rh}p'(0)$ and $s(p_A) \approx s(P)+\frac{2}{\rh}J$, and show that their ratio varies only between narrow bounds. 

In the second subsection, we provide a method for numerical computation of $J$ in the general case and apply it to support our conjecture that $J$ is always positive.

\subsubsection{Properties of $J$ for loci of equal effects without dominance}\label{app:steep_explicit}
%From steepness.nb
We assume $a_\pm=b_\pm$ and $h_A=h_B=0$. We use partial integration and \eqref{1-locus ODE} to obtain
\begin{align}
	J &= - \int_{-\infty}^\infty P''(x) p(x) dx \notag\\
		&= \la\a_+ \int_0^\infty P(x)(1-P(x)) p(x) dx - \la\a_- \int_{-\infty}^0 P(x)(1-P(x)) p(x) dx\,.
\end{align}
With the explicit expression for $P(x)$ and $p(x)$ given in Sects.\ \ref{sec:one_locus} and \ref{sec:strong_rec}, these integrals can be computed: 
\begin{align}
	&\int_0^\infty P(x)(1-P(x)) p(x) dx \notag \\
	&\quad = \frac{54(3-2a_0) - \sqrt3\sqrt{1+2a_0}(86-127a_0+114a_0^2-77a_0^3+22a_0^4)}{63(1+2a_0)}
\end{align}
and
\begin{align}
	&\int_{-\infty}^0 P(x)(1-P(x)) p(x) dx \notag \\
	&\quad= \frac{54(1+2a_0) - \sqrt3\sqrt{3-2a_0}(18+42a_0+15a_0^2-11a_0^3+22a_0^4)}{63(3-2a_0)}\,.
\end{align}
Straightforward algebra using \eqref{get_a0}, \eqref{geta0_from_a-}, and $A_0=a_0^2(3-2a_0) = 1-(1-a_0)^2(1+2a_0) = 1 - (1-A_0)$, yields
\begin{align}\label{J_equal_general}
	J &= \frac{\sqrt2}{21}\la^{3/2}\left(\frac{\a_++\a_-}{2}\right)^{3/2} \Bigl[18\bigl(A_0^{3/2}+(1-A_0)^{3/2}\bigr)  \notag \\
			&\quad -\sqrt3\sqrt{3-2a_0}\sqrt{1+2a_0}(6-4a_0-19a_0^2+46a_0^3-23a_0^4)\Bigr]\,.
\end{align}
Obviously, $J$ is proportional to $\la^{3/2}$. Moreover, it is easily shown that $J>0$ because $\a_+>0$ and $\a_- > 0$. Indeed, as a function of $a_0$, $J$ is symmetric about $a_0=\tfrac12$ and is maximized at $a_0=1/2$, i.e., for symmetric selection ($\a_+=\a_-)$. Then it simplifies to
\begin{equation}\label{J_at_a0=1/2}
	J = \frac{1}{56}(48-19\sqrt6)(\la\a_+)^{3/2}\,,
\end{equation}
where $\frac{1}{56}(48-19\sqrt6)\approx0.0261$. $J$ is minimized and equals 0 if either $a_0=0$ or $a_0=1$, i.e., if $\a_+=0$ or $\a_- =0$, respectively (see also the more general Figure \ref{fig:J_steep_general}).

\begin{figure}[t]
	\centering
	\includegraphics[width=0.6\textwidth]{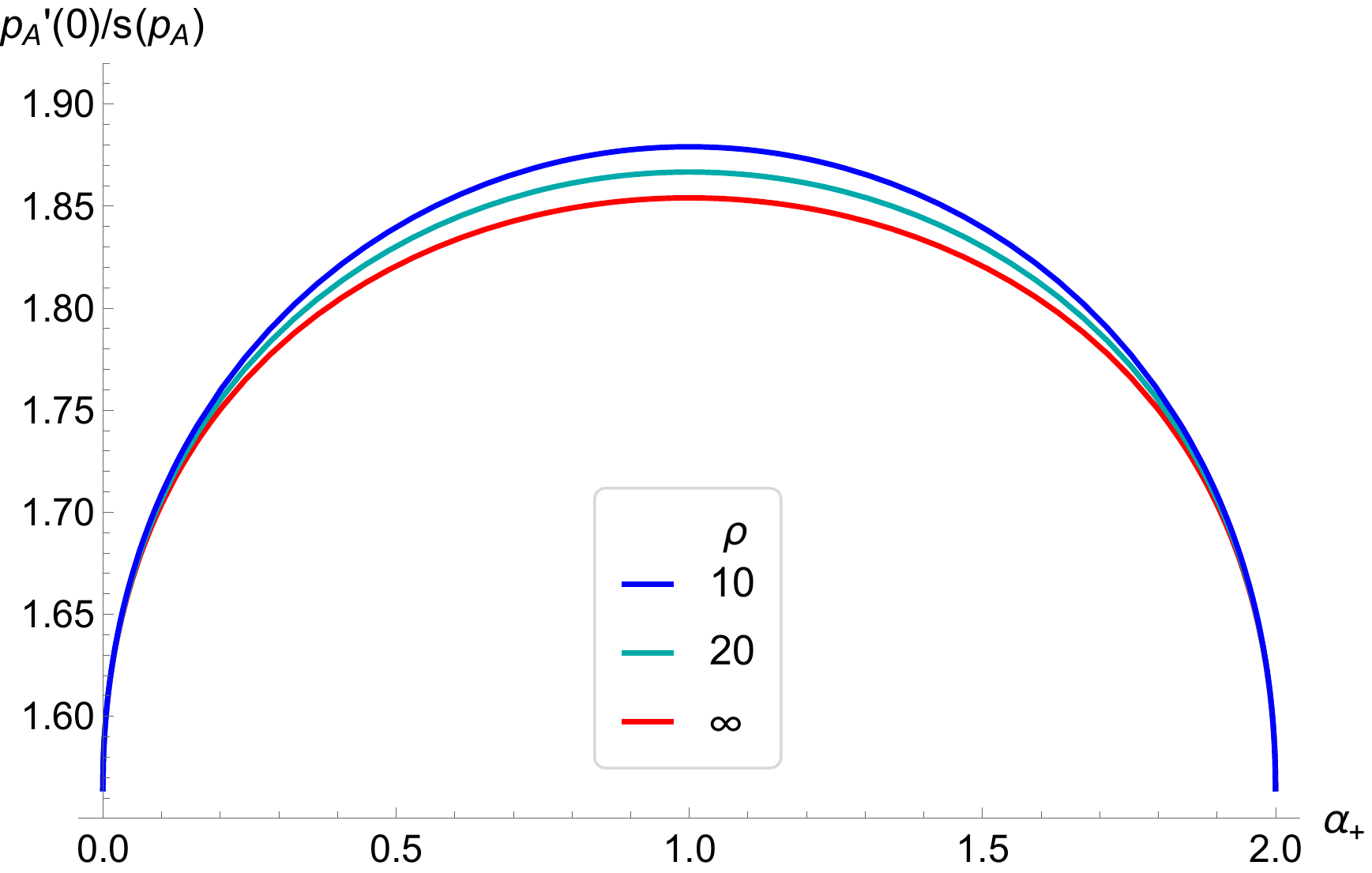}
\caption{Comparison of measures of steepness of a cline. This figure shows $p_A'(0)/s(p_A)$ as a function of $\a_+$ for $\rh=\infty$ (independent loci), $\rh=20$, and $\rh=10$. The loci have equal effects, $\la=1$, and $\a_++\a_-=2$. The slope $p_A'(0) \approx P'(0) + \frac{1}{\rh}\,p'(0)$ is computed from \eqref{P'(0)} and \eqref{p'(0)_equivalent_la}, and the steepness $s(p_A)$ in \eqref{steep_pA} from \eqref{steep_P_nodom} and \eqref{J_equal_general}.}\label{fig:steepness_vs_slope}
\end{figure}

Now, we compare the two measures of steepness of $p_A$, $p_A'(0)\approx P'(0)+\frac{1}{\rh}p'(0)$ and $s(p_A) \approx s(P)+\frac{2}{\rh}J$, for two loosely linked loci.

We assume that the loci are equivalent. Then we obtain from the first equality in \eqref{p'(0)_equivalent}, by substituting \eqref{geta0_from_a-} raised to the power $3/2$,
\begin{equation}\label{p'(0)_equivalent_la}
	p'(0) = \frac{\sqrt3}{\sqrt8}\frac{\la^{3/2} H(\a_+,\a_-)^{3/2}}{(3-2a_0)(1+2a_0)}\,.
\end{equation}
For fixed $\la$ and $\a_++\a_-$, and considered as a function of $a_0$ (or $\a_+$), $p'(0)$ is symmetric about $a_0=\tfrac12$ and maximized at $a_0=0$. There it assumes the value
\begin{equation}
	p'(0) = \frac{1}{8}\frac{\sqrt3}{\sqrt2}(\la\a_+)^{3/2} \approx 0.153(\la\a_+)^{3/2}\,,
\end{equation}
and it decays to zero as $a_0\to0$ or $a_0\to1$. From \eqref{J_at_a0=1/2} we find that at $a_0=\tfrac12$, or $\a_+=\a_-$,
\begin{equation}
	\frac{p'(0)}{2J} \approx 2.937\,.
\end{equation}
However, we find that $\lim_{\a_+\to0} \frac{p'(0)}{2J} = 0$. Nevertheless, 
if $10^{-5}\le\a_+\le1-10^{-5}$, then
\begin{equation}
	2.406 < \frac{p'(0)}{2J} < 2.937\,.
\end{equation}
Therefore, unless selection against one of the alleles essentially vanishes, $p'(0)/(2J)$ varies only between quite narrow bounds.

Figure \ref{fig:steepness_vs_slope} displays the ratio $p_A'(0)/s(p_A)$ as a function of $\a_+$ for given step size $\a_++\a_-$ and three values of recombination. For the dependence of the slope of a cline on the recombination rate in its full range, we refer to Figure \ref{fig:slope_function_of_rho}.

\begin{figure}[t]
	\centering
	\includegraphics[width=0.6\textwidth]{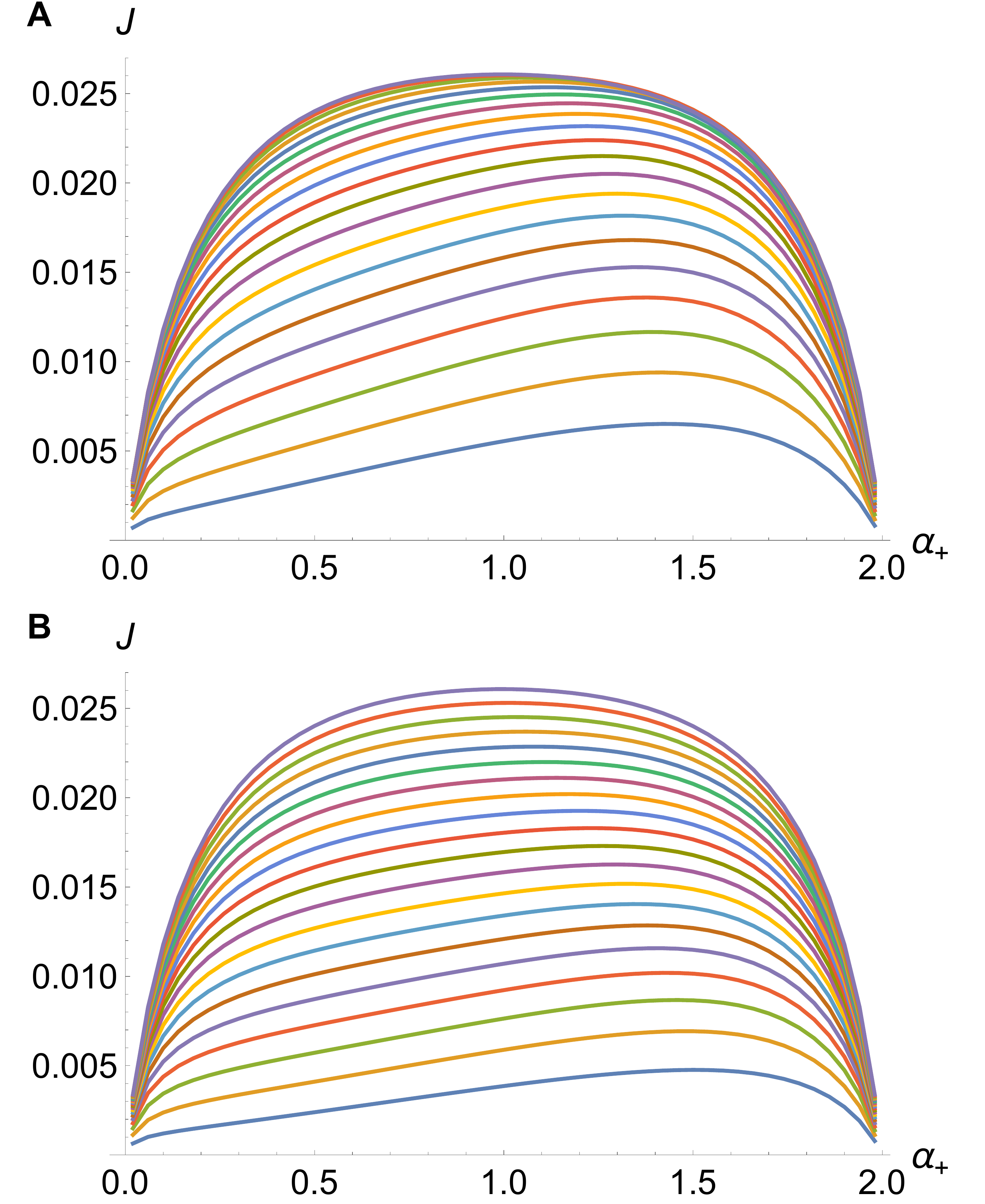}
\caption{The integral $J$ in \eqref{J_def} as a function of various parameters. In both panels, $J$ is shown as a function of $\a_+$ for the constant step size $\a_++\a_-=2$ and different combinations of $\be_+$ and $\be_-$. In {\bf A}, $\be_++\be_-=2$ and $\be_+$ is varied from $0.05$ to $1$ in steps of $0.05$ (curves from bottom to top). In {\bf B}, $\be_-=1$ and $\be_+$ is varied from $0.05$ to 1 in steps of $0.05$ (curves from bottom to top). In both panels, $\la=1$, and there is no dominance, $h_A=h_B=0$. Note that by \eqref{steep_scale_J}, a change in $\la$ affects all curves in the same way.}\label{fig:J_steep_general}
\end{figure}

\subsection{Computation and properties of $J$ in the general case}\label{app:J_general}
To compute the steepness of two-locsu clines in the general case, we use \eqref{steep_pA} and evaluate $J$ as follows:
\begin{align}
	J &=\int_{-\infty}^\infty P'(x)p'(x)\, dx \notag \\
		&= P(+\infty)p'(+\infty) - P(-\infty)p'(-\infty) - \int_{-\infty}^\infty P(x)p''(x)\, dx \notag\\
		&= - \int_{-\infty}^\infty P(x)p''(x)\, dx\,,
\end{align}
where we used the boundary conditions \eqref{boundary_cond_pqd}. Since $p''(x)$ satisfies \eqref{p''inhomog}, we obtain
\begin{subequations}\label{J_general}
\begin{align}
	- \int_0^\infty P(x)p''(x)\, dx &= a_+^2\int_0^\infty [1-2P(x)+h_A(1-6P(x)+6P(x)^2)]P(x)p(x)\,dx \notag \\
		&\quad+2b_+^2 \int_0^\infty P(x)P'(x)Q'(x)(1+h_B-2h_BQ(x))\, dx \label{Jplus_gen}
\end{align}
and
\begin{align}
	- \int_{-\infty}^0 P(x)p''(x)\, dx &= -a_-^2\int_0^\infty [1-2P(x)+h_A(1-6P(x)+6P(x)^2]P(x)p(x)\,dx \notag \\
		&\quad-2b_-^2 \int_0^\infty P(x)P'(x)Q'(x)(1+h_B-2h_BQ(x))\, dx \label{Jminus_gen}\,.
\end{align}
\end{subequations}
Both expressions can be computed numerically from the formulas given for $P(x)$, $P'(x)$, and $p(x)$ in the main text. We used \emph{Mathematica} for these evaluations.

In general, we cannot prove that $J$ is always positive. In the absence of dominance, multiplication of all four fitness effects ($\a_\pm$,$\be_\pm$) by the same positive constant $c$ can be compensated for by multiplying $\la$ by $c$. Therefore, \eqref{steep_scale_J} shows that, up to a multiplicative factor, $J$ depends only on three independent parameters. In Figure \ref{fig:J_steep_general}, we fixed the step size $\a_++\a_-$ and varied the other parameters as explained in the legend. The figure suggests convincingly that $J$ is always positive. Thus, as expected from intuition, tighter linkage apparently always increases the measures $s(p_A)$ and $s(p_B)$ of steepness of the allele-frequency clines. Based on additional numerical computations for a wide variety of parameter combinations (not shown), we conjecture that $J$ is always positive.

\end{appendix}

\section*{References}
\parindent=0pt
\normalbaselineskip16pt
\baselineskip16pt

{\everypar={\hangindent=0.8cm \hangafter=1}
%
%\bibitem{AeschbacherBuerger2014}
Aeschbacher, S., B\"urger, R. 2014. The effect of linkage on establishment and survival of locally beneficial mutations. Genetics 197, 317-336.

%\bibitem{AkermanBuerger2014}
Akerman, A., B\"urger, R. 2014. The consequences of gene flow for local adaptation and differentiation: a two-locus two-deme model. J.\ Math.\ Biol.\ 68, 1135–1198.

%\bibitem{Barton1983}
Barton, N.H. 1983. Multilocus clines. Evolution 37, 454-471.

%\bibitem{Barton1986}
Barton, N.H. 1986. The effects of linkage and density-dependent regulation on gene flow. Heredity 57, 415–426.

%\bibitem{Barton1999}
Barton, N.H. 1999. Clines in polygenic traits. Genet.\ Res.\ Camb.\ 74, 223-236.

%\bibitem{BartonShpak2000}
Barton, N.H., Shpak, M. 2000. The effect of epistasis on the structure of hybrid zones. Genet.\ Res.\ Camb.\ 75, 179-198.

%\bibitem{RB2000}
B\"urger, R. 2000. The Mathematical Theory of Selection, Recombination, and Mutation. Chichester: Wiley.

%\bibitem{RB2014}
B\"urger, R. 2014. A survey of migration-selection models in population genetics. Discrete Cont.\ Dyn.\ Syst.\ B 19, 883 - 959.

%\bibitem{BuergerAkerman2011}
B\"urger, R., Akerman, A. 2011. The effects of linkage and gene flow on local adaptation: A
two-locus continent-island model. Theor.\ Popul.\ Biol., 80, 272-288.

%\bibitem{Conley1975}
Conley, C.C. 1975. An application of Wazewski's method to a non-linear boundary value problem which arises in population genetics. J.\ Math.\ Biol.\ 2, 241-249.

%\bibitem{Endler1977}
Endler, J.A. 1977. Geographic Variation, Speciation, and Clines. Princeton Univ.\ Press, Princeton, New Jersey.

%\bibitem{Fife&Peletier1977}
Fife, P.C., Peletier, L.A. 1977. Nonlinear diffusion in population genetics. Arch.\ Rat.\ Mech.\ Anal.\ 64, 93-109.

%\bibitem{Fife&Peletier1981}
Fife, P.C., Peletier, L.A. 1981. Clines induced by variable selection and migration. Proc.\ R.\ Soc.\ Lond.\ B 214, 99-123.

%\bibitem{Fleming1975}
Fleming, W.H., 1975. A selection-migration model in population genetics. J.\ Math.\ Biol.\ 2, 219-233.

%\bibitem{GeroldingerBuerger2015}
Geroldinger, L., B\"urger, R. 2015. Clines in quantitative traits: The role of migration patterns and
selection scenarios. Theor.\ Popul.\ Biol.\ 90, 43-66.

%\bibitem{GilbargTrudinger2001}
Gilbarg, D., Trudinger, N.S. 2001. Elliptic Partial Differential Equations of Second Order. Springer, Berlin.

%\bibitem{Haldane1948}
Haldane, J.B.S. 1948. The theory of a cline. J.\ Genetics 48, 277-284.

%\bibitem{Henry1981}
Henry, D. 1981. Geometric Theory of Semilinear Parabolic Equations. In: Lecture Notes in Mathematics, vol.\ 840. Springer, Berlin.

%\bibitem{Hoffman_etal2002}
Hoffmann, A.A., Anderson A., Hallas R. 2002. Opposing clines for high and low temperature resistance in \emph{Drosophila melanogaster}. Ecology Letters 5, 614-618.

%\bibitem{Kolmogoroff etal 1937}
Kolmogoroff, A., Petrovsky, I., Piscounoff, N. 1937. \'Etude de l'\'equation de la diffusion avec croissance de la quantite de mati\'ere et son application \`a un probl\`eme biologique. Bull.\ Univ.\ Etat Moscou, Ser.\ Int., Sect.\ A, Math.\ et Mecan.\ {\bf 1}, Fasc.\ 6, 1-25.

%\bibitem{Lenormand2002}
Lenormand, T., 2002. Gene flow and the limits to natural selection. Trends Ecol.\ Evol.\ 17, 183–189.

%\bibitem{Lenormandetal1998}
Lenormand, T., Guillemaud, T., Bourguet, D., Raymond, R. 1998. Evaluating gene flow using selected markers: A case study. Genetics 149, 1383-1392.

%\bibitem{LL2011}
Liang, S., Lou, Y., 2011. On the dependence of the population size on the dispersal rate. Disc.\ Cont.\ Dynam.\ Sys.\ Series B 17, 2771-2788.

%\bibitem{Lohman2017}
Lohman B.K., Berner D., Bolnick D.I. 2017. Clines arc through multivariate morphospace. Amer.\ Nat.\ 189, 354-367. 

%\bibitem{LN2002}
Lou, Y., Nagylaki, T., 2002. A semilinear parabolic system for migration
and selection in population genetics. J.\ Differential Equations 181, 388-418.

%\bibitem{LN2004}
Lou, Y., Nagylaki, T., 2004. Evolution of a semilinear parabolic system for migration
and selection in population genetics. J.\ Differential Equations 204, 292-322.

%\bibitem{LN2006}
Lou, Y., Nagylaki, T., 2006. Evolution of a semilinear parabolic system for migration
and selection without dominance.. J.\ Differential Equations 225, 624-665.

%\bibitem{LNN2013}
Lou, Y., Nagylaki, T., Ni, W.-M. 2013. An introduction to migration-selection PDE models.
Disc.\ Cont.\ Dyn.\ Syst.\ A 33, 4349-4373.

%\bibitem{LNS2010}
Lou, Y., Ni, W.-M., Su, S. 2010. An indefinite nonlinear diffusion problem in population genetics, II: Stability and multiplicity.
Disc.\ Cont.\ Dyn.\ Syst.\ A 27, 643-655.

%\bibitem{MalletBarton1989}
Mallet, J., Barton, N. 1989. Inference from clines stabilized by frequency-dependent selection.  Genetics 122, 967-976.

%\bibitem{N1975}
Nagylaki, T., 1975. Conditions for the existence of clines. Genetics 80, 595-615.

%\bibitem{N1976}
Nagylaki, T., 1976. Clines with variable migration. Genetics 83, 867-886.

%\bibitem{N1978}
Nagylaki, T., 1978. Clines with asymmetric migration. Genetics 88, 813-827.

%\bibitem{N1989}
Nagylaki, T. 1989. The diffusion model for migration and selection. Pp.\ 55-75 
in: Hastings, A. (Ed.), Some Mathematical Questions in Biology. Lecture Notes
on Mathematics in the Life Sciences, vol. 20. American Mathematical Society,
Providence, RI.

%\bibitem{NL2008}
Nagylaki, T., Lou, Y. 2008. The dynamics of migration-selection models.
In: Friedman, A.\ (ed) Tutorials in Mathematical Biosciences IV.
Lect.\ Notes Math.\ 1922, pp.\ 119 - 172. 
Berlin Heidelberg New York: Springer.

%\bibitem{NNS2010}
Nakashima, K., Ni, W.-M., Su, S. 2010. An indefinite nonlinear diffusion 
problem in population genetics, I: Existence and limiting profiles.
Disc.\ Cont.\ Dyn.\ Syst.\ A 27, 617-641.

%\bibitem{Slatkin1973}
Slatkin, M. 1973. Gene flow and selection in a cline. Genetics 75, 773-756.

%\bibitem{Slatkin1975}
Slatkin, M. 1975. Gene flow and selection on a two-locus system. Genetics 81: 787–802.

}

\end{document}